\newtheorem{thm}{Theorem}[section]
\newtheorem{lemma1}{Theorem}[section]
\newtheorem{lemma}[lemma1]{Lemma}
\newtheorem{definition}{Definition}
\newtheorem{assumption}{Assumption}
\begin{document}
%
\title{Optimal Resource Allocation Over Time and Degree Classes for Maximizing Information Dissemination in Social Networks}

\author{Kundan Kandhway and Joy Kuri
\thanks{This is the complete version of the paper including supplementary material.}%
\thanks{Authors are with the department of Electronic Systems Engineering, Indian Institute of Science, Bangalore 560012, India. Email: \{kundan, kuri\}@dese.iisc.ernet.in}
}

\maketitle

\thispagestyle{fancy}

\begin{abstract}
We study the optimal control problem of allocating campaigning resources over the campaign duration and degree classes in a social network. Information diffusion is modeled as a Susceptible-Infected epidemic and direct recruitment of susceptible nodes to the infected (informed) class is used as a strategy to accelerate the spread of information. We formulate an optimal control problem for optimizing a net reward function, a linear combination of the reward due to information spread and cost due to application of controls. The time varying resource allocation and seeds for the epidemic are jointly optimized. A problem variation includes a fixed budget constraint. We prove the existence of a solution for the optimal control problem, provide conditions for uniqueness of the solution, and prove some structural results for the controls (e.g. controls are non-increasing functions of time). The solution technique uses Pontryagin's Maximum Principle and the forward-backward sweep algorithm (and its modifications) for numerical computations. Our formulations lead to large optimality systems with up to about 200 differential equations and allow us to study the effect of network topology (Erd\H os-R\'enyi/scale-free) on the controls. Results reveal that the allocation of campaigning resources to various degree classes depends not only on the network topology but also on system parameters such as cost/abundance of resources. The optimal strategies lead to significant gains over heuristic strategies for various model parameters. Our modeling approach assumes uncorrelated network, however, we find the approach useful for real networks as well. This work is useful in product advertising, political and crowdfunding campaigns in social networks.
\end{abstract}

\begin{IEEEkeywords}
Erd\H os-R\' enyi networks, Information epidemics, Optimal control, Pontryagin’s Maximum Principle, Scale free networks, Social networks, Susceptible-Infected.
\end{IEEEkeywords}

\IEEEpeerreviewmaketitle

\section{Introduction}

\IEEEPARstart{M}{aximizing} the reach of a piece of information is of interest to many entities, such as, political parties (during elections), companies marketing new products, governments and NGOs (to spread awareness about some socially relevant issue), etc. Before the advent of online social networks, information used to disseminate in a population (only) due to social contacts between individuals in day-to-day life and through the mass media. With the explosion in the number of people using online social networks these days---which has added to the available channels over which information travels---the extent and speed of information propagation is higher than ever. Campaigners are leveraging these social networks in an attempt to maximize the reach of their messages, as illustrated by the successful use of social networks such as Twitter and Facebook in the 2012 US Presidential elections \cite{rutledge2013obama} and 2014 Indian Parliamentary elections \cite{theeconomictimes2014facebook}.

This paper studies campaigning strategies aimed at maximizing the spread of information in a population for a fixed-duration political, advertisement or social-awareness campaigns. Specifically, the campaigning resource---such as money, manpower or logistics---is allocated optimally over time (the campaign duration), and classes of individuals carved up based on their degrees. The degree of an individual is the number of connections she has with others in the network. We use only node degree information, which makes this work useful for both partially observed networks---where the exact connection pattern is not completely known, but only node degree has been estimated---an example of which is the face-to-face human interaction network; and fully observed online social networks; or a combination of both.

Biological epidemic models are used to model information spread in a population due to similarities in the ways a communicable disease and information spread over a network (\emph{e.g.} \cite{karnik2012, kandhway2014run}). We have modeled information diffusion as a Susceptible-Infected (SI) epidemic process. Earlier works (such as \cite{banerjee2014epidemic}) have favored the SI process as a model for information diffusion. The SI model divides the population into two compartments. A susceptible node is yet to receive the message and an infected (informed) node has received and is spreading the message. Most nodes are susceptible at the beginning of the campaign---except a small fraction of the population which acts as the seed for the epidemic---and changes to the infected state due to interaction with infected neighbor(s). Once the node becomes infected, it stays in that state. The SI process is suitable for modeling situations when individuals receiving the message do not forget it. This happens for campaigns of short duration (\emph{e.g.} those for charities) or, situations such as political campaigns which generate a lot of interest among people. It is also suitable for marketing of long lasting products such as video games and smart phones.

Other models which may be used, depending on the situation, are: (a) Susceptible-Infected-Susceptible, in which nodes alternate between susceptible and infected state, suitable for marketing consumable goods with substitutable brands, (b) Susceptible-Infected-Recovered, in which nodes lose interest in spreading after fixed amount of time since being infected, (c) Maki-Thompson model, in which nodes lose interest in spreading after meeting a fixed number of informed individuals, etc.

We adapt the standard SI model to include \emph{time varying direct recruitment of nodes from susceptible to infected class} to accelerate information diffusion. Similar problems were addressed in \cite{dayama2012optimal,banerjee2014epidemic}. Such a control may be implemented by emailing or texting individuals, posting messages on their social network timelines, or by placing advertisements in the mass media. Resource and manpower constraints will prevent the advertiser/campaigner from communicating to all individuals in the network. Hence, to maximize the extent of information dissemination, it is important to identify types of individuals who should be targeted and times when campaigning should run with stronger intensity. For example, should we target high degree nodes because they act as hubs and may be better spreaders in the network? Or should we target low degree nodes who are at a disadvantage in receiving the messages from others by epidemic spreading due to fewer connections, and let more connected higher degree nodes receive the message from epidemic spreading. As shown by our results, the network topology and the amount of resource available and/or the cost of application of the control affect the answer to this question.

\subsection{Related Work and Our Contribution}

Many authors have addressed disease mitigation problems on heterogeneous networks \cite{borgs2010distribute, preciado2013optimal}; however, optimal control of heterogeneous networks has received less attention. Optimal control has found use in devising strategies for preventing the spread of disease epidemics and computer viruses in works such as \cite{asano2008optimal, zhu2012optimal}. Our work differentiates itself from these by considering a \emph{population of networked individuals instead of homogeneous mixing.} Considering networked individuals and many controls, as is the case here, leads to a huge optimality system with about 200 differential equations in our case, compared to only a few equations in the above works. Also we discuss the case of \emph{fixed budget constraint}, which necessitates modifications to the standard forward-backward sweep algorithm needed to solve the optimal control problem. Networks play an important role in epidemic spreading (of both information or disease) because people interact with and trust only a small subset of the total population to which they are `connected'. Homogeneous mixing assumes that any node is equally likely to meet any other node in the population and is quite farfetched.

Our method captures information dissemination dynamics more accurately than the homogeneous mixing models in the prior works which study optimal control of information epidemics \cite{karnik2012, kandhway2014run, kandhway2014optimal} (and security patches in the case of \cite{khouzani2011optimal}) and provide more accurate controls (campaigning strategies). Also, we are able to discuss the \emph{influence of node degrees on resource allocation}, which is not possible when homogeneous mixing is considered.

Information dissemination with impulsive controls in a homogeneously mixed population was considered in \cite{belen2008}. However, most systems can be controlled throughout the campaign horizon---\emph{e.g.} advertisements appear regularly in an individual's social network timeline or in the mass media---and not just once or twice. This motivates a model which allows for resource allocation \emph{throughout} the campaign horizon, as is the case in our work, and not just at a few time instants as in \cite{belen2008}.

The work in \cite{dayama2012optimal} has considered heterogeneous mixing of the population in devising optimal strategies for product marketing, but the results were presented by dividing the population into two degree classes only; this is inadequate to study the \emph{effect of node inter-connection topologies (scale-free/Erd\H os-R\'enyi degree distribution) on controls}. To achieve this, we consider a population with up to 100 degree classes. Note that real social networks follow a scale-free degree distribution \cite{newman2009networks} and two classes are not adequate to characterize them. The authors in \cite{youssef2013mitigation} formulated a problem to mitigate a biological epidemic on a network. Our approach uses one differential equation for each degree class as against one for each node, as was the case in \cite{youssef2013mitigation}, and thus scales with number of degree classes in the network and not the actual population size as in \cite{youssef2013mitigation}. Also, that work presented optimal results for only a simple case of a five-node network and proposed heuristics for larger networks.

Notice that epidemic models used in both \cite{dayama2012optimal, youssef2013mitigation} work well on uncorrelated networks, which is also the case in this work. Thus, our modeling approach is no worse than in these works and in addition, we are able to study the effect of node inter-connection topology on time-varying controls due to the large system size considered in this work. In uncorrelated networks, connections are constrained by degree distribution of the network, but are otherwise completely random. In contrast, real social networks show some level of connection correlations, \emph{e.g.}, friends of an individual are also likely to be friends. Unlike the previous works, we have tested this modeling approach on a real network via simulations for both uncontrolled and controlled cases. The results show that the control strategies derived from our model lead to improvements with respect to heuristic strategies on real networks as well.

Optimal seed selection for maximizing the influence in a network with known connections was studied in \cite{kemp2003maximizing}. However, once the seed is decided, the process evolves in an uncontrolled manner. In contrast, we allow for the information diffusion process to be controlled throughout the campaign horizon and present results for a problem which allows for \emph{joint seed selection and resource allocation over time} to maximize information diffusion.

The works in \cite{banerjee2014epidemic, banerjee2014epidemicthreshold} compute bounds on the spreading times and epidemic thresholds for SI/SIS epidemic influenced by external agents using tools from the probability and graph theories. Although these works show order optimality of the uniform spreading  strategy with respect to strategies which can be tailored to network state for specific networks (such as line/ring, grid and spatially constrained random geometric graphs), computation of optimal strategies under a cost criterion was not undertaken in these works. These networks do not have heavy tails and long distance links as observed in (small world) real social networks to which such conclusions may not be generalized. In contrast, we compute optimal strategies for \textit{scale-free and real networks} which provide more accurate insights to campaigners. If a cost criterion is included in problems formulated with exact network state, the complexity of computing the solution will increase exponentially with network size. A mean field approach for modeling the SI epidemic provides computational tractability for handling large networks---because computational complexity grows only as the number of degrees in the network and not the network size.

The authors in \cite{sheldon2010maximizing} formulate a mixed integer program to maximize the spread of cascades in a network. However, the intervention involves adding new nodes and edges in the network: an approach different from ours.

We list the \textbf{\emph{main contributions of this paper}}. We begin by adapting the standard SI model to include a time-varying control which recruits individuals from susceptible to infected class to accelerate information dissemination. We define a net reward function which is a weighted combination of reward due to the extent of information dissemination in the population and cost due to application of control, and formulate a problem to maximize the net reward. The problem jointly optimizes the seeds of the epidemic and time varying controls to maximize the reward. We also study the fixed budget variation of this problem (where seeds are given and not optimization variables). To the best of our knowledge, the joint optimization problem formulation does not exist in the literature.

We show the existence of a solution to our problem using Cesari's theorem and provide some structural results for the shapes of the controls. These results seem novel in the context of a networked population and a controlled SI model. Further, we provide a sufficient condition under which the solution to the optimal control problem is unique. To solve the optimal control and joint seed optimization and control problems, we propose numerical algorithms which make use of Pontryagin's Maximum Principle. The standard forward-backward sweep algorithm needs to be adapted to take care of the specific formulations in this paper (\emph{e.g.} fixed budget constraint). We study the convergence of the forward-backward sweep algorithm for our system. Our formulation requires solving a large number of control functions ($\approx 100$) and leads to large dynamical systems with a similar number of differential equations. For the joint (seed-control) problem, the same number of optimization variables also need to be optimized in addition to finding the optimal control functions.

We quantify the improvement achieved by the optimal strategies over simple heuristics for scale-free and Erd\H os-R\'enyi configuration model networks. In many cases, large improvements are observed. Results also reveal that the resource allocated to the degree classes under the optimal strategy changes with system parameters and network topology. For example, in the case of scale-free networks, when less resource is available, the optimal strategy allocates more resource to high degree classes but for the abundant resource case, low degrees get more resources than medium degrees. In the case of Erd\H os-R\'enyi networks, even for the scarce resource case, medium degrees get the most allocation.

Our information diffusion model assumes that network connections are uncorrelated; thus, may not be accurate for real social network. We test the accuracy of our modeling approach on a real social network via simulations and find the model to work well even for real social networks.

\section{System Model and Problem Formulation}
\label{sec:sys_model_prob_formaulation}

Individuals in the population are organized in a social network (graph). The network is undirected and remains static over the duration of the campaign. The number of other nodes/individuals a given node is connected to is termed as the degree of that node in the graph. A node with degree $k$ is said to be in degree class $k$. The set of all degree classes in the graph under consideration is represented by $\mathbb K=\{k:K_{min}\leq k\leq K_{max}\}$, for two positive integers $K_{min}$ and $K_{max}$. The network is characterized by its `degree distribution', $p_k$, which is the probability that a randomly chosen node in the social network belongs to degree class $k\in\mathbb K$. One can empirically calculate $p_k=N_k/N$, where $N_k$ is the number of nodes in degree class $k$ and $N=\sum_{k\in\mathbb K}N_k$. We first explain the uncontrolled model and then adapt it to include the controls to formulate the optimal control problem.

\subsection{Uncontrolled SI Epidemic on a Network}
We model the SI process using the `degree based compartmental model' \cite{may1988transmission}. It works best on networks which lack any correlations in the degrees of two neighbors. More precisely, a half edge (when an edge is cut, it leads to one half edge each at the two neighboring nodes the edge connected) from any node is equally likely to connect to any other half edge in the network. Such networks are called `configuration model networks'.

The degree based compartmental model assumes that all nodes in degree class $k$ have the same statistical behavior in the network. That is, any node with degree $k$ in the network has the \emph{same} probability of being in infected (or informed) state at any time $t$. In reality, a node in a dense core is more likely to be infected than a node at the periphery of the network; however, if the variance of the probability distribution of being in the infected state is low, this approximation works well. This happens for configuration model networks in the limit of large network size, $N\rightarrow\infty$. Thus the degree based compartmental model may be termed as a `mean field model'.

Nodes in the network lie in either of the two states---susceptible or infected (informed). An infected individual is aware of the message and is spreading it to her susceptible neighbors who are yet to receive the message. The campaign runs during $t\in[0,T]$, where $T$ is termed as the campaign deadline. At any time instant $t$, the fractions of susceptible and infected nodes in the degree class $k$ are denoted by $s_k(t)$ and $i_k(t)$. Note that $s_k(t)=1-i_k(t)$ is not an independent state variable of the system. The total fractions of susceptible and infected individuals in the network at time $t$ is given by $s(t)=\sum_{k\in\mathbb K}p_ks_k(t)$ and $i(t)=\sum_{k\in\mathbb K}p_ki_k(t)$. Again, $s(t)=1-i(t)$.

The information epidemic is characterized by its spreading rate profile $\beta(t)\geq 0,~t\in[0,T]$. We have allowed the spreading rate to vary over time because the \emph{interest of the target population in the subject of the campaign may vary with time.} For example, one can observe monotonically increasing interest as the election day approaches, monotonically decreasing interest in a product as it becomes old after its release or fluctuating interest in movie tickets where demand is more during weekends and less during weekdays. A single susceptible-infected contact passes the information from the infected to the susceptible node with a probability $\beta(t)dt$ at time $t$, in a small interval $dt$. The epidemic starts with a small fraction of infected nodes at $t=0$ (also called the seed nodes) and spreads stochastically in the social network. We assume that $i_k(0)=i_{0k},~\forall k\in\mathbb K$ act as seeds in degree class $k$, $0\leq i_{0k} \leq 1$.

We now discuss briefly the notions of `neighbor degree distribution' and `excess degree distribution' which will be used later in this section. The neighbor degree distribution, $r_k$, is the probability that we will reach a neighboring node of degree $k$ by following an edge of any node in the network. For configuration model networks, $r_k=kp_k/\bar k$, where $\bar k\triangleq \sum_{k\in\mathbb K}kp_k$ is the mean degree of the graph \cite[Sec. 17.10.2]{newman2009networks}. It is biased towards higher degrees. Such behavior is expected because high degree nodes are more connected and will be reached more often (\emph{e.g.} there is no way to reach nodes with degree 0 by following an edge, so $r_0=0$, even if $p_0\neq 0$).

Denote by $M_h$ (even), the total number of half edges in a network. The number of nodes with degree $k$ in the network $=Np_k$, and number of half edges originating at them $=kNp_k$. Consider a half edge at any given node in the network. The probability that it will be connected to a neighbor with degree $k$ is $r_k=kNp_k/(M_h-1)\approx kNp_k/M_h$ (for large networks). But $M_h/N=$ mean degree in the network; so $r_k=kp_k/\bar k$.

For epidemic spreading, the quantity of interest is excess degree distribution, $q_k\triangleq r_{k+1}=(k+1)p_{k+1}/\bar k$ \cite[Sec. 17.10.2]{newman2009networks}. Consider a susceptible node $A$ in the network. The neighbors of susceptible node $A$, if infected, could not have got the information from $A$. So we discount the edges from this susceptible node $A$ to its neighbors and the neighbors behave like nodes with degree $1$ less than their actual degrees.

The message can be passed to a given susceptible node of degree $k$ from its infected neighbors, whose (mean) number is given by $k\sum_{l\in\mathbb K}(q_li_l(t))$, where $q_k$ is the excess degree distribution discussed above. Assuming neighbors interact independently, the probability that the message is transferred to this susceptible node in an interval $dt$ at time $t$ = $1-$probability that none of the infected neighbors infects her = $1-(1-\beta(t)dt)^{k\sum_{l\in\mathbb K}(q_li_l(t))} \approx \beta(t)k\sum_{l\in\mathbb K}(q_li_l(t))dt$. The fraction of susceptible nodes in degree class $k$ is $s_k(t)$. Hence, the total increase in the fraction of infected nodes in degree class $k$ in an interval $dt$ at time $t$ is given by $s_k(t)\beta(t)k\sum_{l\in\mathbb K}(q_li_l(t))dt$. This leads to the rate of change of the fraction of infected individuals in degree class $k$ in an uncontrolled SI epidemic \cite[Sec. 17.10.2]{newman2009networks}:
\begin{align}
\frac{d}{dt} i_k(t) = \beta(t)ks_k(t)\sum_{l\in\mathbb K}(q_li_l(t)),~k\in\mathbb K. \label{eq:uncontrolled_SI}
\end{align}

\subsection{Controlled SI Epidemic on a Network}

To aid information dissemination, we introduce a control signal $u_k(t)$ in each degree class $k\in\mathbb K$. The control recruits susceptible individuals in the respective degree class and converts them into infected nodes in degree class $k$. In political/product marketing campaigns, $u_k(t)$ represents the rate at which attempts are made to recruit individuals in degree class $k$. This can be done by posting messages in their Facebook/Twitter time-line for example, and requesting them to re-post/re-tweet the message to their contacts. If the targeted node was not aware of the message, we achieve a recruitment. In the case of a company launching a new medicine, $u_k(t)$ may represent the rate at which medical representatives visit doctors at time $t$. We define the set of all admissible control functions in the following:

\begin{definition}[Set of all admissible controls]
\label{def:set_of_admissible_controls}
Define, $U \triangleq \{u:u \textnormal{ is Lebesgue measurable on } t\in [0,T]\textnormal{ and } u(t)\in \mathbb R\}$. Then, the set of all admissible control functions is given by: $U^{|\mathbb K|}=\Big\{\boldsymbol u=\{u_k,~k\in\mathbb K\}:u_k\in U\Big\}.$
\end{definition}

In the following, we formulate the joint optimization--optimal control problem and then discuss the formulation.
\begin{subequations}
\label{eq:opt_prob}
\begin{align}
\underset{ \begin{smallmatrix} \boldsymbol u, \big\{ \boldsymbol{i_0}:~0\leq i_{0k}\leq 1, \\ \sum\limits_{k\in\mathbb K} p_ki_{0k}=B_{i_0} \big\} \end{smallmatrix} }{\text{maximize}} J & = \sum_{k\in \mathbb K} p_k i_k(T)-\int_0^T\sum_{k\in \mathbb K}g_k(u_k(t))dt, \label{eq:cost_funtion}\\
\text{s.t.:~~} \frac{d}{dt} i_k(t) & =  \beta(t) ks_k(t) \sum_{l\in \mathbb K} \left(q_{l} i_l(t)\right) + \gamma(t)u_k(t)s_k(t);\nonumber \\
& ~~~~~~~~~~~~~~~~~~~~~~~~~~~~~~~~~k\in \mathbb K. \label{eq:opt_prob_states} \\
i_k(0) & =  i_{0k}; ~~k\in \mathbb K. \label{eq:opt_prob_init_cond}
\end{align}
\end{subequations}
In the above formulation, $s_k(t)=1-i_k(t)$, are not independent variables. Hence Problem (\ref{eq:opt_prob}) has only $|\mathbb K|$ state variables, which are represented by the $i_k$'s (and not 2$|\mathbb K|$). In many scenarios it may be possible to decide the initial set of infected nodes (\emph{e.g.} brand ambassadors recruited by companies) in addition to deciding resource allocation over time. The seed which kick starts the epidemic is given by the vector $\boldsymbol{i_{0}}=\{ i_{0k},~k\in\mathbb K \}$, where $0\leq i_{0k}\leq 1$ is the fraction of individuals selected as seed in the degree class $k$. Here $B_{i_0}$ is the `seed budget', the initial fraction of infected nodes in the whole network.

The evolution of the state variables $i_k(t)$ in the controlled system is governed by Eq. (\ref{eq:opt_prob_states}), with initial condition (\ref{eq:opt_prob_init_cond}). Notice the additional term $\gamma(t)u_k(t)s_k(t)$ in Eq. (\ref{eq:opt_prob_states}) compared to Eq. (\ref{eq:uncontrolled_SI}). The term incorporates effect of control in the $k$th degree class. Here $s_k(t)$ is multiplied by $\gamma(t)u_k(t)$ because the control is only effective on the susceptible individuals. Here $\gamma(t)\geq 0$ captures the effectiveness of the control $u_k(t)$ at time $t$. Since the interest of the population in the subject of the campaign varies with time, recruitment should be easier when the interest is more and vice versa.

The reward function is given in Eq. (\ref{eq:cost_funtion}). We have used a weighted combination of reward due to information dissemination (given by $\sum_{k\in\mathbb K}p_ki_k(T)$) and cost due to application of controls (given by $\int_0^T\sum_{k\in \mathbb K}g_k(u_k(t))dt$) as the net reward function (note that constant weights are subsumed by $g_k(.)$). We consider the instantaneous cost of application of control as a function of $\boldsymbol u(t)$, and have not multiplied it with $s_k(t)$ or $i_k(t)$. This is so because in most practical situations, the state of the node (susceptible/infected) is either unknown or there is a cost involved in determining it (\emph{e.g.}, we do not know if an individual already knows about a newly launched product). Hence, campaigning either does not consider the states of the specific nodes (information is passed to both classes), or the cost incurred is similar irrespective of node state (in infecting or determining state of the node, for example, by using text analytics on the time-line posts of the node). Similar assumptions were made in prior works such as \cite{karnik2012,kandhway2014run} (but for a homogeneously mixed population).

Also, we have used a reward function which only considers the final fraction of infected individuals at $t=T$, and not the system evolution over $0\leq t<T$. This is suitable for situations such as political campaigns, marketing of durable products/services (\emph{e.g.} video-games, cell-phone plans) etc., where the final number (fraction) of the infected population is the quantity of interest.

In the special case of Problem (\ref{eq:opt_prob}) where $\boldsymbol{i_0}$ is not an optimization variable but rather a given vector, we get an optimal control problem where only the control vector function $\boldsymbol{u}$ is to be optimized.

\begin{assumption}
\label{assumption:gk_increasing}
We assume all the cost functions, $g_k(u_k(t)),~k\in\mathbb K$, to be non-negative, monotonically increasing and strictly convex in their arguments in the region $u_k(t)\geq 0$. Further, $g_k(0)=0,~k\in\mathbb K$.\footnote{This is natural because cost will increase with the control strength in any practical situation. Also, the convexity assumption holds in many economic applications.}
\end{assumption}

For the situation we consider---maximizing the reach of useful information---the controls are non-negative. For some arbitrary cost function this may not be true. However, the following assumption on the cost function $g_k(u_k(t)),~k\in\mathbb K$, ensures that negative values of controls are not optimal.
\begin{assumption}
\label{assumption:gk_even}
We assume all the cost functions $g_k(u_k(t)),~k\in\mathbb K$, to be even functions.\footnote{This ensures that negative values of controls are not optimal. A negative control at time $t$ will incur the same cost as its modulus. However, a negative value reduces the value of reward $J$ in (\ref{eq:cost_funtion}), as it would take away individuals from the infected class (Eq. (\ref{eq:opt_prob_states})). Instead, if $|u_k(t)|$ is applied, the cost incurred is the same and the value of the reward is more than that in the case when control is negative. Hence for any time instant, negative values of controls will never be optimal and hence we do not need to add the additional constraint $u_k(t)\geq 0$ in Problem (\ref{eq:opt_prob}).}
\end{assumption}

The above assumption simplifies further analysis and numerical computation of controls. It is not restrictive in practical scenarios, because controls are never negative, and we can take even extensions of cost functions defined for $u_k(t)\geq 0$ as $g_k(.)$. Also notice that in Problem (\ref{eq:opt_prob}), we have not explicitly enforced the conditions that the fractions $i_k(t)$, $s_k(t)$ lie in [0,1]. This is due to the following:

\begin{lemma}
\label{thm:ik_sk_in_0_1}
Let $\psi_k(t)$ and $\eta_k(t)=1-\psi_k(t)$ be the solutions to the system of differential equations (\ref{eq:opt_prob_states}) (corresponding to variables $i_k(t)$ and $s_k(t)=1-i_k(t)$) with initial conditions $\psi_k(0)=i_{0k}$. If $i_{0k}\in(0,1]$ then $\psi_k(t),\eta_k(t)$ lie in $[0,1]$ at all times $t\in[0,\infty),~k\in\mathbb K$.\footnote{\emph{Proof:} At any interior point $\{(\psi_k(t),~k\in\mathbb K):0<\psi_k(t)<1,~\forall k\}$, $\frac{d}{dt} {\psi}_k(t)$ is positive $\forall k$ (from Eq. (\ref{eq:opt_prob_states})), hence $\psi_k(t)$ is increasing. However, at any boundary segment $\{(\psi_k(t),~k\in\mathbb K):0<\psi_k(t)\leq 1,~k\in\mathbb K-\{j\} \text{ and } \psi_j(t)=1\}$, for an arbitrary $j\in\mathbb K$,~~$\frac{d}{dt} {\psi}_j(t)=0$ (from Eq. (\ref{eq:opt_prob_states}), as $\eta_j(t)=0$). Once the value of any state variable reaches 1, it stays there. Hence, the solutions $\psi_k(t),~k\in\mathbb K$ always lie in $[0,1]$.}
\end{lemma}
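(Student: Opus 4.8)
My plan is to prove the stronger statement that the closed cube $[0,1]^{|\mathbb K|}$ is forward invariant for the coupled system (\ref{eq:opt_prob_states}), treating the upper and lower faces separately, and to extract the bounds from an explicit integral representation rather than from the sign of the derivative alone. The starting observation is that the right-hand side of (\ref{eq:opt_prob_states}) is \emph{linear} in the single variable $i_k$ once the remaining trajectories are frozen: writing $s_k=1-\psi_k=\eta_k$ and collecting terms, the $k$th equation reads $\dot\psi_k(t)=(1-\psi_k(t))\,F_k(t)$, where $F_k(t)\triangleq\beta(t)k\sum_{l\in\mathbb K}q_l\psi_l(t)+\gamma(t)u_k(t)$ is regarded as a known (Lebesgue measurable, locally integrable) time function along a given solution. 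This is the Carath\'eodory form in which the measurable controls of Definition \ref{def:set_of_admissible_controls} are handled, so the statements below hold for a.e. $t$ and the integral identities are the rigorous objects.

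First I would settle the upper bound, which turns out to be unconditional. Solving the scalar linear ODE $\dot\psi_k+F_k\psi_k=F_k$ with integrating factor $e^{G_k(t)}$, $G_k(t)\triangleq\int_0^tF_k(\tau)\,d\tau$, and using $\tfrac{d}{dt}e^{G_k}=F_ke^{G_k}$, gives the closed form
\[
\psi_k(t)=1-(1-i_{0k})\,e^{-G_k(t)},\qquad \eta_k(t)=(1-i_{0k})\,e^{-G_k(t)}.
\]
Because $i_{0k}\le 1$ and $e^{-G_k(t)}>0$ for every finite $G_k(t)$, this yields $\eta_k(t)\ge 0$, i.e. $\psi_k(t)\le 1$, with \emph{no} assumption on the sign of $F_k$; in particular the trajectory can only ever leave $[0,1]$ through a lower face $\psi_k=0$. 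This is exactly the mechanism behind the footnote's remark that $\dot\psi_j=0$ when $\eta_j=0$: the factor $\eta_j$ clamps the state at $1$.

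Next I would handle the lower bound, where the genuine difficulty lies: the representation is only useful once I know the sign of $G_k$, and $F_k\ge 0$ in turn presupposes $\psi_l\ge 0$, so the reasoning is circular. I would break the circularity with a first-exit-time device. Let $t^\star$ be the first instant at which $\min_{k\in\mathbb K}\psi_k(t)=0$, which exists by continuity whenever some $\psi_k$ goes negative; on $[0,t^\star)$ every $\psi_l\in[0,1]$, so (using $u_k(t)\ge 0$ a.e., the regime of interest by Assumption \ref{assumption:gk_even}, together with $\beta,\gamma\ge 0$) we get $F_k\ge 0$ there, hence $G_k(t^\star)\ge 0$ and therefore $\psi_k(t^\star)=1-(1-i_{0k})e^{-G_k(t^\star)}\ge i_{0k}>0$ for every $k$. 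This contradicts $\min_k\psi_k(t^\star)=0$, so no exit occurs and $\psi_k(t)\ge i_{0k}>0$ for all $t$; the same inequality shows $G_k$ is non-decreasing, so $\psi_k$ is non-decreasing, recovering the ``increasing'' claim of the footnote. Finally, since the a priori bounds keep every state in $[0,1]$ and the vector field is polynomial in $\psi$ with integrable time dependence, the local Carath\'eodory solution extends to all of $[0,\infty)$, giving the global conclusion.

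The step I expect to be the main obstacle is precisely this coupling: the naive ``derivative is positive in the interior'' reasoning silently assumes both that the coupling term $\sum_l q_l\psi_l$ stays non-negative (the very invariance being proved) and that the controls are non-negative. Making these explicit---via the exit-time argument for the former and by flagging $u_k\ge 0$ (justified by Assumption \ref{assumption:gk_even}) for the latter---is the part that needs care; once the box is known invariant, monotonicity and global existence follow for free. An alternative packaging I would mention is Nagumo's subtangentiality condition: on the face $\psi_j=0$ one has $\dot\psi_j=\beta j\sum_l q_l\psi_l+\gamma u_j\ge 0$ (inward-pointing) and on $\psi_j=1$ one has $\dot\psi_j=0$, which is the conceptual reason invariance holds; the explicit integral above is simply the self-contained execution of that principle.
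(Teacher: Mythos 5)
Your proof is correct, but it takes a genuinely different route from the paper's. The paper's footnote argues qualitatively on the vector field: at interior points $\frac{d}{dt}\psi_k(t)$ is positive (so trajectories increase and cannot fall toward the lower faces), and on the face $\psi_j=1$ the factor $\eta_j=0$ forces $\frac{d}{dt}\psi_j(t)=0$, so a state absorbed at $1$ stays there---essentially the Nagumo-style subtangentiality reasoning you relegate to your closing remark. You instead freeze the coupling into $F_k(t)$, solve the resulting scalar linear ODE exactly to obtain $\eta_k(t)=(1-i_{0k})e^{-G_k(t)}$, and close the lower bound with a first-exit-time contradiction. Your packaging buys several things the paper's sketch does not deliver: (i) it breaks the circularity that the sign of $\dot\psi_k$ presupposes $\sum_l q_l\psi_l\geq 0$, which is part of what is being proved---the exit-time device is exactly the standard rigorous completion of the paper's ``increasing in the interior'' heuristic; (ii) it makes explicit the hidden hypothesis $u_k(t)\geq 0$ (justified via Assumption \ref{assumption:gk_even}), without which the lemma is in fact false for arbitrary admissible controls in $U^{|\mathbb K|}$, since a sufficiently negative control drives $\psi_k$ below $0$; (iii) it yields the stronger quantitative conclusion $i_{0k}\leq\psi_k(t)\leq 1$ together with an exact decay formula for $\eta_k$, and shows the upper bound is unconditional (exit is only ever possible through a lower face); and (iv) it treats measurable controls correctly in the Carath\'eodory sense, where derivative statements hold only a.e., and argues rather than asserts the extension to $[0,\infty)$. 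One small point you should state explicitly: the representation requires $G_k(t)$ to be finite, i.e. $u_k$ locally integrable, which Definition \ref{def:set_of_admissible_controls} alone does not guarantee (measurable does not imply locally integrable); this holds whenever the cost in (\ref{eq:cost_funtion}) is finite, since $g_k(u_k)\in L^1[0,T]$ with quadratic $g_k$ gives $u_k\in L^2[0,T]\subset L^1[0,T]$, and it is in any case needed for the Carath\'eodory solution concept to make sense.
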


\section{Existence of a Solution}
\label{sec:existence}

It is important to prove the existence of a solution in optimal control problems before attempting to solve them. Even simple looking problems sometimes do not have a solution (examples can be found in \cite[Ch. 3]{fleming1975deterministic}). Existence of a solution to Problem (\ref{eq:opt_prob}) is proved in the following:

\begin{thm}
\label{thm:soln_exist}
There exists a solution $\boldsymbol u^*\in U^{|\mathbb K|}$, $\boldsymbol{i_0^*}$ and the corresponding solution $\boldsymbol i^*(t)$ to the initial value problem (\ref{eq:opt_prob_states}), (\ref{eq:opt_prob_init_cond}) so that $(\boldsymbol u^*, \boldsymbol i_0^*) \in \underset{(\boldsymbol u, \boldsymbol i_0)} {\arg \max}~J(\boldsymbol u, \boldsymbol i_0)$ in the optimal control problem (\ref{eq:opt_prob}).
\end{thm}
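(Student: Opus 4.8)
The plan is to verify the hypotheses of the Filippov--Cesari existence theorem (as developed in \cite[Ch.~III]{fleming1975deterministic}) for the problem written in standard Bolza form. First I would recast the maximization of (\ref{eq:cost_funtion}) as the minimization of the functional $-\sum_{k}p_k i_k(T)+\int_0^T\sum_{k\in\mathbb K}g_k(u_k(t))\,dt$ over admissible pairs $(\boldsymbol u,\boldsymbol i_0)$, and record that the admissible set is nonempty: $\boldsymbol u\equiv\boldsymbol 0$ together with any seed $\boldsymbol i_0$ satisfying $0\le i_{0k}\le 1$ and $\sum_k p_k i_{0k}=B_{i_0}$ is admissible, and since the right-hand side of (\ref{eq:opt_prob_states}) is polynomial in $\boldsymbol i$ (hence locally Lipschitz), the initial value problem (\ref{eq:opt_prob_states})--(\ref{eq:opt_prob_init_cond}) has a unique absolutely continuous solution for every admissible control.

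Next I would establish the a priori bounds needed for compactness. By Lemma~\ref{thm:ik_sk_in_0_1} every trajectory remains in the compact cube $[0,1]^{|\mathbb K|}$, so the states are uniformly bounded; moreover the seed constraint set $\{\boldsymbol i_0:0\le i_{0k}\le1,\ \sum_k p_k i_{0k}=B_{i_0}\}$ is closed and bounded in $\mathbb R^{|\mathbb K|}$, hence compact. On this bounded state region the vector field $f_k(t,\boldsymbol i,\boldsymbol u)=\beta(t)k(1-i_k)\sum_{l}q_l i_l+\gamma(t)u_k(1-i_k)$ is continuous and obeys a linear growth estimate of the form $|f_k|\le C(1+|\boldsymbol u|)$, as required by the theorem.

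The decisive structural observation is that the dynamics (\ref{eq:opt_prob_states}) are \emph{affine} in the control $\boldsymbol u$, while the running cost $\sum_k g_k(u_k)$ is convex in $\boldsymbol u$ by Assumption~\ref{assumption:gk_increasing}. Consequently, for each fixed $(t,\boldsymbol i)$ the extended velocity set $\{(\eta,f(t,\boldsymbol i,\boldsymbol u)):\eta\ge\sum_k g_k(u_k),\ \boldsymbol u\in\mathbb R^{|\mathbb K|}\}$ is convex and closed, which is exactly Cesari's convexity hypothesis. To prevent minimizing sequences of controls from escaping to infinity I would exploit the a priori cost bound: since the zero control already yields $J\ge 0$ and the terminal reward never exceeds $1$, any control at least as good as $\boldsymbol u\equiv\boldsymbol0$ must satisfy $\int_0^T\sum_k g_k(u_k(t))\,dt\le 1$; assuming the $g_k$ grow superlinearly (which holds for the convex cost functions of interest and upgrades the strict convexity of Assumption~\ref{assumption:gk_increasing}), the de la Vall\'ee-Poussin criterion turns this bound into uniform integrability, hence weak-$L^1$ relative compactness of the control sequences, and the convexity just established gives weak lower semicontinuity of the cost along the limit.

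Finally I would assemble the pieces: for each fixed seed $\boldsymbol i_0$ in the compact constraint set the above yields an optimal control, and continuous dependence of the terminal state $\boldsymbol i(T)$ on $\boldsymbol i_0$ makes the optimal value upper semicontinuous in $\boldsymbol i_0$, so the joint maximum over the compact seed set is attained, delivering $(\boldsymbol u^*,\boldsymbol i_0^*)$ and the associated $\boldsymbol i^*(t)$. The step I expect to be the main obstacle is precisely the coercivity argument, because the control set is all of $\mathbb R^{|\mathbb K|}$ rather than a compact box, so convexity of the velocity set alone does not rule out controls that blow up; this is where the precise growth of $g_k$ (beyond mere strict convexity) must be used. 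A robust fallback, should one wish to avoid an explicit growth hypothesis, is to note that $s_k\le1$ and $i_k$ cannot exceed $1$, so by Assumption~\ref{assumption:gk_even} controls beyond a finite threshold are strictly wasteful; this permits restricting the problem without loss to a compact control box, after which the compact-control form of the Filippov--Cesari theorem applies directly.
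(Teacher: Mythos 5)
Your proposal is correct in substance and follows the same core route as the paper: the paper's Appendix \ref{app:proof_existence} likewise verifies the hypotheses of Cesari's theorem from \cite[Ch.~3]{fleming1975deterministic} --- linear growth of the dynamics, a Lipschitz-type estimate in $\boldsymbol i$, nonemptiness of $U^{|\mathbb K|}$, closedness of the control space $\mathbb R^{|\mathbb K|}$, compactness/continuity of the terminal reward, and convexity of the extended velocity set via the affine-in-$\boldsymbol u$ dynamics together with Assumption \ref{assumption:gk_increasing}. (Your de la Vall\'ee-Poussin / weak-$L^1$ detour re-derives inside the proof what the citation to Cesari's theorem already packages, so it is harmless but redundant.) Where you genuinely diverge, and to your credit, is the coercivity condition. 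The paper takes $\sigma(\boldsymbol u)=\sum_k g_k(u_k)$ and argues via L'Hospital that $g_p'(u_p)\rightarrow\infty$, invoking only strict convexity and $g_p(0)=0$; that inference is incomplete, since a strictly convex increasing function can have bounded derivative --- e.g.\ the even extension of $g(u)=u-\log(1+u)$, $u\geq 0$, for which $\sigma(\boldsymbol u)/|\boldsymbol u|$ tends to a finite limit, Cesari's growth hypothesis fails, and indeed the Hamiltonian $-g_k(u_k)+\gamma\lambda_k s_k u_k$ can be unbounded above in $u_k$, so existence can genuinely fail. Your insistence that strict convexity be upgraded to superlinear growth is therefore not excess caution but repairs a soft spot in the paper's own verification (for the quadratic costs $g_k(u_k)=bu_k^2p_k$ used throughout the paper, both arguments are fine). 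A second difference: the paper's appendix is silent on the maximization over $\boldsymbol i_0$, whereas you treat it explicitly via compactness of the seed set; here note one slip --- $V(\boldsymbol i_0)=\sup_{\boldsymbol u}J(\boldsymbol u,\boldsymbol i_0)$ is, as a supremum of continuous functions, \emph{lower} semicontinuous, not upper, so your assembly step needs uniformity. You in fact have it: near-optimal controls satisfy $\int_0^T\sum_k g_k(u_k)dt\leq 1$, which for superlinear $g_k$ bounds $\int_0^T\gamma(t)|\boldsymbol u(t)|dt$ uniformly, and a Gronwall estimate then makes the family $\{J(\boldsymbol u,\cdot)\}$ equi-Lipschitz on the seed simplex, so $V$ is continuous and the joint maximum is attained. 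With that one-line repair your argument is complete, and on the seed-optimization point it is more complete than the paper's.
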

\begin{proof}
In Appendix \ref{app:proof_existence} of the supplementary material.
\end{proof}

\section{Analysis and Solution}

In Sec. \ref{sec:soln_by_pontryagin}, we first discuss the solution technique for Problem (\ref{eq:opt_prob}) where seed $\boldsymbol{i_0}$ is a given quantity and not an optimization variable. This will be used in Sec. \ref{sec:soln_joint} to solve the joint problem.

\subsection{Solution by Pontryagin's Maximum Principle (Given Seed $\boldsymbol{i_0}$)}
\label{sec:soln_by_pontryagin}

The optimal solution to Problem (\ref{eq:opt_prob}) (where seed $\boldsymbol{i_0}$ is given) satisfies the conditions stated by Pontryagin's Maximum Principle. Let the adjoint variables be denoted by $\lambda_k(t)$, with the vector $\boldsymbol \lambda(t)=\{\lambda_k(t),~k\in\mathbb K\}$, collecting all adjoint variables. The Pontryagin's Principle applied to our problem leads to the following equations:
\\ \emph{Hamiltonian:}
\begin{align}
& H(\boldsymbol i(t),\boldsymbol \lambda(t),\boldsymbol u(t))=-\sum_{j\in\mathbb K}g_j(u_j(t)) \nonumber\\
& + \sum_{j\in\mathbb K}\lambda_j(t)\left(\beta(t) j s_j(t) \sum_{l\in \mathbb K}(q_li_l(t)) + \gamma(t)u_j(t)s_j(t) \right). \label{eq:hamiltonian}
\end{align}

If $\boldsymbol i^*(t)=\{i_k^*(t),~k\in\mathbb K\}$, $\boldsymbol \lambda^*(t)=\{\lambda_k^*(t),~k\in\mathbb K\}$ and $\boldsymbol u^*(t)=\{u_k^*(t),~k\in\mathbb K\}$ denote the values of the variables at the optimum, then they satisfy the following conditions:
\\ \emph{State equations:} Eq. (\ref{eq:opt_prob_states}) with $i_k(t),s_k(t),u_k(t)$ replaced by $i_k^*(t),s_k^*(t),u_k^*(t)$ respectively.
\\ \emph{Adjoint Equations:} For all $k\in\mathbb K$,
\begin{align}
\frac{d}{dt} \lambda_k^*(t) =& -\frac{\partial}{\partial i_k(t)}H(\boldsymbol i^*(t),\boldsymbol \lambda^*(t),\boldsymbol u^*(t)) \nonumber \\
=& \beta(t) k \lambda_k^*(t)\sum_{l\in\mathbb K}(q_li_l^*(t)) \nonumber \\
& -\beta(t) q_k \sum_{j\in \mathbb K}(\lambda_j^*(t)js_j^*(t)) + \gamma(t)u_k^*(t)\lambda_k^*(t). \label{eq:costate_diff_eq}
\end{align}
\emph{Hamiltonian Maximizing Conditions:} For all $k\in\mathbb K$, the controls satisfy,
\begin{align}
& u_k^*(t) = \text{argmax}_{u_k(t)} H(\boldsymbol i^*(t),\boldsymbol \lambda^*(t),\boldsymbol u(t)) \nonumber \\
\Rightarrow & \frac{\partial}{\partial u_k(t)}H(\boldsymbol i^*(t),\boldsymbol \lambda^*(t),\boldsymbol u^*(t)) \nonumber \\
& = -g_k'(u_k^*(t)) + \gamma(t)\lambda_k^*(t)s_k^*(t)=0, \nonumber \\
\Rightarrow & g_k'(u_k^*(t)) = \gamma(t)\lambda_k^*(t)s_k^*(t), \label{eq:hamiltonian_max_cond_1} \\
\Rightarrow & u_k^*(t) = g_k'^{-1}(\gamma(t)\lambda_k^*(t)s_k^*(t)). \label{eq:hamiltonian_max_cond_2}
\end{align}
\emph{Transversality Conditions:} 
\begin{align}
\lambda_k^*(T)=p_k,~k\in\mathbb K. \label{eq:transversaility_cond}
\end{align}

\subsubsection{Numerical Solution Using Forward-Backward Sweep Algorithm}
\label{sec:numerical_soln_fw_bk_sweep}

Although some structural results may be obtained for the solution to Problem (\ref{eq:opt_prob}) (Sec. \ref{sec:structural_results}), it is unlikely that an analytical solution to the equations in Sec. \ref{sec:soln_by_pontryagin}, and hence, an analytical solution to the control signals, can be obtained. Thus, the equations in Sec. \ref{sec:soln_by_pontryagin} have to be solved numerically to obtain the solution.

Notice that we have a huge optimality system with upto $|\mathbb K|\approx 100$ state equations (with initial conditions), $|\mathbb K|\approx 100$ adjoint equations (with terminal conditions) and $|\mathbb K|\approx 100$ control signals, leading to a boundary value problem with $2|\mathbb K|$ differential equations. However, the above optimality system can be efficiently solved using the forward-backward sweep technique (see for example \cite{asano2008optimal}), using only (numerical) initial value problem solvers. We sketch the technique in Algorithm \ref{alg:fw_back_sweep}. We chose $N_{sweep}=30$, which was sufficient to result in convergence for various sets of parameters used in this work.

An alternate method to solve optimal control Problem (\ref{eq:opt_prob}) is to directly discretize the differential equations into $D$ time points and find the values of the controls at those time instants using an optimization routine \cite{becerra2004solving}. Such a method will not use Pontryagin's Principle. In our experience, such a method becomes extremely slow when large number of controls need to be computed. An additional issue is the large memory requirement for the computation, which may not be available in normal desktop computers.

It is not possible to provide analytical expressions for the gradient of the objective function with respect to the control variables at $D$ discretization points (which are the variables to be optimized); hence any optimization routine approximates the gradient at each optimization-iteration numerically. $D$ cannot be made too low because the accuracy and stability of the solution to the differential equations in System (\ref{eq:opt_prob}) will be compromised. Even for $D=50$ point discretization, the number of variables to be computed is $|\mathbb K|D\approx 5000$. Thus, an optimization routine will need to evaluate the objective function at least $|\mathbb K|D$ times just for estimating the gradient (\emph{e.g.} perturbing one variable at a time). Each objective function evaluation amounts to solving $|\mathbb K|$ differential equations. Also, there are many optimization-iterations before convergence to the solution occurs (typically greater than $N_{sweep}$).

On the other hand, in Algorithm \ref{alg:fw_back_sweep}, one needs to evaluate $2|\mathbb K|$ differential equations only $N_{sweep}=30$ times to obtain the solution. In practice this leads to substantial reduction in computational complexity if a large number of controls have to be computed. Thus, the use of Pontryagin's Principle and forward-backward sweep leads to more efficient computation than a direct discretization method for the problems of the scale discussed in this paper.

\begin{algorithm}
\small
\caption{Forward-backward sweep algorithm for Problem (\ref{eq:opt_prob}).}
\label{alg:fw_back_sweep}
\begin{algorithmic}[1]
	\REQUIRE $N_{sweep}$, $T$; $\beta(t),\gamma(t)~\forall t\in[0,T]$; $i_{0k},~p_k,~q_k,~k\in\mathbb K$.
	\ENSURE The optimal control signals $u_k^*(t),~k\in\mathbb K$.
	\STATE Initialize: $u_k^*(t) \leftarrow 0,~\forall t\in[0,T],~\forall k\in\mathbb K$.
	\FOR{$j=1$ \TO $N_{sweep}$}
		\STATE Calculate $i_k^*,~\forall k\in\mathbb K$ using state equations (\ref{eq:opt_prob_states}) with initial conditions $i_k^*(0)=i_{0k},~\forall k\in\mathbb K$. \COMMENT{Forward sweep.}
		\STATE Calculate $\lambda_k^*,~k\in\mathbb K$ using adjoint equations (\ref{eq:costate_diff_eq}) with terminal conditions $\lambda_k^*(T)=p_k,~k\in\mathbb K$ (transversaility conditions). \COMMENT{Backward sweep.}
		\STATE Calculate $u_k^*,~k\in\mathbb K$ using (\ref{eq:hamiltonian_max_cond_2}).
	\ENDFOR
\end{algorithmic}
\normalsize
\end{algorithm}

The following provides a sufficient condition for the convergence of Algorithm \ref{alg:fw_back_sweep}.
\begin{thm}
\label{thm:conv_fw_bk_sweep}
For $g_k(u_k(t))=c_ku^2_k(t)$, the forward-backward sweep algorithm converges when
\begin{align*}
& \frac{\gamma^2_{M}\Lambda}{2c_{m}} \times \exp\{(\beta_{M}K_{max}+\gamma_{M}u_{M})T\} \times \\
& \Big[ \frac{\exp\{ \beta_{M}(\Sigma k) q_{M} T \} - \exp\{ \beta_{M} K_{max} T \} }{\beta_{M}(\Sigma k) q_{M} - \beta_{M} K_{max}} \Big] < 1,
\end{align*}
where, $\beta_{M}=\max_t\{\beta(t)\},~\gamma_{M}=\max_t \{\gamma(t)\},~c_{m}=\min_k \{c_k\},~u_{M}=\max_{k,t} \{u_k(t)\}, ~\Lambda=\max_{k,t}\{\lambda_k(t)\}$.
\end{thm}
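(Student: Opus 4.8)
The plan is to read one full pass of Algorithm~\ref{alg:fw_back_sweep} (Steps~3--5) as a single operator $\mathcal F$ acting on the control vector, and to show that under the stated inequality $\mathcal F$ is a contraction in the sup-norm $\|\boldsymbol u\|\triangleq\max_{k\in\mathbb K}\sup_{t\in[0,T]}|u_k(t)|$. Since the set of (essentially bounded) admissible controls endowed with this norm is complete, the Banach fixed-point theorem then forces the iterates $\boldsymbol u^{(j)}=\mathcal F(\boldsymbol u^{(j-1)})$ to converge to the unique fixed point, which is precisely the control satisfying the optimality system of Sec.~\ref{sec:soln_by_pontryagin}. The quadratic specialization $g_k(u)=c_k u^2$ is what makes this tractable: by \eqref{eq:hamiltonian_max_cond_2} the update map collapses to the affine rule $u_k=\gamma(t)\lambda_k(t)s_k(t)/(2c_k)$, which is globally Lipschitz in $(\lambda_k,s_k)$ with constants $\gamma_M/(2c_m)$ and $\gamma_M\Lambda/(2c_m)$; for a general convex $g_k$ the inverse $g_k'^{-1}$ need not be Lipschitz, so the argument would need further hypotheses.

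Concretely, I would take two control iterates $\boldsymbol u,\tilde{\boldsymbol u}$, push each through Steps~3--5, and bound $\|\mathcal F(\boldsymbol u)-\mathcal F(\tilde{\boldsymbol u})\|$ by $\kappa\|\boldsymbol u-\tilde{\boldsymbol u}\|$ through a chain of three estimates, after first recording the \emph{a priori} bounds. By Lemma~\ref{thm:ik_sk_in_0_1} every state trajectory obeys $i_k,s_k\in[0,1]$; the adjoints are bounded by $\Lambda$ and the controls by $u_M$ (exactly the quantities appearing in the statement). These furnish uniform Lipschitz constants for the right-hand sides of \eqref{eq:opt_prob_states} and \eqref{eq:costate_diff_eq}.

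Step~(i), \emph{states (forward)}: subtracting the two copies of \eqref{eq:opt_prob_states} with identical initial data, the difference $|i_k-\tilde i_k|$ satisfies a linear integral inequality whose homogeneous rate is controlled by $\beta_M K_{max}$ (the bilinear epidemic term, using $\sum_l q_l=1$ and $k\le K_{max}$) together with the control contribution $\gamma_M u_M$, and whose inhomogeneity is $\gamma_M\|\boldsymbol u-\tilde{\boldsymbol u}\|$; Gronwall's inequality propagates these forward from $t=0$. Step~(ii), \emph{adjoints (backward)}: subtracting the two copies of \eqref{eq:costate_diff_eq}, which share the terminal value $p_k$, the difference $|\lambda_k-\tilde\lambda_k|$ is forced by the state differences of Step~(i) and, through the $\gamma u_k\lambda_k$ term, directly by the control difference (contributing the factor $\gamma_M\Lambda$); its homogeneous backward rate is governed by the all-to-all coupling term $-\beta q_k\sum_j\lambda_j j s_j$, bounded by $\beta_M(\Sigma k)q_M$. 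A backward Gronwall estimate propagates these from $t=T$. Step~(iii), \emph{control update}: feeding the state and adjoint differences into the affine update map and collecting constants yields the prefactor $\gamma_M^2\Lambda/(2c_m)$, which is the product $(\gamma_M/(2c_m))\times(\gamma_M\Lambda)$ of the update's sensitivity to $\lambda$ and the control forcing inside the adjoint equation.

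The decisive point is the \emph{composition} of the forward state propagator and the backward adjoint propagator: the influence of a control perturbation localized near a time $s$ is amplified by the state dynamics at rate $\beta_M K_{max}$ over part of the horizon and by the adjoint dynamics at rate $\beta_M(\Sigma k)q_M$ over the complementary part, so that integrating over $s$ produces the convolution $\int_0^T\exp\{\beta_M K_{max}(T-s)\}\exp\{\beta_M(\Sigma k)q_M\,s\}\,ds$, which evaluates \emph{exactly} to the bracketed quotient $[\exp\{\beta_M(\Sigma k)q_M T\}-\exp\{\beta_M K_{max}T\}]/[\beta_M(\Sigma k)q_M-\beta_M K_{max}]$. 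I expect the \textbf{main obstacle} to be precisely this bookkeeping: one must track the degree-class coupling uniformly in $k$ (the sums $\sum_l q_l i_l$ and $\sum_j\lambda_j j s_j$ tie every class to every other, so the two rates $\beta_M K_{max}$ and $\beta_M(\Sigma k)q_M$ must be kept distinct), and one must choose which exponential factors to bound by their supremum over $[0,T]$ (yielding the uniform prefactor $\exp\{(\beta_M K_{max}+\gamma_M u_M)T\}$ that absorbs the control-dependent and subdominant couplings) and which to retain inside the integral, so that the convolution collapses to the stated difference-of-exponentials rather than to a coarser single-rate bound. Once the assembled constant $\kappa$ is identified with the left-hand side of the stated inequality, the hypothesis $\kappa<1$ makes $\mathcal F$ a contraction and convergence follows at once.
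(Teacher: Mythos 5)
Your proposal matches the paper's proof (Appendix B) in all essentials: the paper likewise derives a forward Gronwall estimate for the state error, a backward Gronwall estimate for the adjoint error (with the all-to-all couplings $\sum_l q_l i_l$ and $\sum_j \lambda_j j s_j$ bounded exactly as you indicate), feeds both into the affine quadratic-cost update $u_k=\gamma(t)\lambda_k(t)s_k(t)/(2c_k)$, and evaluates precisely the convolution of the two exponential propagators to obtain the bracketed difference-of-exponentials quotient together with the uniform prefactor $\exp\{(\beta_{M}K_{max}+\gamma_{M}u_{M})T\}$, yielding your constant $\kappa$ verbatim. The only cosmetic difference is that the paper runs the recursion for the error of the iterates against the true solution in the $L^1$-in-time quantity $\int_0^T|\boldsymbol e_u^{(n)}(\tau)|\,d\tau$ (which is what makes the convolution collapse to the stated bracket exactly), rather than phrasing it as a sup-norm Banach contraction between two successive iterates, but the argument is the same in substance.
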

\begin{proof}
Note that the convergence is aided by small values of $\beta(t), \gamma(t)$, and $T$; and large costs of application of controls. We use the techniques in \cite{mcasey2012convergence}. Detailed analysis is in Appendix \ref{app:proof_conv_fw_bk_sweep} of the supplementary material accompanying this paper.
\end{proof}

\subsection{Solution to the Joint Problem (\ref{eq:opt_prob})}
\label{sec:soln_joint}

We use the solution to the fixed seed problem in Sec. \ref{sec:soln_by_pontryagin} and Algorithm \ref{alg:fw_back_sweep} to solve the joint problem. In Problem (\ref{eq:opt_prob}), the solution to the control signals $\boldsymbol{u}$ are functions of $\boldsymbol i_0=\{ i_{0k},~k\in\mathbb K \}$ and are not independent. The joint optimization Problem (\ref{eq:opt_prob}) is equivalent to:
\begin{subequations}
\label{eq:opt_prob_joint_mod}
\begin{align}
\underset{ \begin{smallmatrix} \big\{ \boldsymbol i_0:~0\leq i_{0k}\leq 1, \\ \sum\limits_{k\in\mathbb K} p_ki_{0k}=B_{i_0} \big\} \end{smallmatrix} }{\text{~~maximize~~}} ~J & = \sum_{k\in \mathbb K} p_k i_k(T)-\int_0^T\sum_{k\in \mathbb K}g_k(u_k(t))dt, \label{eq:cost_funtion_joint}\\
\text{subject to:~~} & \text{(\ref{eq:opt_prob_states}) and (\ref{eq:opt_prob_init_cond}); (\ref{eq:costate_diff_eq}) and (\ref{eq:transversaility_cond}); (\ref{eq:hamiltonian_max_cond_2})}. \nonumber \end{align}
\end{subequations}
The constraints in Problem (\ref{eq:opt_prob_joint_mod}) ensure that Pontryagin's Principle is satisfied for the control functions computed by (\ref{eq:hamiltonian_max_cond_2}) for any value of $\boldsymbol{i_0}$ (thus the computed controls are optimal).

Problem (\ref{eq:opt_prob_joint_mod}) (which in turn solves Problem (\ref{eq:opt_prob})) can be solved numerically by combination of an optimization solver and Algorithm \ref{alg:fw_back_sweep}. The optimization routine adjusts the values of optimization variables $\boldsymbol{i_0}$ in the outer loop, and the reward function is computed by Algorithm \ref{alg:fw_back_sweep} (for the given value of $\boldsymbol{i_0}$). It is not possible to compute the gradient of the reward function (\ref{eq:cost_funtion_joint}) with respect to $\boldsymbol{i_0}$ analytically, so the optimization routine should be capable of numerically estimating the gradient values.

In principle, it may be possible to avoid the use of Algorithm \ref{alg:fw_back_sweep} and Pontryagin's Principle in solving Problem (\ref{eq:opt_prob}) by using the method in \cite{becerra2004solving} and augmenting $\boldsymbol{i_0}$ as additional optimization variables. But as discussed in Sec. \ref{sec:numerical_soln_fw_bk_sweep}, such a method will be too slow and will require a large amount of memory for computation.

\subsection{Structural and Uniqueness Results for Controls}
\label{sec:structural_results}

In this section, we prove some basic structural results for the solution to Problem (\ref{eq:opt_prob}) when seeds are given (Theorem \ref{thm:controls_structure}) and provide a sufficient condition for uniqueness of controls (Theorem \ref{thm:uniqueness_of_state_adjoint_opt_control}). We first provide Lemma \ref{thm:adjoint_variables_positive} which is needed in subsequent results:
\begin{lemma}
\label{thm:adjoint_variables_positive}
The adjoint variables at the optimum, $\boldsymbol \lambda^*(t)=\{\lambda_k^*(t),~k\in\mathbb K\}$ satisfy $\lambda_k^*(t)\geq 0,~\forall k\in\mathbb K,~\forall t\in[0,T]$.
\end{lemma}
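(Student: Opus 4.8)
The plan is to exploit the fact that the adjoint system (\ref{eq:costate_diff_eq}) is \emph{linear} in the vector $\boldsymbol\lambda^*(t)$ and carries the nonnegative terminal data $\lambda_k^*(T)=p_k\geq 0$ supplied by the transversality condition (\ref{eq:transversaility_cond}). Since the adjoint equations run backward in time, nonnegativity at $t=T$ should propagate to all earlier times provided the backward flow preserves the nonnegative orthant. First I would reverse time by setting $\tau=T-t$ and $\tilde\lambda_k(\tau)\triangleq\lambda_k^*(T-\tau)$. Suppressing the reversed-time argument of the optimal state and control trajectories, this converts the terminal-value problem into the initial-value problem $\tilde\lambda_k(0)=p_k\geq0$ governed by
\[
\frac{d}{d\tau}\tilde\lambda_k = -\Big(\beta k\sum_{l\in\mathbb K}q_l i_l^* + \gamma u_k^*\Big)\tilde\lambda_k + \beta q_k \sum_{j\in\mathbb K} j\, s_j^*\, \tilde\lambda_j, \quad k\in\mathbb K,
\]
which is a linear system $\dot{\tilde{\boldsymbol\lambda}}=M(\tau)\tilde{\boldsymbol\lambda}$.

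The crucial observation is that the off-diagonal entries of $M(\tau)$, namely $M_{kj}=\beta(\tau)q_k\,j\,s_j^*$ for $j\neq k$, are all nonnegative: $\beta\geq0$, $q_k\geq0$, $j\geq0$, and $s_j^*\in[0,1]$ by Lemma \ref{thm:ik_sk_in_0_1}. Hence $M(\tau)$ is a Metzler matrix and the system is a positive linear system, for which the nonnegative orthant is invariant under the forward-$\tau$ flow. To keep the argument self-contained I would then establish this invariance directly through a boundary (first-touching) argument analogous to the one used for Lemma \ref{thm:ik_sk_in_0_1}. Suppose some component became negative; by continuity and $\tilde\lambda_k(0)=p_k\geq0$ there would be a first time $\tau_0$ at which the trajectory reaches the boundary of the nonnegative orthant, so some index $m$ has $\tilde\lambda_m(\tau_0)=0$ while $\tilde\lambda_j(\tau_0)\geq0$ for all $j$. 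Evaluating the $m$-th equation at $\tau_0$, the diagonal term multiplies $\tilde\lambda_m=0$ and vanishes (so the sign of $\gamma u_m^*$ is irrelevant), leaving
\[
\frac{d}{d\tau}\tilde\lambda_m(\tau_0)=\beta(\tau_0)q_m\sum_{j\in\mathbb K} j\, s_j^*\,\tilde\lambda_j(\tau_0)\geq0.
\]
Thus the vector field is tangent or inward-pointing on this face, $\tilde\lambda_m$ cannot decrease through zero, and the nonnegative orthant is positively invariant; reverting to original time gives $\lambda_k^*(t)\geq0$ for all $t\in[0,T]$ and all $k\in\mathbb K$.

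The main obstacle is the tangency case, where the boundary derivative is exactly zero rather than strictly positive, so the first-touching contradiction is not immediate. I would close this gap by one of two routine routes: either invoke Nagumo's subtangential invariance theorem (the computation above verifies precisely its inward-tangent condition), or perturb the terminal data by replacing $p_k$ with $p_k+\varepsilon$, obtain strict positivity of the perturbed adjoint, and let $\varepsilon\to0$ using continuous dependence of ODE solutions on their data. Equivalently, and most succinctly, one may simply cite that the state-transition matrix $\Phi(\tau,0)$ of a Metzler system is entrywise nonnegative, whence $\tilde{\boldsymbol\lambda}(\tau)=\Phi(\tau,0)\,\boldsymbol p\geq0$ componentwise. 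I expect the Metzler structure to make all of these equivalent and essentially mechanical; the only genuine input beyond the linear structure is the bound $s_j^*\in[0,1]$ from Lemma \ref{thm:ik_sk_in_0_1}, which is what guarantees the nonnegativity of the coupling coefficients.
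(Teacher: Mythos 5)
Your proof is correct, but it takes a genuinely different route from the paper's. The paper proves the lemma through \emph{optimality}: it invokes the Hamiltonian maximizing condition to compare $H$ at $u_k^*(t)$ with $H$ at the zero control, which, using $g_k(0)=0$, collapses to the pointwise inequality $\lambda_k^*(t)\gamma(t)u_k^*(t)s_k^*(t)\geq g_k(u_k^*(t))\geq 0$, and then reads off $\lambda_k^*(t)\geq 0$ from $u_k^*(t)\geq 0$ (a consequence of Assumption \ref{assumption:gk_even}), $s_k^*(t)\geq 0$ (Lemma \ref{thm:ik_sk_in_0_1}) and $\gamma(t)\geq 0$. You never use the maximization property at all: you exploit only the fact that, after time reversal, the adjoint system (\ref{eq:costate_diff_eq}) is linear with a Metzler coefficient matrix---your computation of the off-diagonal entries $\beta q_k\,j\,s_j^*\geq 0$ is right, and the sign of the diagonal term $\gamma u_k^*$ is indeed irrelevant---so the nonnegative orthant is forward invariant and the nonnegative terminal data (\ref{eq:transversaility_cond}) propagate backward. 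Your identification of the tangency gap in the naive first-touching argument is the one genuine subtlety, and each of your three repairs (Nagumo's theorem, the $\varepsilon$-perturbation of $p_k$ with continuous dependence, or entrywise nonnegativity of the transition matrix of a Metzler system) closes it; all are routine given that the coefficients involving $i^*,s^*,u^*$ are measurable and bounded on $[0,T]$ along the optimal trajectory. As for what each approach buys: the paper's argument is shorter and stays entirely within the PMP formalism, but its final inequality is vacuous at any instant where $\gamma(t)u_k^*(t)s_k^*(t)=0$ (e.g.\ $u_k^*(t)=0$), a degenerate case your argument simply does not encounter; moreover, your route needs no sign information on the controls (Assumption \ref{assumption:gk_even} is not used) and, since it relies only on the adjoint dynamics and terminal condition, it carries over verbatim to the fixed-budget problem (\ref{eq:opt_prob_budget_relaxed}), where the adjoint and transversality equations are unchanged, without the substitution $g_k\mapsto\mu^* g_k$ that the paper prescribes for extending its own proof. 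The single shared ingredient is $s_j^*(t)\in[0,1]$ from Lemma \ref{thm:ik_sk_in_0_1}, which both proofs require.
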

\begin{proof}
In Appendix \ref{app:proofs} of the supplementary material.
\end{proof}

\begin{thm}
\label{thm:controls_structure}
(i) The solutions $\{u_k^*,~k\in\mathbb K\}$ to the optimal control problem (\ref{eq:opt_prob}) are non-increasing functions of time $\forall k\in\mathbb K$ and $t\in[0,T]$ for $\frac{d}{dt} \gamma(t)\leq 0$.

(ii) For $g_k(u_k(t))=c_ku_k^2(t)$, $c_k>0$, the solutions $\{u_k^*,~k\in\mathbb K\}$ to the optimal control problem (\ref{eq:opt_prob}) are convex functions of time $\forall k\in\mathbb K$ and $t\in[0,T]$, for constant and non-increasing spreading rate profiles and effectiveness of controls, \emph{i.e.}, $\frac{d}{dt}{\beta}(t), \frac{d}{dt} \gamma(t) \leq 0$, and convex $\gamma(t)$, \emph{i.e.}, $\frac{d^2}{dt^2}{\gamma}(t)\geq 0$  (this includes $\frac{d}{dt}{\beta}(t),\frac{d}{dt}\gamma(t)=0,~\forall t$).
\end{thm}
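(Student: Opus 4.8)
The plan is to start from the Hamiltonian maximizing condition (\ref{eq:hamiltonian_max_cond_2}), which gives $u_k^*(t)=g_k'^{-1}\big(\gamma(t)\lambda_k^*(t)s_k^*(t)\big)$. By Assumption \ref{assumption:gk_increasing} each $g_k$ is strictly convex, so $g_k'$ is strictly increasing and $g_k'^{-1}$ is a strictly increasing function; hence $u_k^*$ inherits the monotonicity of its argument $\phi_k(t)\triangleq\gamma(t)\lambda_k^*(t)s_k^*(t)$. Part (i) therefore reduces to showing $\phi_k$ is non-increasing. For the quadratic cost of part (ii) the relation becomes linear, $u_k^*(t)=\phi_k(t)/(2c_k)$, so part (ii) reduces to showing $\phi_k$ is convex.

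The decisive step, which I expect to carry the whole argument, is to evaluate $\tfrac{d}{dt}\big(\lambda_k^*(t)s_k^*(t)\big)$. Substituting the adjoint equation (\ref{eq:costate_diff_eq}) for $\dot\lambda_k^*$ and using $\dot s_k^*=-\dot i_k^*$ from the state equation (\ref{eq:opt_prob_states}), the two contributions in $\beta k\,\lambda_k^* s_k^*\sum_l q_l i_l^*$ cancel against each other, as do the two in $\gamma u_k^*\lambda_k^* s_k^*$, leaving
\begin{align*}
\frac{d}{dt}\big(\lambda_k^*(t)s_k^*(t)\big)=-\beta(t)q_k s_k^*(t)\sum_{j\in\mathbb K}\lambda_j^*(t)\,j\,s_j^*(t).
\end{align*}
Every factor is non-negative: $\beta,q_k\geq0$ by hypothesis, $s_k^*\in[0,1]$ by Lemma \ref{thm:ik_sk_in_0_1}, and $\lambda_j^*\geq0$ by Lemma \ref{thm:adjoint_variables_positive}. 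Thus $w_k\triangleq\lambda_k^* s_k^*$ is non-increasing. For part (i) I then apply the product rule to $\phi_k=\gamma\,w_k$: the term $\dot\gamma\,w_k$ is non-positive because $\dot\gamma\leq0$ and $w_k\geq0$, while $\gamma\,\dot w_k$ is non-positive because $\gamma\geq0$ and $\dot w_k\leq0$; hence $\dot\phi_k\leq0$ and $u_k^*$ is non-increasing.

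For part (ii) I must promote this first-order sign estimate to a convexity statement for $w_k$. Writing $Z(t)\triangleq\sum_{j\in\mathbb K} j\,\lambda_j^*(t)s_j^*(t)=\sum_j j\,w_j(t)\geq0$, the identity above reads $\dot w_k=-\beta q_k s_k^* Z$. Differentiating once more yields $\ddot w_k=-q_k\big[\dot\beta\,s_k^* Z+\beta\,\dot s_k^* Z+\beta\,s_k^*\dot Z\big]$, and I would argue the bracket is non-positive term by term: $\dot\beta\leq0$; $\dot s_k^*\leq0$ because $u_k^*\geq0$ forces $\dot i_k^*\geq0$; and, crucially, $\dot Z=\sum_j j\,\dot w_j=-\beta Z\sum_j j\,q_j s_j^*\leq0$. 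Hence $\ddot w_k\geq0$, so $w_k$ is convex. Finally, from $\ddot\phi_k=\ddot\gamma\,w_k+2\dot\gamma\,\dot w_k+\gamma\,\ddot w_k$ all three summands are non-negative (using $\ddot\gamma\geq0$ with $w_k\geq0$, $\dot\gamma\leq0$ with $\dot w_k\leq0$, and $\gamma\geq0$ with $\ddot w_k\geq0$), so $\phi_k$, and therefore $u_k^*=\phi_k/(2c_k)$, is convex.

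The only genuine obstacle is spotting the cancellation that collapses $\tfrac{d}{dt}(\lambda_k^* s_k^*)$ to a single signed term; once that identity and the two positivity lemmas are available, both parts follow by routine product-rule bookkeeping and sign tracking, with the chained estimate $\dot Z\leq0$ the one extra ingredient that part (ii) requires.
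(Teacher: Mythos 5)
Your proposal is correct and takes essentially the same route as the paper's proof: differentiating the stationarity condition $g_k'(u_k^*(t))=\gamma(t)\lambda_k^*(t)s_k^*(t)$ in time, exploiting exactly the cancellation that collapses $\tfrac{d}{dt}\big(\lambda_k^*(t)s_k^*(t)\big)$ to $-\beta(t)q_k s_k^*(t)\sum_{j\in\mathbb K}j\lambda_j^*(t)s_j^*(t)$, and closing the sign-tracking with Lemma \ref{thm:adjoint_variables_positive} and Lemma \ref{thm:ik_sk_in_0_1}. The only (harmless) deviations are organizational: you conclude part (i) via monotonicity of $g_k'^{-1}$ rather than dividing through by $g_k''(u_k^*(t))$ as in Eq. (\ref{eq:gdoubleprime_udot}), and in part (ii) you sign the term the paper handles by invoking part (i) (namely $\sum_j j\,g_j''(u_j^*)\dot u_j^*\leq 0$) through the equivalent self-contained estimate $\dot Z=-\beta Z\sum_j jq_js_j^*\leq 0$.
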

\begin{proof}
In Appendix \ref{app:proofs} of the supplementary material.
\end{proof}
\textbf{Remark:} Lemma \ref{thm:adjoint_variables_positive} and Theorem \ref{thm:controls_structure}(i) are valid for a spreading rate profile of arbitrary shape, \emph{i.e.}, for any $\beta(t)\geq 0,~t\in[0,T]$. Further, Lemma \ref{thm:adjoint_variables_positive} and Theorem \ref{thm:controls_structure} are valid for any degree distribution $p_k$, in other words, for any configuration model network.

When seeds are given, we state a sufficient condition for a solution of Problem (\ref{eq:opt_prob}) to be unique.
\begin{thm}
\label{thm:uniqueness_of_state_adjoint_opt_control}
When seeds are given, for $g_k(u_k(t))=c_ku_k^2(t)$, the solution to the state equations (\ref{eq:opt_prob_states}) and adjoint equations (\ref{eq:costate_diff_eq}), and hence the optimal controls for Problem (\ref{eq:opt_prob}) are unique when \begin{equation}
\label{eq:cond_for_uniqueness}
d_1||\beta(t)||_{L_1}+d_2||\gamma^2(t)||_{L_1}<1.
\end{equation}
Here, $d_1=\max\{ (\sum_{k\in\mathbb K}k)\Lambda q_{M} + K_{max}\Lambda, 2K_{max} \},~d_2=(\Lambda/c_{m})\max\{ 1,\Lambda/2 \},~\Lambda=\max_{k,t}\{\lambda_k(t)\},~q_{M}=\max_k\{ q_k \},~c_{m}=\min_k\{ c_k \}$ and $||.||_{L_1}$ is $L_1$ norm of a function.\footnote{The result can be generalized for any $g_k()$ by assuming Lipschitz continuity of $g_k'^{-1}(.)$ and assuming a Lipschitz constant.}
\end{thm}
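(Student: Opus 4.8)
The plan is to reduce Problem (\ref{eq:opt_prob}) with given seed to a closed two-point boundary value problem and then prove uniqueness of its solution by a contraction (Gronwall-type) argument in the sup-norm over $[0,T]$. First I would eliminate the control: for $g_k(u_k)=c_ku_k^2$ the Hamiltonian maximizing condition (\ref{eq:hamiltonian_max_cond_2}) is explicit,
\begin{equation}
u_k^*(t)=\frac{\gamma(t)\lambda_k^*(t)s_k^*(t)}{2c_k},\qquad k\in\mathbb K,
\end{equation}
so substituting this into the state equations (\ref{eq:opt_prob_states}) and adjoint equations (\ref{eq:costate_diff_eq}) yields a self-contained system in $(\boldsymbol i^*,\boldsymbol\lambda^*)$ with forward initial data (\ref{eq:opt_prob_init_cond}) and backward terminal data (\ref{eq:transversaility_cond}). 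Uniqueness of the optimal controls then follows from uniqueness of $(\boldsymbol i^*,\boldsymbol\lambda^*)$ through this relation, so it suffices to show the reduced system has at most one solution.

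Next I would suppose two solutions $(\boldsymbol i,\boldsymbol\lambda)$ and $(\bar{\boldsymbol i},\bar{\boldsymbol\lambda})$ of the reduced system sharing the same boundary data, and write equations for the differences $\Delta i_k=i_k-\bar i_k$ and $\Delta\lambda_k=\lambda_k-\bar\lambda_k$ (recalling $\Delta s_k=-\Delta i_k$ since $s_k=1-i_k$). Because states carry initial conditions and adjoints carry terminal conditions, I integrate the state differences forward from $0$ and the adjoint differences backward from $T$, obtaining
\begin{align*}
|\Delta i_k(t)|\le\int_0^t\big[\cdots\big]\,ds,\qquad |\Delta\lambda_k(t)|\le\int_t^T\big[\cdots\big]\,ds.
\end{align*}
The bracketed integrands are differences of the bilinear terms (e.g.\ $s_k\sum_l q_l i_l-\bar s_k\sum_l q_l\bar i_l=\Delta s_k\sum_l q_l i_l+\bar s_k\sum_l q_l\Delta i_l$) and of the control-induced terms $\gamma^2\lambda_k s_k^2/(2c_k)$ and $\gamma^2\lambda_k^2 s_k/(2c_k)$; every factor is bounded using the a priori estimates $s_k,i_k\in[0,1]$ (Lemma \ref{thm:ik_sk_in_0_1}), $0\le\lambda_k\le\Lambda$ (Lemma \ref{thm:adjoint_variables_positive}), $k\le K_{max}$, $q_k\le q_{M}$ and $c_k\ge c_{m}$.

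Then I would bound each difference appearing under the integrals by its supremum over $[0,T]$, pull these suprema out, and dominate $\int_0^t\beta\,ds$ and $\int_t^T\gamma^2\,ds$ by the full norms $\|\beta\|_{L_1}$ and $\|\gamma^2\|_{L_1}$. Summing over $k\in\mathbb K$ and collecting constants produces a single scalar inequality
\begin{equation}
\Phi\le\big(d_1\|\beta\|_{L_1}+d_2\|\gamma^2\|_{L_1}\big)\Phi,
\end{equation}
where $\Phi=\sup_t\sum_{k\in\mathbb K}\big(|\Delta i_k(t)|+|\Delta\lambda_k(t)|\big)$ and $d_1,d_2$ are the constants in the statement. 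Under condition (\ref{eq:cond_for_uniqueness}) the bracket is strictly less than $1$, forcing $\Phi=0$, hence $\boldsymbol i\equiv\bar{\boldsymbol i}$, $\boldsymbol\lambda\equiv\bar{\boldsymbol\lambda}$, and a unique optimal control.

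I expect the main obstacle to be the bookkeeping that yields precisely $d_1$ and $d_2$. All $\beta$-weighted contributions---the infection term $\beta k s_k\sum_l q_l i_l$ in the states together with the coupling terms $\beta k\lambda_k\sum_l q_l i_l$ and $\beta q_k\sum_j\lambda_j j s_j$ in the adjoints---must each be split into a $\Delta i$-part and a $\Delta\lambda$-part and bounded with $k\le K_{max}$, $q_k\le q_{M}$, $0\le\lambda_k\le\Lambda$; taking the worst multiplier across these terms is what forces the $\max$ in $d_1=\max\{(\sum_{k\in\mathbb K}k)\Lambda q_{M}+K_{max}\Lambda,\,2K_{max}\}$, the first argument collecting the $\Delta i$ multipliers and the second the $\Delta\lambda$ multipliers. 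Likewise the quadratic control terms $\gamma^2\lambda_k s_k^2/(2c_k)$ and $\gamma^2\lambda_k^2 s_k/(2c_k)$, after splitting and using $c_k\ge c_{m}$, give the factor $(\Lambda/c_{m})\max\{1,\Lambda/2\}$ of $d_2$. Ensuring that the forward integration of the state differences and the backward integration of the adjoint differences are both dominated by the full $L_1$ norms, and that every worst-case constant is captured without double counting, is where the care is required.
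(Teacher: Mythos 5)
Your proposal is correct, and its core---eliminating the control via $u_k^*=\gamma\lambda_k^*s_k^*/(2c_k)$, reducing to a coupled forward--backward system in $(\boldsymbol i,\boldsymbol\lambda)$, and doing the Lipschitz bookkeeping that yields the weight $w(t)=d_1\beta(t)+d_2\gamma^2(t)$---coincides with the paper's proof. Where you diverge is the final mechanism: the paper does not run the contraction itself but casts the optimality system as a two-point boundary value problem $\frac{d}{dt}\boldsymbol\eta=\boldsymbol\psi(t,\boldsymbol\eta)$, $M\boldsymbol\eta(0)+R\boldsymbol\eta(T)=\alpha$ and invokes a quotable uniqueness theorem for such problems (Theorem 1.2 of \cite{ma2002existence}), which requires verifying that $\boldsymbol\psi$ is a Carath\'eodory function and computing the boundary-matrix constant $\Gamma=\max\{\|(M+R)^{-1}M\|,\|(M+R)^{-1}R\|\}$; with $M$ selecting the state components at $t=0$ and $R$ the adjoint components at $t=T$ one gets $M+R=I$ and $\Gamma=1$, so the criterion collapses to $\|w\|_{L_1}<1$, i.e.\ exactly (\ref{eq:cond_for_uniqueness}). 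Your direct sup-norm argument is the elementary specialization of that lemma: because the state differences integrate forward from $0$ and the adjoint differences backward from $T$, both are dominated by the integral over all of $[0,T]$, which is precisely why no factor beyond $\|w\|_{L_1}$ appears---your route makes the role of $\Gamma=1$ transparent and keeps the proof self-contained, while the paper's route buys generality (arbitrary boundary matrices through $\Gamma$) and a shorter write-up at the cost of an external citation. Two shared caveats, equally present in the paper: the argument is a smallness/contraction step rather than a true Gronwall iteration (one cannot iterate forward because the boundary data is split, so strict smallness of $\|w\|_{L_1}$ is genuinely needed, as you use); and since $\Lambda=\max_{k,t}\lambda_k(t)$ enters $d_1,d_2$, the Lipschitz bound holds only on the region where solutions actually live ($i_k,s_k\in[0,1]$ by Lemma \ref{thm:ik_sk_in_0_1}, $0\le\lambda_k\le\Lambda$ by Lemma \ref{thm:adjoint_variables_positive}), so, as in the paper, uniqueness is asserted within that a priori tube.
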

\begin{proof}
There are multiple approaches to show uniqueness of solutions. We have used the techniques in \cite{ma2002existence} instead of those in, for example \cite{kandhway2014run}, because the former leads to more insightful conclusions. The proof is in Appendix \ref{app:proof_uniqueness} of the supplementary material accompanying this paper. 
\end{proof}
\textbf{Remark:} Systems with small $L_1$ norms for $\beta(t)$, $\gamma(t)$ (which depend both on function values and $T$), small values of $\Sigma k, q_{M}, K_{max}$; and large values of cost of application of controls $c_k$, have unique solutions.

\section{A Control Problem With a Fixed Budget Constraint}

\subsection{Problem Formulation and Solution by Pontryagin's Principle}

In many practical scenarios, campaign resources are fixed, \emph{e.g.} political campaigns. For such cases, we modify Problem (\ref{eq:opt_prob}) (where seed vector is a given quantity) to include an explicit budget constraint (Eq. \ref{eq:opt_prob_budget_constraint}) in the following:
\begin{subequations}
\label{eq:opt_prob_budget}
\begin{align}
\underset{\boldsymbol u \in U^{|\mathbb K|}}{\text{~~maximize~~}} ~J & = \sum_{k\in \mathbb K} p_k i_k(T) \label{eq:cost_funtion_budget}\\
\text{subject to:~~} & \text{(\ref{eq:opt_prob_states}) and (\ref{eq:opt_prob_init_cond}),} \nonumber \\
& \int_0^T\sum_{k\in \mathbb K}g_k(u_k(t))dt - B \leq 0. \label{eq:opt_prob_budget_constraint}
\end{align}
\end{subequations}
We cannot use Pontryagin's principle in Problem (\ref{eq:opt_prob_budget}) due to the integral constraint (\ref{eq:opt_prob_budget_constraint}). However, the budget constraint (\ref{eq:opt_prob_budget_constraint}) can be handled by the standard optimization technique of relaxing the inequality constraint into the objective function. Problem (\ref{eq:opt_prob_budget}) can then be re-written as:
\begin{subequations}
\label{eq:opt_prob_budget_relaxed}
\begin{align}
\underset{\boldsymbol u \in U^{|\mathbb K|}} {\text{~~max~~}} ~J & = \sum_{k\in \mathbb K} p_k i_k(T)-\mu \left( \int_0^T\sum_{k\in \mathbb K}g_k(u_k(t))dt - B \right) \label{eq:cost_funtion_budget_relaxed}\\
\text{subject to:~~} & \text{(\ref{eq:opt_prob_states}) and (\ref{eq:opt_prob_init_cond}).} \nonumber
\end{align}
\end{subequations}

Problem (\ref{eq:opt_prob_budget_relaxed}) solves Problem (\ref{eq:opt_prob_budget}) for the value of the multiplier $\mu^*$ for which constraint (\ref{eq:opt_prob_budget_constraint}) is satisfied. Also, for a given value of $\mu$, $\mu B$ is just a constant and can be eliminated from the objective function (\ref{eq:cost_funtion_budget_relaxed}). Pontryagin's Principle applied to Problem (\ref{eq:opt_prob_budget_relaxed}) produces the same equations as Sec. \ref{sec:soln_by_pontryagin} with two differences: the Hamiltonian in Eq. (\ref{eq:hamiltonian}) is replaced by:
\begin{align}
& H(\boldsymbol i(t),\boldsymbol \lambda(t),\boldsymbol u(t))=-\sum_{j\in\mathbb K}\mu g_j(u_j(t)) \nonumber\\
& + \sum_{j\in\mathbb K}\lambda_j(t)\left(\beta(t) j s_j(t) \sum_{l\in \mathbb K}(q_li_l(t)) + \gamma(t)u_j(t)s_j(t) \right). \nonumber
\end{align}
Further, the equations in Hamiltonian maximization condition, (\ref{eq:hamiltonian_max_cond_1}) and (\ref{eq:hamiltonian_max_cond_2}), are replaced by:
\begin{align}
& g_k'(u_k^*(t))=\gamma(t)\lambda_k^*(t)s_k^*(t)/\mu^*, ~k\in\mathbb K, \nonumber \\
\Rightarrow & u_k^*(t)=g_k'^{-1}(\gamma(t)\lambda_k^*(t)s_k^*(t)/\mu^*), ~k\in\mathbb K. \label{eq:hamiltonian_max_cond_budget}
\end{align}
The state and adjoint equations, and the Transversaility conditions are the same as in Sec. \ref{sec:soln_by_pontryagin}.

\subsection{Structural Results}

\begin{lemma}
\label{thm:multiplier_positive}
At the optimum, the multiplier of the relaxed problem (\ref{eq:opt_prob_budget_relaxed}) which solves Problem (\ref{eq:opt_prob_budget}), $\mu^*\geq 0$ and constraint (\ref{eq:opt_prob_budget_constraint}) is satisfied with equality.\footnote{\emph{Proof:} From the standard optimization theory we know that relaxing the inequality constraint (\ref{eq:opt_prob_budget_constraint}) leads to the value of multiplier $\mu^*$ at the optimum being non-negative, $\mu^*\geq 0$. At optimum, the whole budget $B$ is used; if this is not the case, we can increase one or more of the $u_k$'s in (\ref{eq:opt_prob_budget_constraint}), so that the constraint is still satisfied. Doing so will increase $i_k(T)$'s (we can conclude this from state equations (\ref{eq:opt_prob_states})) and hence the value of the objective function (\ref{eq:cost_funtion_budget}). Hence, the budget $B$ is not underutilized at the optimum.
}
\end{lemma}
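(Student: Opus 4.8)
The plan is to establish the two assertions in turn. For the non-negativity of $\mu^*$, I would appeal to the standard Lagrangian/KKT theory for inequality-constrained maximization. Writing $h(\boldsymbol u)\triangleq \int_0^T\sum_{k\in\mathbb K}g_k(u_k(t))\,dt - B$ for the constraint function, Problem (\ref{eq:opt_prob_budget}) maximizes $J$ subject to $h(\boldsymbol u)\leq 0$, while the relaxed Problem (\ref{eq:opt_prob_budget_relaxed}) has objective $J-\mu h$. The KKT necessary conditions for a maximum under a ``$\leq 0$'' constraint force the associated multiplier to be non-negative; equivalently, only for $\mu\geq 0$ does $J-\mu h$ furnish a valid upper bound on the constrained optimum (weak duality), so the multiplier $\mu^*$ for which the relaxed problem reproduces the constrained optimum must satisfy $\mu^*\geq 0$. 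This is exactly the level of generality of the footnote sketch.

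For the equality (complementary slackness), I would argue by contradiction using monotonicity of the reward in the controls. Suppose at the optimum $\boldsymbol u^*$ the budget is underutilized, $h(\boldsymbol u^*)<0$. Since each $g_k$ is continuous with $g_k(0)=0$ (Assumption \ref{assumption:gk_increasing}), I can perturb $\boldsymbol u^*$ by adding a small non-negative increment to a single control $u_{k_0}$ on a subinterval; by continuity of $h$ the perturbed control remains feasible for a sufficiently small increment. The key point is that this strictly increases $J=\sum_{k\in\mathbb K}p_k i_k(T)$. Indeed, the control term $\gamma(t)u_k(t)s_k(t)$ in (\ref{eq:opt_prob_states}) is non-negative because $\gamma(t)\geq 0$ and, by Lemma \ref{thm:ik_sk_in_0_1}, $s_k(t)=1-i_k(t)\in[0,1]$; moreover every coupling coefficient in (\ref{eq:opt_prob_states}) (entering through $\sum_{l\in\mathbb K}q_l i_l(t)$) is non-negative. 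Hence raising any control can only increase each $i_k(t)$ pointwise and strictly increases $i_{k_0}(T)$, yielding a strictly larger $J$ at no extra infeasibility---contradicting the optimality of $\boldsymbol u^*$. Therefore $h(\boldsymbol u^*)=0$, i.e. the whole budget is used.

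The main obstacle is making the monotonicity of $i_k(T)$ with respect to the controls rigorous, since (\ref{eq:opt_prob_states}) is a coupled nonlinear system. The crux is to observe that this system is quasi-monotone (cooperative): increasing any $u_k$ or any $i_l$ never decreases any right-hand side, because all coupling and control coefficients are non-negative on the invariant region $i_k,s_k\in[0,1]$ guaranteed by Lemma \ref{thm:ik_sk_in_0_1}. With this structure, I would invoke a comparison theorem for cooperative ODE systems (Kamke-type), or equivalently integrate the sensitivity/variational equations for $\partial i_k(T)/\partial u_{k_0}$ and apply a Gr\"onwall-type inequality to conclude these sensitivities are non-negative and strictly positive on a set of positive measure. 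Once this comparison principle is in place, the contradiction of the previous paragraph is immediate, and non-negativity of $\mu^*$ together with the established equality completes the proof.
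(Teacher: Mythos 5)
Your proposal is correct and follows essentially the same route as the paper's footnote proof: non-negativity of $\mu^*$ from standard Lagrangian/KKT theory for a relaxed ``$\leq 0$'' constraint, and equality by the contradiction that an underutilized budget permits increasing some $u_{k_0}$, which raises the $i_k(T)$'s via the non-negative control term $\gamma(t)u_k(t)s_k(t)$ in (\ref{eq:opt_prob_states}) and hence strictly increases (\ref{eq:cost_funtion_budget}). The only difference is that you make rigorous the step the paper merely asserts---that $i_k(T)$ is monotone in the controls---by noting the system is cooperative and invoking a Kamke-type comparison theorem (or sensitivity equations with Gr\"onwall), which is a sound and welcome strengthening of the same argument rather than a different approach.
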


Due to Lemma \ref{thm:multiplier_positive}, Lemma \ref{thm:adjoint_variables_positive} and Theorem \ref{thm:controls_structure} are valid even for the solution to the optimal control problem (\ref{eq:opt_prob_budget}) (replace $g_k(u_k^*(t))$, $g_k'(u_k^*(t))$ and $g_k''(u_k^*(t))$ by $\mu^* g_k(u_k^*(t))$, $\mu^* g_k'(u_k^*(t))$ and $\mu^* g_k''(u_k^*(t))$ respectively in the proofs).

\subsection{Numerical Solution}

To solve Problem (\ref{eq:opt_prob_budget_relaxed}) (for which $\mu^*$ satisfies constraint (\ref{eq:opt_prob_budget_constraint})), which in turn solves Problem (\ref{eq:opt_prob_budget}), we modify the standard forward-backward sweep algorithm. The standard algorithm cannot handle the budget constraint. Our approach is to adjust the value of the multiplier $\mu^*$, using the bisection algorithm, till constraint (\ref{eq:opt_prob_budget_constraint}) is satisfied with equality (due to Lemma \ref{thm:multiplier_positive}). We sketch the procedure in Algorithm \ref{alg:fw_back_sweep_modified}. The values $\mu_{low}=10^{-3},~\mu_{high}=100$ and $N_{sweep}=30$ were used for all computations in the results section.

\begin{algorithm}
\small
\caption{Modified forward-backward sweep algorithm for Problem (\ref{eq:opt_prob_budget_relaxed}) (for which $\mu^*$ satisfies constraint (\ref{eq:opt_prob_budget_constraint}) with equality).}
\label{alg:fw_back_sweep_modified}
\begin{algorithmic}[1]
	\REQUIRE $\mu_{low},~\mu_{high},~B,~N_{sweep}$; $i_{0k},~p_k,~q_k,~\forall k\in\mathbb K$; $T$ and $\beta(t),\gamma(t)~\forall t\in[0,T]$.
	\ENSURE The optimal control signals $u_k^*(t)$ and value of the multiplier $\mu^*$ for which (\ref{eq:opt_prob_budget_constraint}) is satisfied with equality.
	\STATE $B_{th}\leftarrow \min\{10^{-3}\times B,10^{-6}\}$
	\REPEAT
		\STATE $\mu^* \leftarrow (\mu_{low} + \mu_{high})/2$.
		\STATE Calculate $i_k^*,\lambda_k^*$ and $u_k^*$ using Algorithm \ref{alg:fw_back_sweep}, however, replacing Eq. (\ref{eq:hamiltonian_max_cond_2}) by Eq. (\ref{eq:hamiltonian_max_cond_budget}).
		\STATE $r_{\mu^*} \leftarrow \int_0^T\sum_{k\in \mathbb K}g_k(u_k^*(t))dt$ \COMMENT{Resource used.}
		\IF{$r_{\mu^*}>B$}
			\STATE $\mu_{low} \leftarrow \mu^*$
		\ENDIF
		\IF{$r_{\mu^*}<B$}
			\STATE $\mu_{high} \leftarrow \mu^*$
		\ENDIF
	\UNTIL $\big|r_{\mu^*} - B\big|<B_{th}$.
\end{algorithmic}
\normalsize
\end{algorithm}

\section{Synthetic Networks and Model Parameters}

\emph{Networks:} We present results by using degree distributions from two synthetic networks (in Sec. \ref{sec:results}) and a real network (in Sec. \ref{sec:results_real_nw}) in this study. The first synthetic network is an Erd\H os-R\'enyi network which has degree distribution $p_k=e^{-\lambda}\lambda^k/k!,~k\in\mathbb K$. We choose the minimum and maximum degrees in $\mathbb K^{ER}$ as $K_{min}^{ER}=13$ and $K_{max}^{ER}=54$, so that the truncated degrees have very less cumulative probability. The factor $\lambda=33.45$ is the same as the mean degree $\bar k^{ER}$ for this network.

The second synthetic network is scale-free which has a power law degree distribution, $p_k=\omega k^{-\alpha},~k\in\mathbb K$. Here $\omega$ normalizes the distribution to $1$ and $\alpha$ is the power law exponent. Most real networks---including the internet, the world wide web, and more importantly the social networks---have power law exponent lying between $2$ and $3$ \cite{newman2009networks}. We have chosen $\alpha=2$ for our scale-free network. The minimum and maximum degrees in the power law distribution are chosen as $K_{min}^{PL2}=14$, $K_{max}^{PL2}=120$. Degrees are adjusted so that both PL2 and ER network above have almost the same mean degrees. The mean degree for PL2, $\bar k^{PL2}=33.29$. For networks of the same size, having equal mean degree means none of them has any statistical advantage in information dissemination due to more number of links. As suggested by all the problem formulations in this paper, we only need degree distributions of nodes in the networks for presenting numerical results. Maximum degree in the scale free network was chosen following the Dunbar's number which states that most humans only maintain $100$ to $230$ stable relationships at a time.

\emph{Default Model Parameters:} In the SI process, the deadline $T$ and the spreading rate $\beta(t)$ determine the extent of information spread in the uncontrolled system. Fixing one and increasing the other has the same qualitative effect. We choose to fix the deadline at $T=1$ time units and vary the spreading rate whenever necessary. When seed is not an optimization variable the initial fraction of infected nodes in each degree class, $i_{0k}=0.01,~\forall k\in\mathbb K$. In other cases seed budget, $B_{i_0}=0.01$. For the plots studying the effect of system parameters, we have used $\beta(t)=\beta=0.07$. Such a choice (with $T=1$) leads to small to moderate information spread (quantified by $i(T)=\sum_{k\in\mathbb K} p_ki_k(T)$) in the uncontrolled system in both the networks ($i(T)_{ER}=0.095$ and $i(T)_{PL2}=0.149$). In a practical scenario, such a case will call for campaigning. Throughout this paper $\gamma(t)=10\times\beta(t).$

To demonstrate the results, the instantaneous cost of application of control is chosen to be $g_k(u_k(t))=bu_k^2(t)p_k$ in (\ref{eq:cost_funtion}). The control strength $u_k(t)$ incurs a cost $u_k^2(t)$. Since a degree class with more nodes will consume more resource, so weighting factor $p_k$ is also present. The parameter $b$ captures the relative importance of reward due to information spread (given by $\sum_{k\in\mathbb K} p_ki_k(T)$) and cost of application of control in degree class $k$ (given by $u_k^2(t)p_k$). For demonstrating results we set $b=25$.

For the problem involving the budget constraint (Problem (\ref{eq:opt_prob_budget})), we assume the same cost structure with $g_k(u_k(t))=bu_k^2(t)p_k$ and $b=25$, and the value of budget $B=0.1$ in the budget constraint (\ref{eq:opt_prob_budget_constraint}).

\subsection{Accuracy of Degree Based Compartmental Model for Synthetic Networks}

\begin{figure}[ht!]
\centering
\includegraphics[width=48mm]{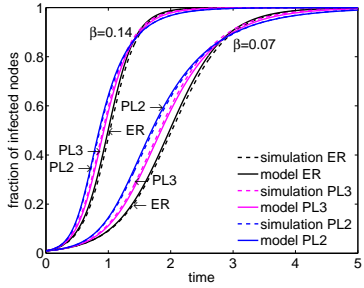}
\caption{\small{The fraction of infected population (measured by $\sum_{k\in\mathbb K} i_k(t)p_k$) vs. time, produced by the model and simulations. Parameter values: $i_{0k}=0.01~\forall k\in\mathbb K,~\beta=0.07$ and $0.14.$ PL3 network has power law exponent $\alpha=3$, $K_{min}^{PL3}=20, K_{max}^{PL3}=120$ which leads to the mean degree of $33.58$.}}
\label{fig:sim_vs_model_SI_config_model}
\end{figure}

The degree based compartmental model for SI epidemic is very accurate for configuration model networks. This is confirmed by the close agreement between the system evolution, measured by the fraction of infected population, generated by the model and simulation for two values of the spreading rates $\beta(t)=\beta$, for the three networks ER, PL3 and PL2, in Fig. \ref{fig:sim_vs_model_SI_config_model}. PL3 network has power law exponent $\alpha=3$, $K_{min}^{PL3}=20, K_{max}^{PL3}=120$ which leads to the mean degree of $33.58$. We have used the uncontrolled SI model. The simulation results are averaged over $20$ runs for all six curves (corresponding to `simulation') in the figure.

For each run, a different configuration model network was drawn from the degree distribution of the network under consideration, and a different set of initial nodes were selected as seeds. Sizes of the networks were $10^4$ nodes in all the cases. To construct configuration model networks of size $N$, we follow the procedure in \cite[Sec. 13.2]{newman2009networks}\footnote{The procedure is as follows: The degree distribution, $p_k$, is sampled $N$ times to get degrees of $N$ nodes. Each node is assumed to have half edges equal to their degree. Two half edges are randomly paired to create an edge until all half edges are exhausted. If last half edge is left unpaired, it is ignored. This procedure may lead to multiple edges between two nodes and self loops. However, the fractions of these edges are low (and approaches $0$ as $N\rightarrow\infty$), and hence not an issue \cite[Sec. 13.2]{newman2009networks}.}. Since the degree distribution for a particular network is fixed and uniform seeding is assumed, the model predicts the same trajectory for all 20 runs; the plot corresponding to `simulation' is the mean of the 20 runs.

The simulation results for degree based compartmental model of SI epidemic on a real network is shown later in Sec. \ref{sec:results_real_nw}.

\section{Results}
\label{sec:results}

We will see in this section that the importance of degree classes in the optimal campaigning strategy depends not only on the network topology but also on the system parameters such as the scarcity/abundance of resources and spreading rate. For scarce resource case, the optimal strategy targets highest degrees in the PL2 network but medium degrees in the ER network. When resource is abundant or spreading rate high, lower degrees (which are disadvantaged in receiving the messages) are also directly targeted. In the joint problem, optimal seeding strategy shows a similar behavior.

\subsection{Heuristic Controls}
\label{sec:heuristic_controls}
For Problem (\ref{eq:opt_prob}), we compare the performance of the optimal control strategy with two other strategies. \emph{The first one} is the best `constant or static control'. It is the control which is constant over time and is applied to all degree classes. The strength of the static control is chosen so that it maximizes the reward function (\ref{eq:cost_funtion}) (when seeds are uniformly selected).

\emph{The second one} is the best `two-stage control'. It is constant in $[0, T/2]$ and $0$ in $(T/2,T]$. Again, the strength of non-zero part is chosen such that (\ref{eq:cost_funtion}) is maximized (under uniform seeding) and the same control is applied to all degree classes. From Theorem \ref{thm:controls_structure}(i) the best time to apply the controls is the initial period of the campaign, which motivates such a control. Both the heuristic controls are computed numerically by an optimization routine.

For positive arguments, the cost function is strictly convex-increasing in nature, more control strength means superlinear resource expenditure. So, there is a tradeoff between the two heuristic strategies: the static applies milder control strength but for the entire duration $[0,T]$, while the two-stage control applies stronger control but only for half the duration (although during important times) for the same amount of resource expenditure. Note that the best two-stage control is a simple \emph{dynamic} (time varying) control.

A minor difference exists in Problem (\ref{eq:opt_prob_budget}). Due to the fixed budget constraint, the static and two-stage strategies can be uniquely calculated. The same are used in Sec. \ref{sec:result_effect_parameter_budget}.

\subsection{Shapes of Controls and Importance of Degree classes (Uniform Seeding)}
\label{sec:result_shape_control_imp_deg_classes}

\begin{figure*}[ht!]
\centering
\subfloat[ER, $b=25, \beta=0.07$ \label{fig:control_norm_res_b25_ER}]{
\includegraphics[width=46mm]{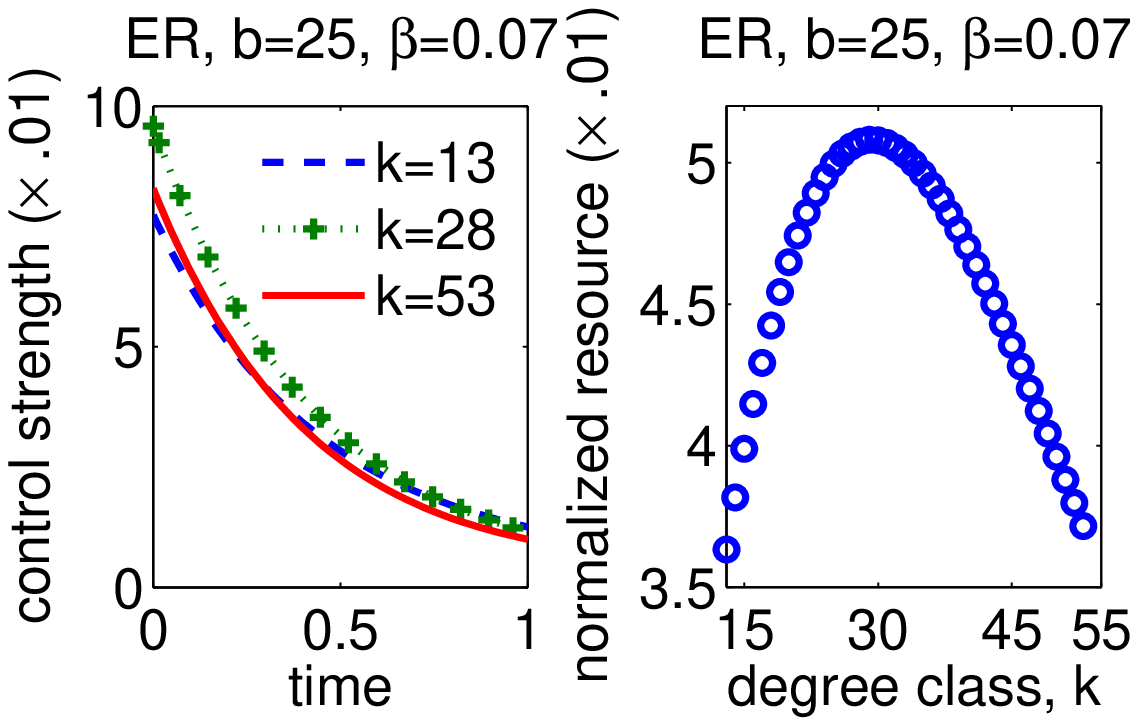} }
\hspace{1em}
\subfloat[ER, $b=0.2, \beta=0.07$ \label{fig:control_norm_res_b_pt2_ER}]{
\includegraphics[width=46mm]{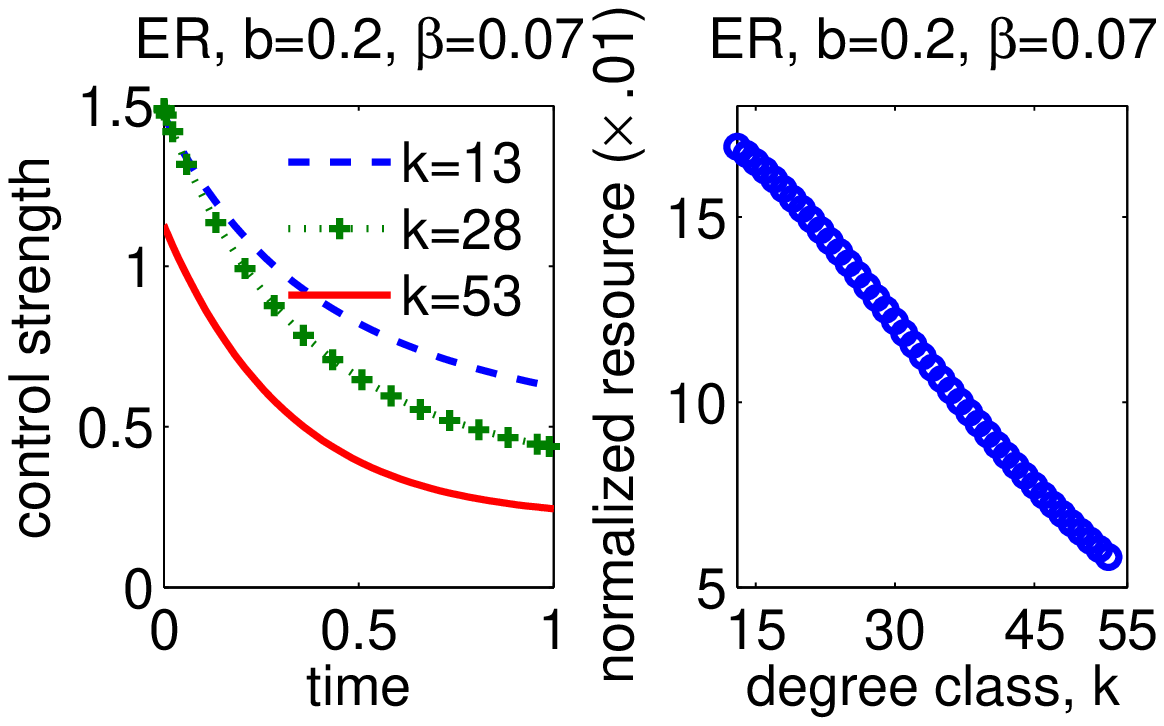} }
\hspace{1em}
\subfloat[ER, $b=25, \beta=0.21$ \label{fig:control_norm_res_b25_beta_pt18_ER}]{
\includegraphics[width=46mm]{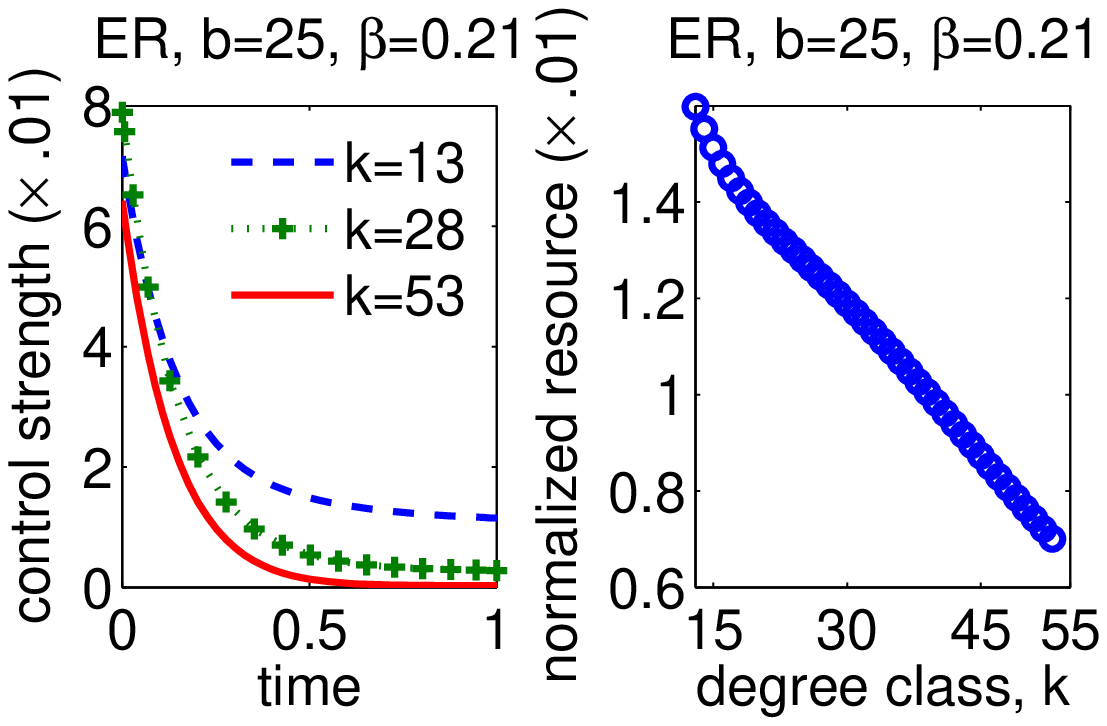} }

\subfloat[PL2, $b=25, \beta=0.07$ \label{fig:control_norm_res_b25_PL2}]{
\includegraphics[width=46mm]{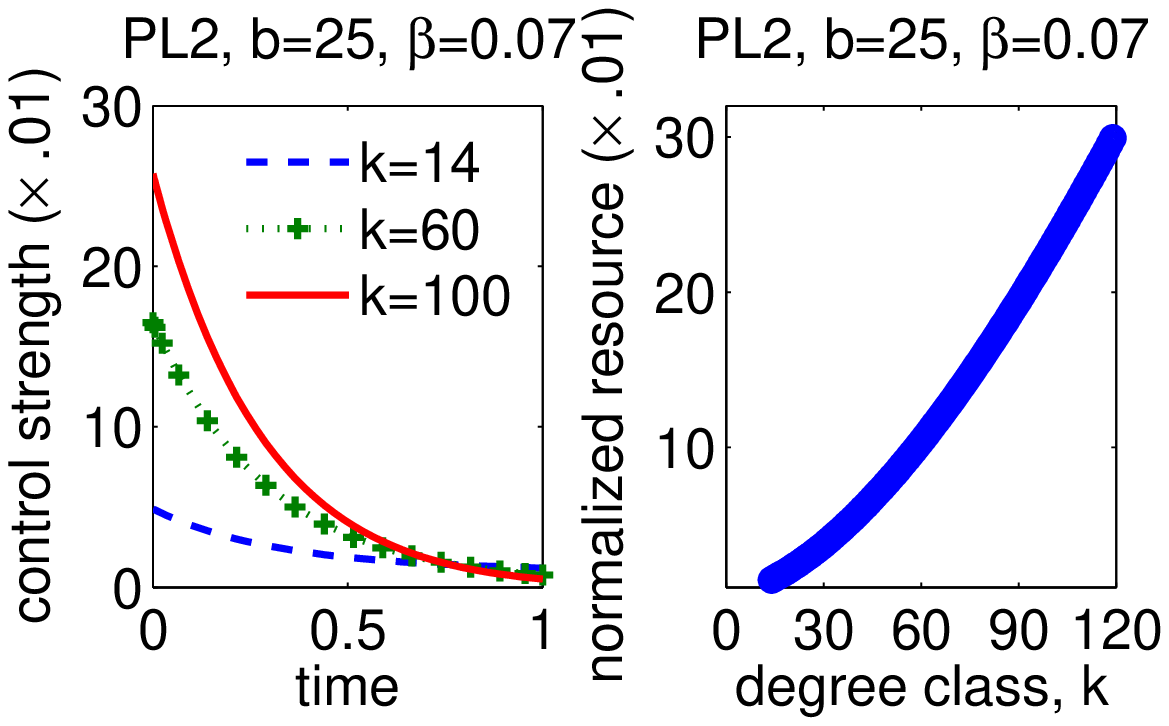} }
\hspace{1em}
\subfloat[PL2, $b=0.2, \beta=0.07$ \label{fig:control_norm_res_b_pt2_PL2}]{
\includegraphics[width=46mm]{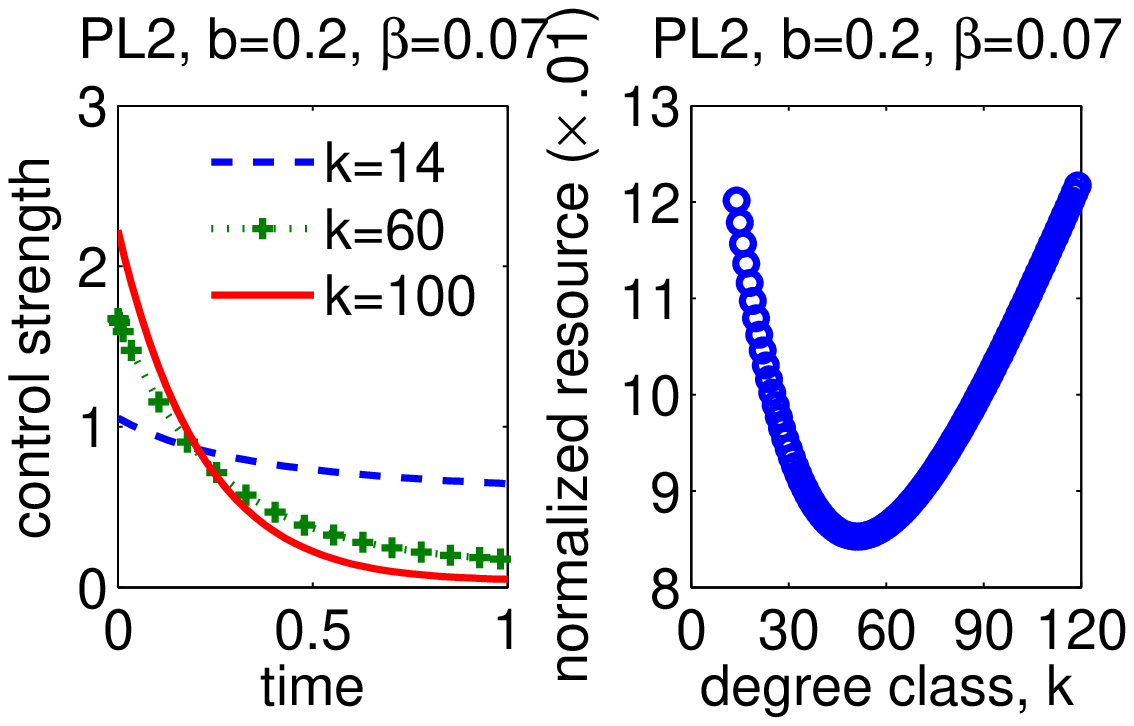} }
\hspace{1em}
\subfloat[PL2, $b=25, \beta=0.21$ \label{fig:control_norm_res_b25_beta_pt21_PL2}]{
\includegraphics[width=46mm]{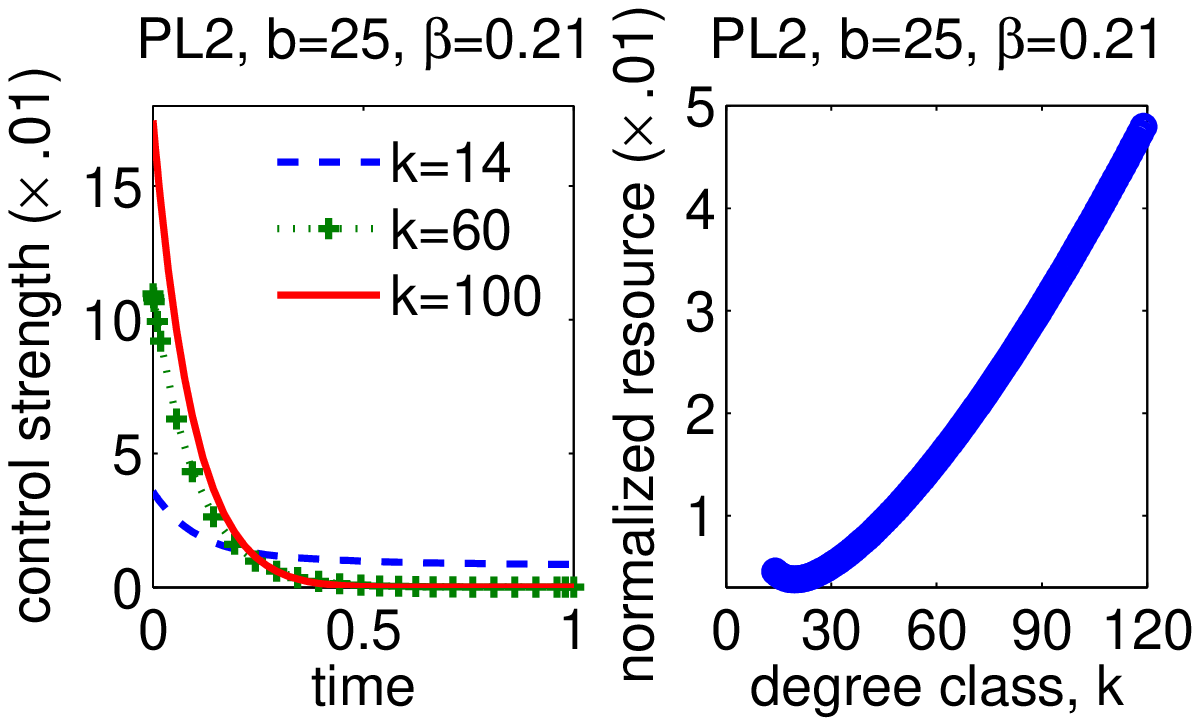} }
\caption{\small{Controls and normalized resource allocation (defined in Eq. (\ref{eq:norm_resource})) for $i_{0k}=i_0=0.01,~\forall k\in\mathbb K$, $\gamma(t)=10\times\beta(t)$.}}
\label{fig:control_norm_res_b25}
\end{figure*}

\begin{figure*}[ht!]
\centering
\subfloat[PL2, increasing $\beta(t)$, $b=25$ \label{fig:control_norm_res_beta1_b25_PL2}]{
\includegraphics[width=46mm]{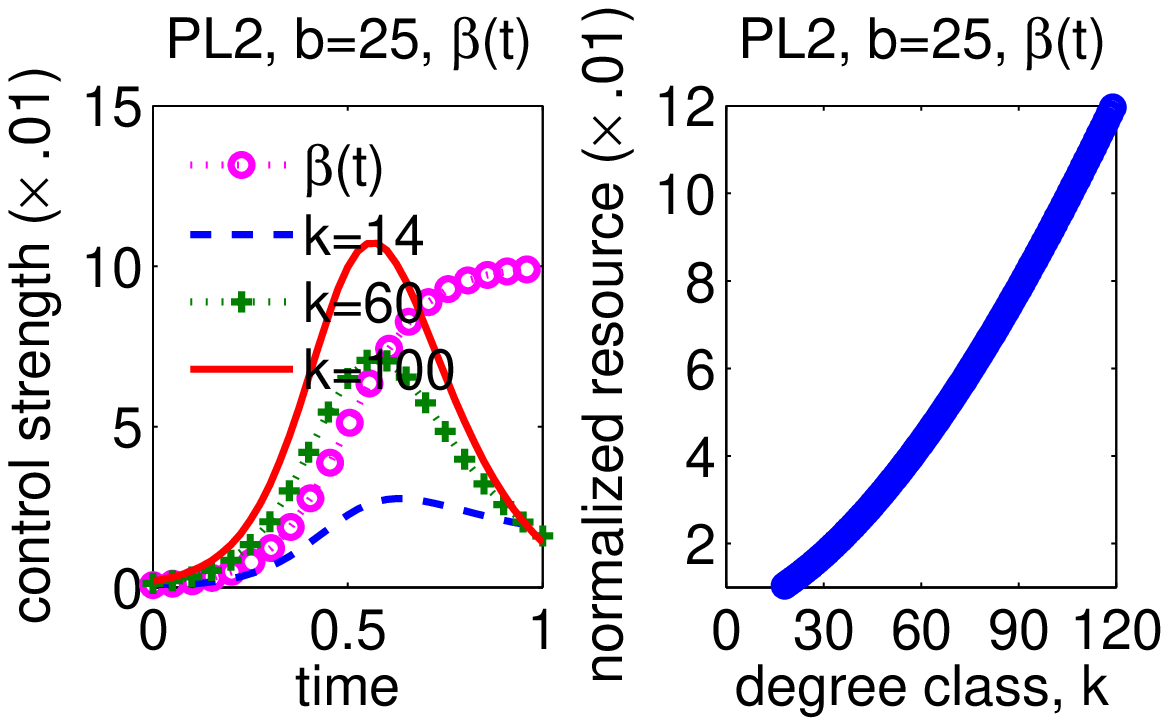} }
\hspace{1em}
\subfloat[PL2, increasing $\beta(t)$, $b=0.2$ \label{fig:control_norm_res_beta1_b_pt2_PL2}]{
\includegraphics[width=46mm]{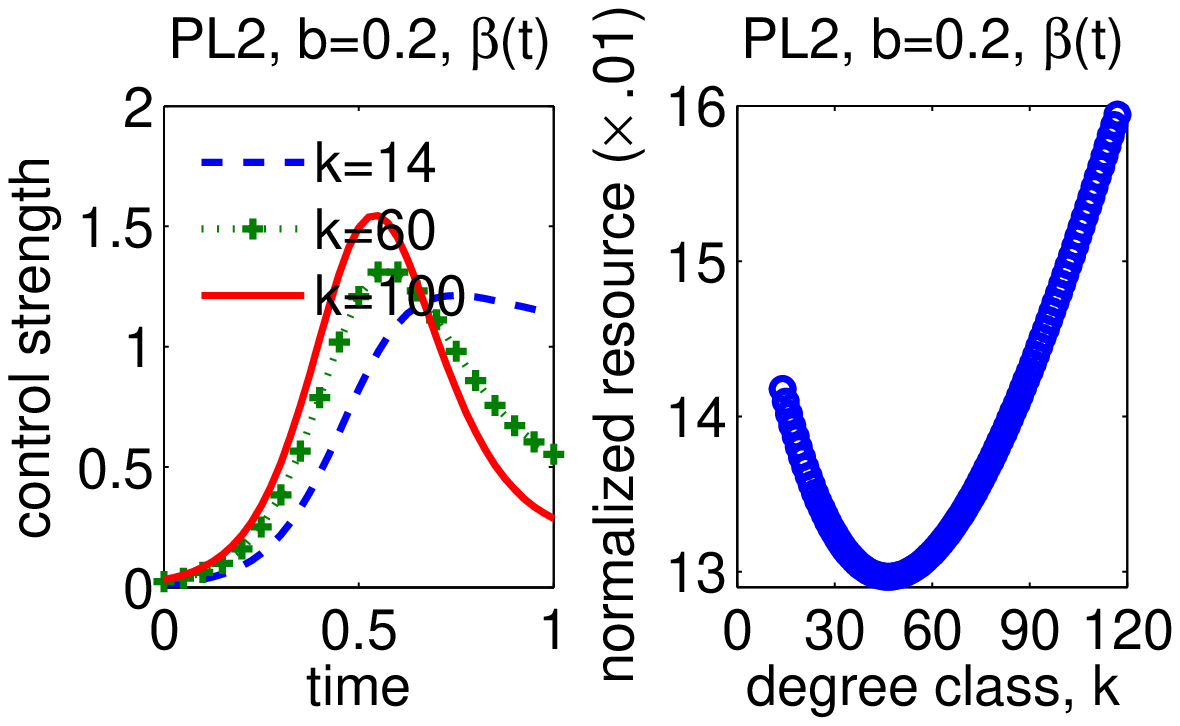} }
\hspace{1em}
\subfloat[PL2, increasing $3\times\beta(t)$, $b=25$ \label{fig:control_norm_res_b25_3beta1_PL2}]{
\includegraphics[width=46mm]{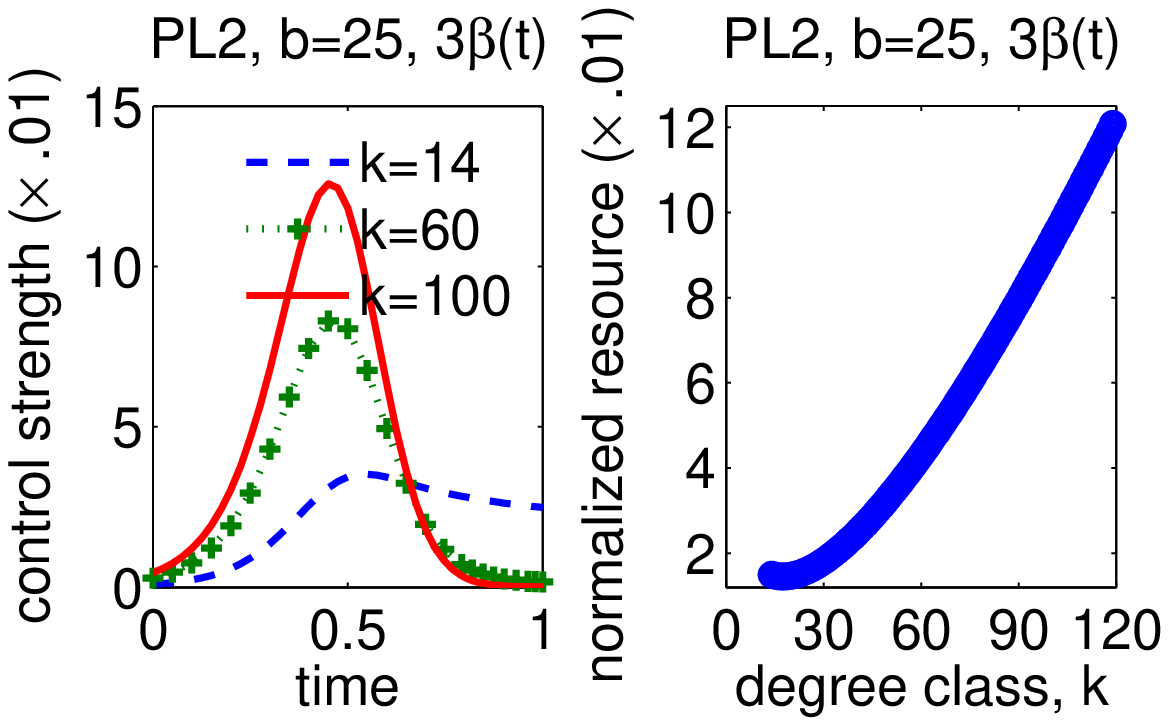} }
\caption{\small{Controls and normalized resource allocation for increasing $\beta(t)$ (as shown above in Fig. \ref{fig:control_norm_res_beta1_b25_PL2}), $i_{0k}=i_0=0.01,~\forall k$, $\gamma(t)=10\times\beta(t)$.}}
\label{fig:control_norm_res_beta1}
\end{figure*}

Figs. \ref{fig:control_norm_res_b25_ER} and \ref{fig:control_norm_res_b25_PL2} (left panels) show the shapes of control signals, $u_k(t)$ for ER and PL2 networks for three representative degree classes for $\beta(t)=\beta=0.07, \gamma(t)=10\times \beta(t)$. They are solutions to Problem (\ref{eq:opt_prob}) when seeds are uniformly selected from the population and are not optimization parameters, \emph{i.e.} $i_{0k}=i_0,~\forall k\in\mathbb K$. The figures show that the controls are non-increasing functions of time (Theorem \ref{thm:controls_structure}(i)). Such a behavior is expected because early infection enhances further information spread by susceptible-infected epidemic contact during rest of the campaign period. Also, for the cost structure, $g_k(u_k(t))=bu_k^2(t)p_k$, in (\ref{eq:cost_funtion}) and constant $\beta(t), \gamma(t)$ the controls are convex functions of time (Theorem \ref{thm:controls_structure}(ii)), which is also confirmed by the figures.

The right panels of Figs. \ref{fig:control_norm_res_b25_ER} and \ref{fig:control_norm_res_b25_PL2} shows the normalized resource allocated to degree class $k$ for the whole campaign period for the ER and PL2 networks considered in this study. Normalized resource is calculated as:
\begin{align}
r^{norm}_k=\frac{1}{p_k}\int_0^Tg_k(u_k(t))dt=b\int_0^Tu_k^2(t)dt. \label{eq:norm_resource}
\end{align}
Note that $r^{norm}_k$ represents per capita resource allocated to each node in the degree class $k$ during campaigning and thus is a proxy for the importance of nodes in the $k$th degree class in information dissemination.

As seen from the normalized resource allocation plots in Figs. \ref{fig:control_norm_res_b25_ER} and \ref{fig:control_norm_res_b25_PL2}, for $b=25$, the \emph{heterogeneous} scale-free PL2 network has different allocation from the \emph{homogeneous} ER network. For the scale-free network, higher degree classes get more per capita resource than lower degree classes; and in ER, medium degree nodes receive most per capita allocation of the campaigning resource. We discuss this in the following.

Direct recruitment balances two things---it targets the degree classes which will lead to further information spread and it targets susceptible nodes which are at a disadvantage in receiving the message from epidemic spreading and directly transfers them to the infected class (to increase the net fraction of infected nodes at the deadline, thus increasing the reward). Scale-free networks have a heavy tail, meaning that, there are sizable numbers/fractions of nodes with high degrees. These nodes have disproportionate advantage in spreading the message due to their large degrees and are often termed as hubs. The optimal strategy allocates them more per capita resource. Targeting them early in the campaign leads to larger diffusion of the message due to more susceptible-infected epidemic contacts during the campaign period due to their large degree.

On the other hand, in the ER network, node degree is concentrated tightly around the mean. Thus, higher degree nodes do not have a significant advantage over other nodes in spreading the message. Due to their larger degree they will any way receive the message so direct recruitment targets medium degree nodes. In the ER network, medium degree nodes are decent spreaders and will indirectly transfer the message to high degree nodes.  In addition, their direct recruitment leads to increased fraction of infected population at the deadline (due to direct transfer of susceptible nodes to the infected class). Low degrees are not targeted because disadvantage due to less spreading is not offset by the advantage due to their direct recruitment, as is the case with medium degrees.

The tradeoff between targeting better spreaders and nodes which are at a disadvantage in receiving the message becomes clearer when the resource becomes cheap (and hence is abundant) which allows us to reach greater fraction of population. Shapes of controls and per capita resource allocated to various degree classes for the case when $b=0.2$ are shown in Figs. \ref{fig:control_norm_res_b_pt2_ER} and \ref{fig:control_norm_res_b_pt2_PL2}. For the heterogeneous PL2 network, low degrees are given more importance than medium degrees. In the homogeneous ER network, low degrees are given most importance.

This behavior is also seen in the ER network (and to a very small extent in the PL2 network) when spreading rate is increased (Figs. \ref{fig:control_norm_res_b25_beta_pt18_ER} and  \ref{fig:control_norm_res_b25_beta_pt21_PL2}). Similar to the above, high spreading rate allows us to reach large fraction of population, so the optimal strategy targets the disadvantaged nodes.

\emph{Time varying $\beta(t),\gamma(t)$:} It is expected that when effectiveness of controls $\gamma(t)$ is a time varying quantity, more resource will be allocated when it is stronger. For a spreading rate profile $\beta(t)$ which varies as an S shaped sigmoid function (shown in the left panel of Fig. \ref{fig:control_norm_res_beta1_b25_PL2}) and $\gamma(t)=10\times\beta(t)$, the controls and normalized resource allocations for $b=25$ and $0.2$ are shown in Fig. \ref{fig:control_norm_res_beta1} for PL2 network. The controls still try to infect nodes early; however, they wait till $\gamma(t)$ becomes strong enough. This leads to more efficient utilization of resources. Note that qualitatively, the importance of degree classes are same as in the case of constant $\beta(t),\gamma(t)$. For brevity we have omitted the plots for the ER network.

\subsection{Joint Optimization of Seed and Resource Allocation}
\label{sec:result_joint_seed_res_alloc}

\begin{figure}[ht!]
\centering
\subfloat[ER, $B_{i_0}=0.01$, $\beta=0.07$ \label{fig:seed_norm_res_i0_pt01_ER}]{
\includegraphics[width=41.5mm]{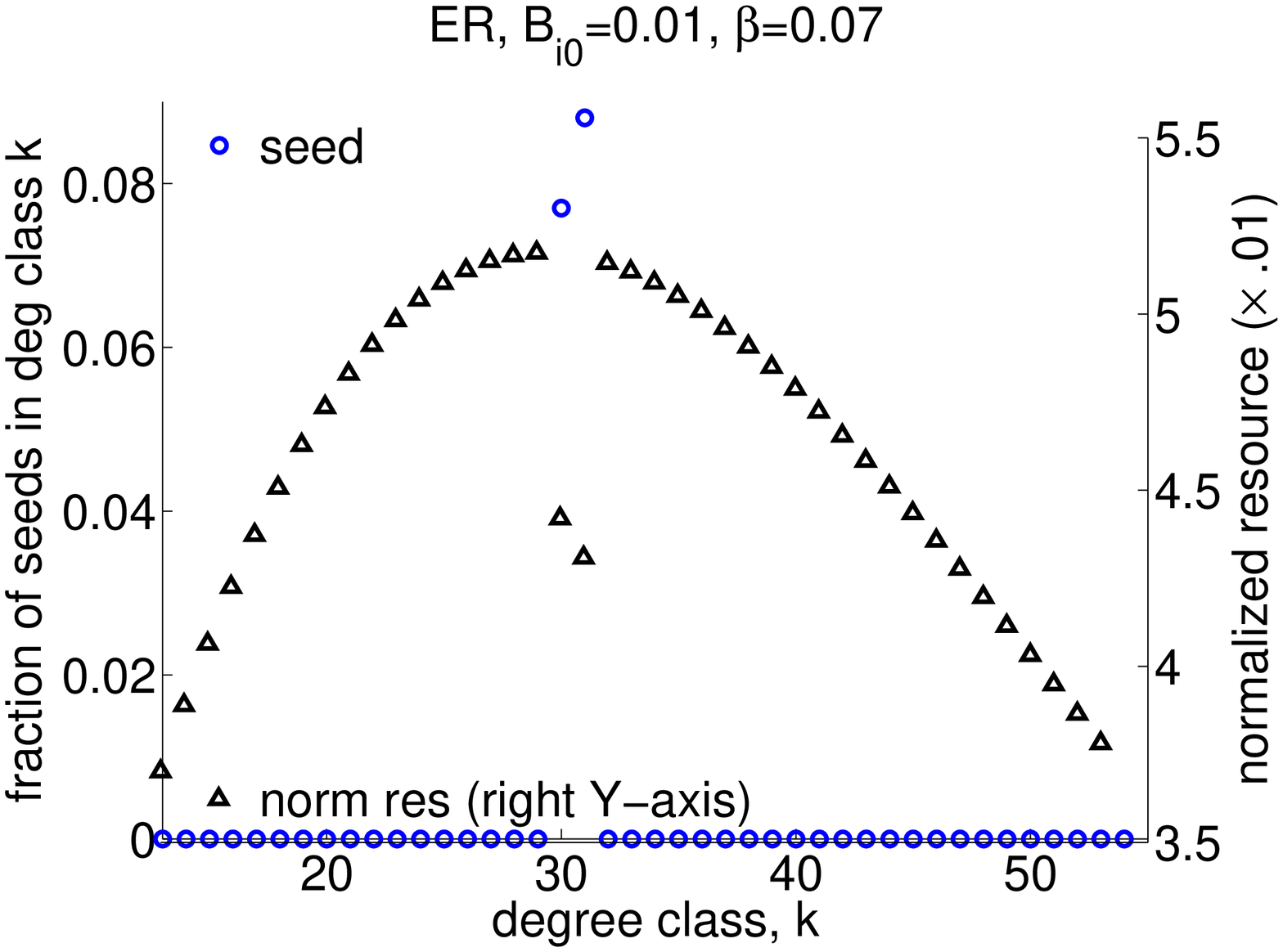} }
\hfill
\subfloat[PL2, $B_{i_0}=0.01$, $\beta=0.07$ \label{fig:seed_norm_res_i0_pt01_PL2}]{
\includegraphics[width=41.5mm]{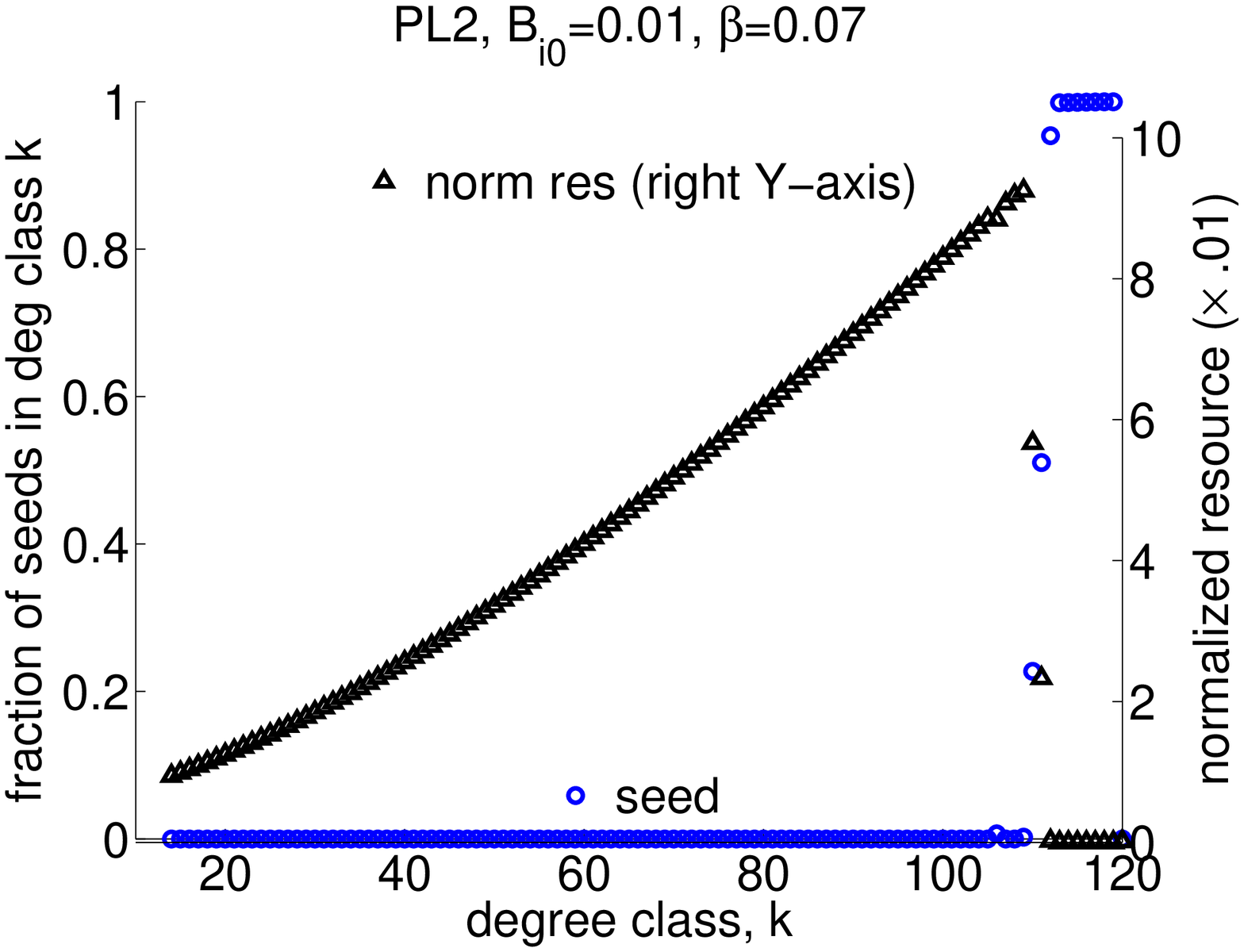} }

\subfloat[ER, $B_{i_0}=0.5$, $\beta=0.07$ \label{fig:seed_norm_res_i0_pt5_ER}]{
\includegraphics[width=41.5mm]{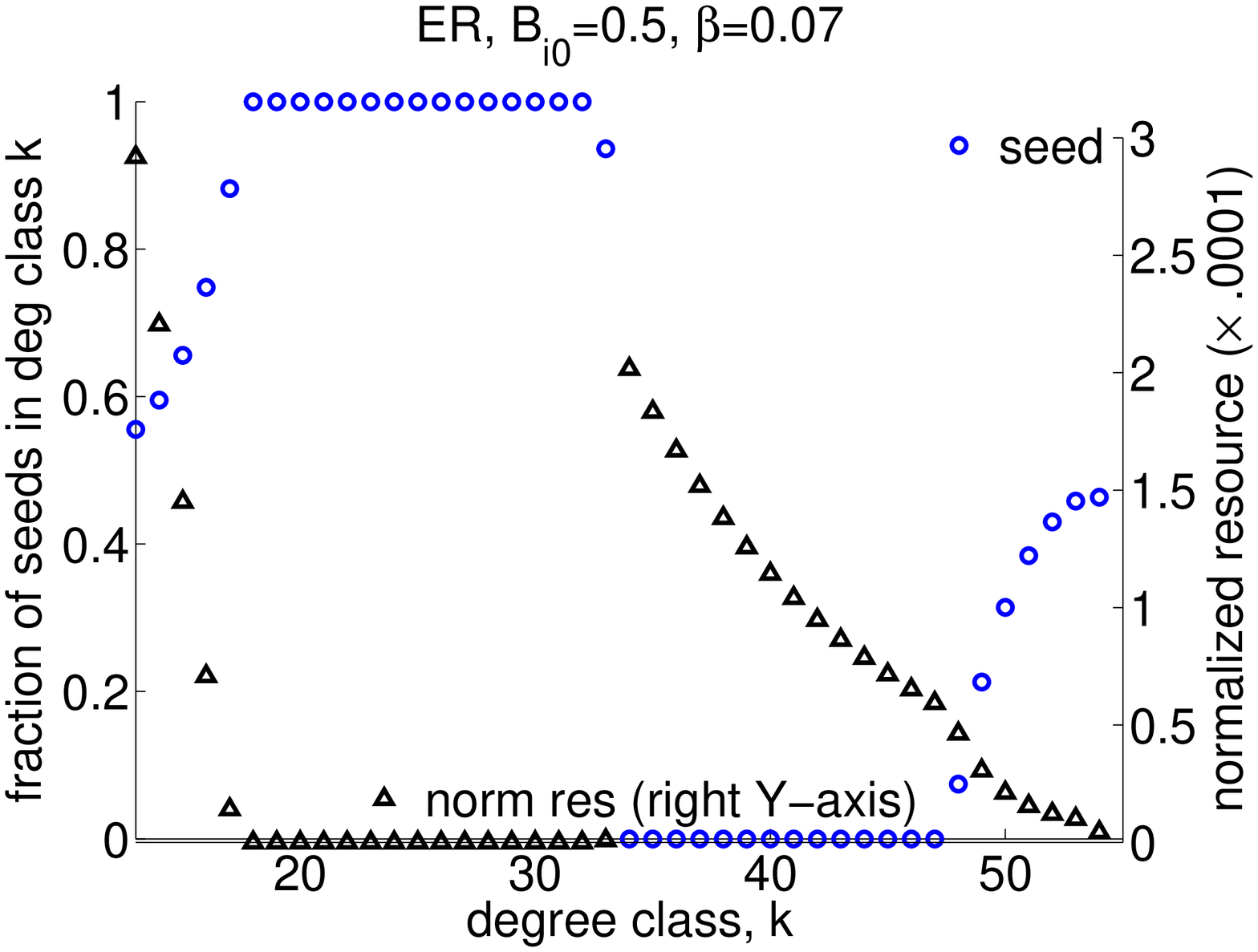} }
\hfill
\subfloat[PL2, $B_{i_0}=0.5$, $\beta=0.07$ \label{fig:seed_norm_res_i0_pt5_PL2}]{
\includegraphics[width=41.5mm]{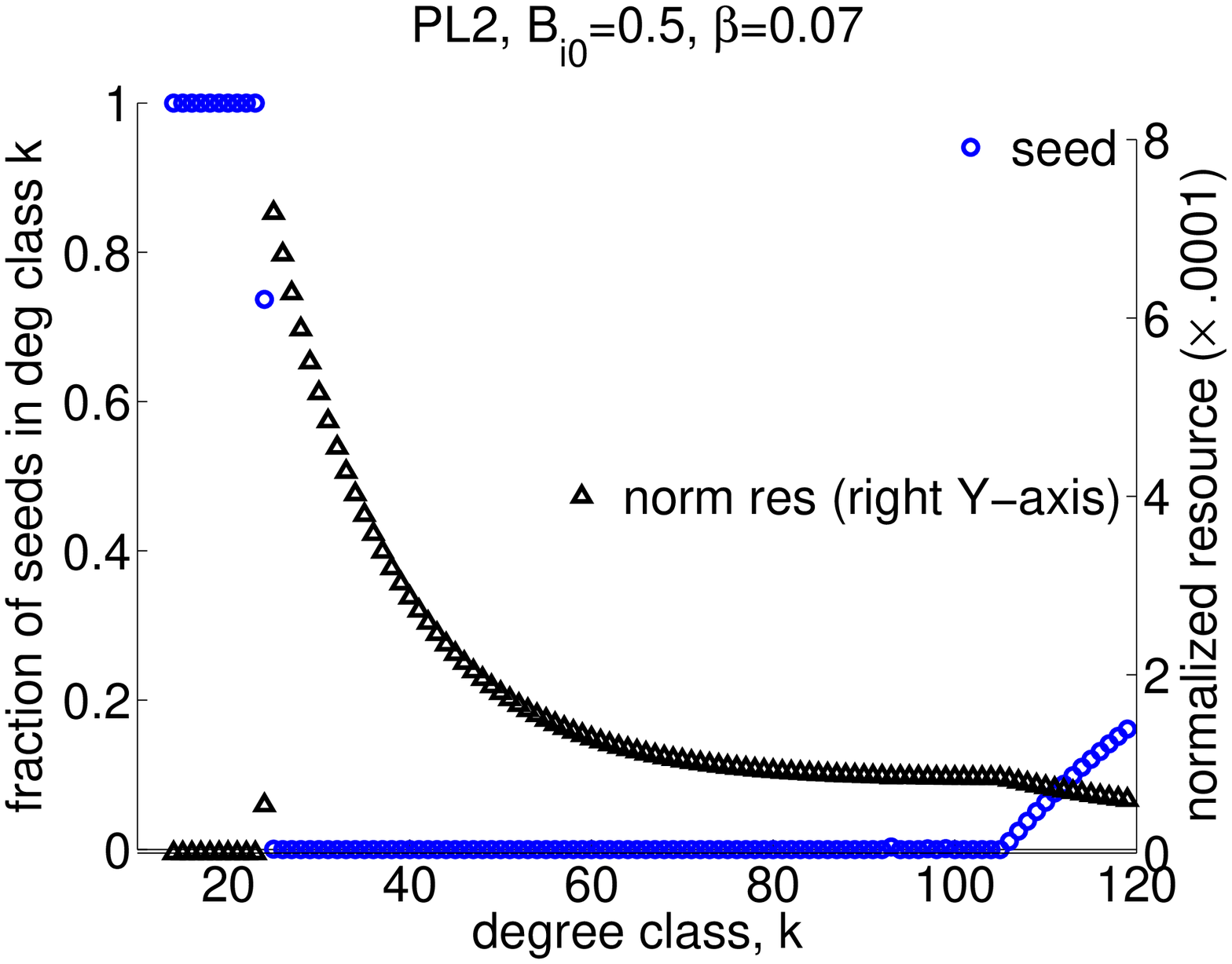} }

\subfloat[ER, $B_{i_0}=0.01$, $\beta=0.18$ \label{fig:seed_norm_res_i0_pt01_beta_pt18_ER}]{
\includegraphics[width=41.5mm]{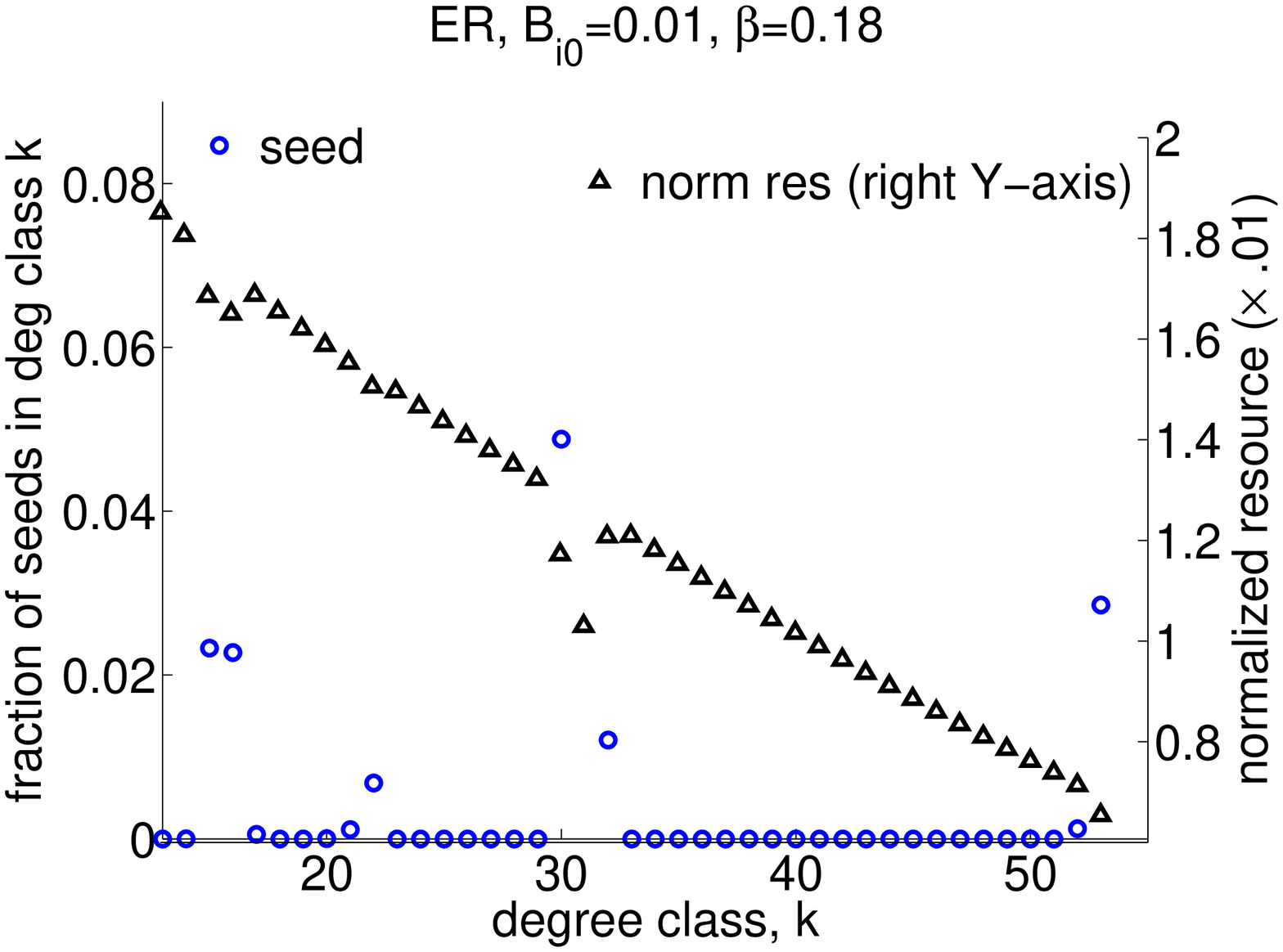} }
\hfill
\subfloat[PL2, $B_{i_0}=0.01$, $\beta=0.18$ \label{fig:seed_norm_res_i0_pt01_beta_pt18_PL2}]{
\includegraphics[width=41.5mm]{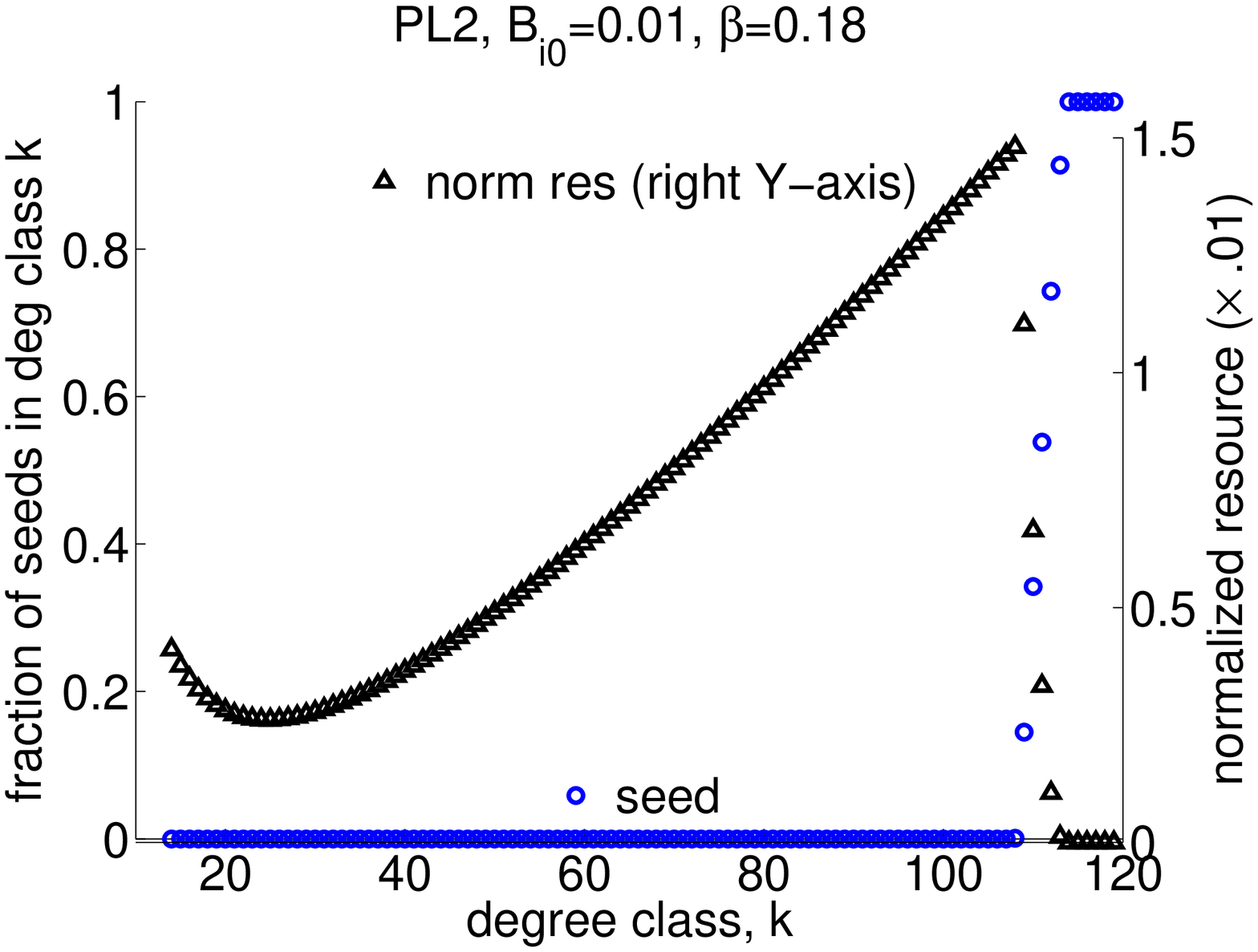} }
\caption{\small{Seed and normalized resource allocation (defined in Eq. (\ref{eq:norm_resource})) for $b=25$, $\beta(t)=\beta=\gamma(t)/10$.}}
\label{fig:seed_norm_res}
\end{figure}

\begin{figure}[ht!]
\centering
\subfloat[ER, $B_{i_0}=0.01$, $\beta(t)$ \label{fig:seed_norm_res_beta1_i0_pt01_ER}]{
\includegraphics[width=41.5mm]{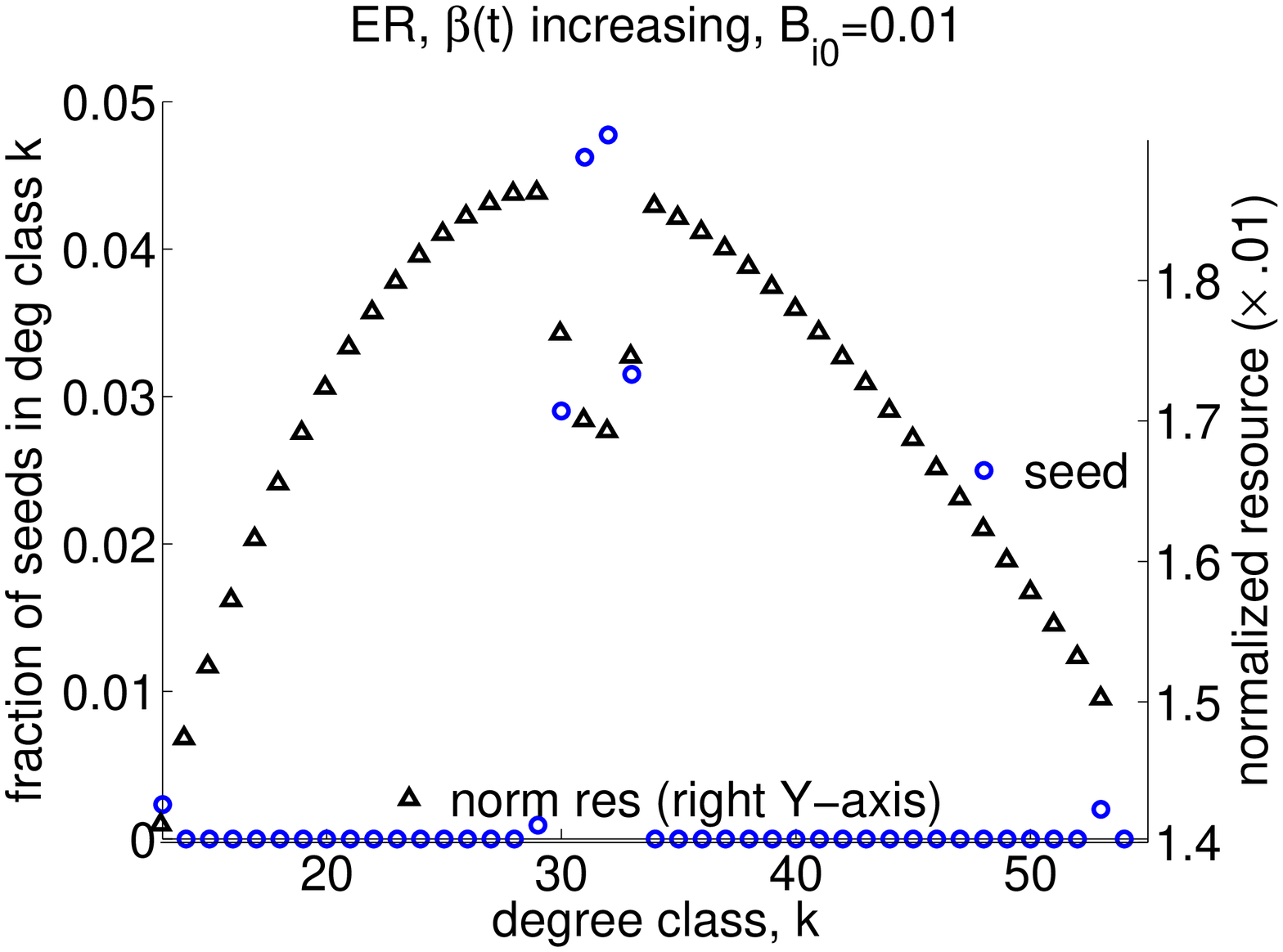} }
\hfill
\subfloat[PL2, $B_{i_0}=0.01$, $\beta(t)$ \label{fig:seed_norm_res_beta1_i0_pt01_PL2}]{
\includegraphics[width=41.5mm]{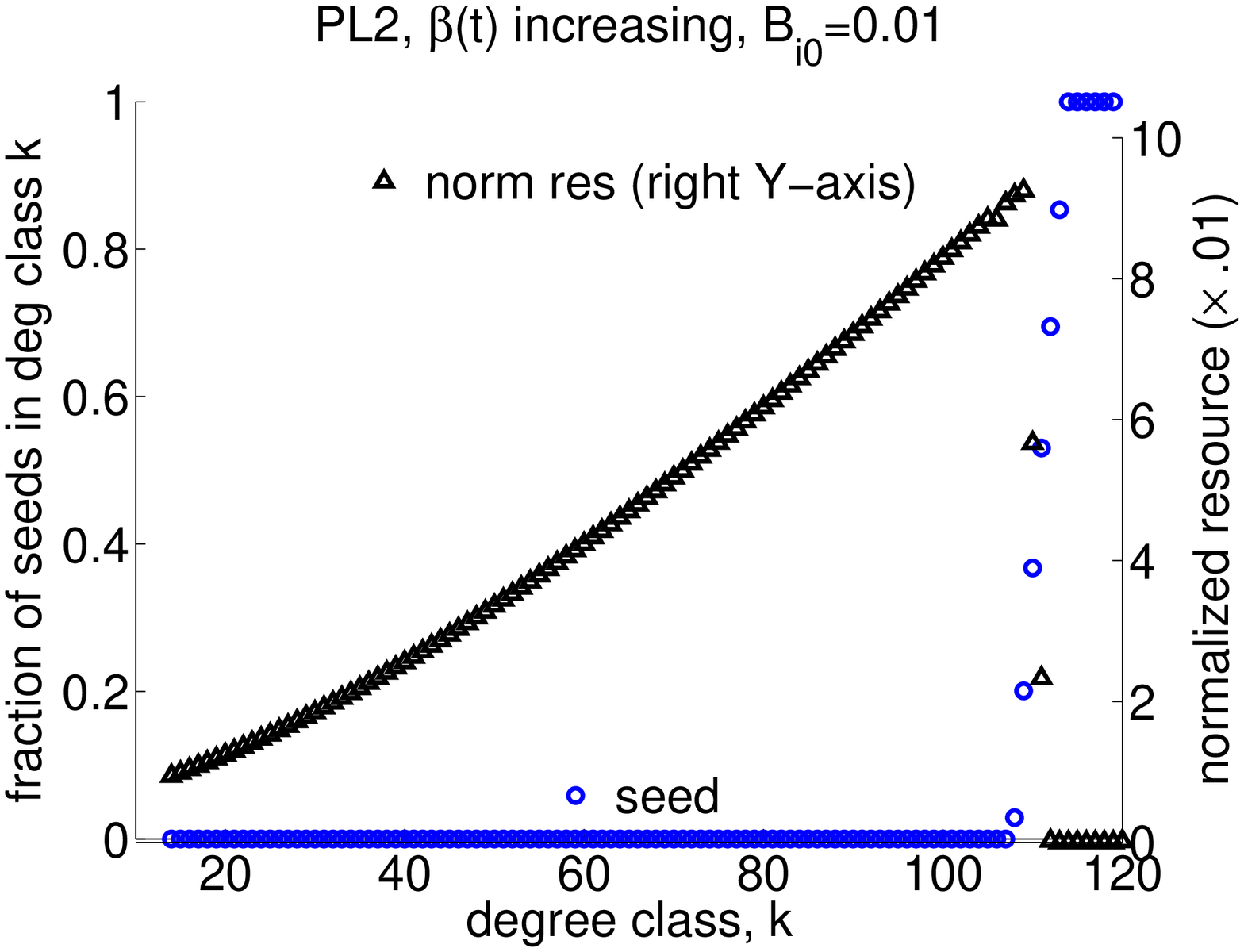} }

\subfloat[ER, $B_{i_0}=0.5$, $\beta(t)$ \label{fig:seed_norm_res_beta1_i0_pt5_ER}]{
\includegraphics[width=41.5mm]{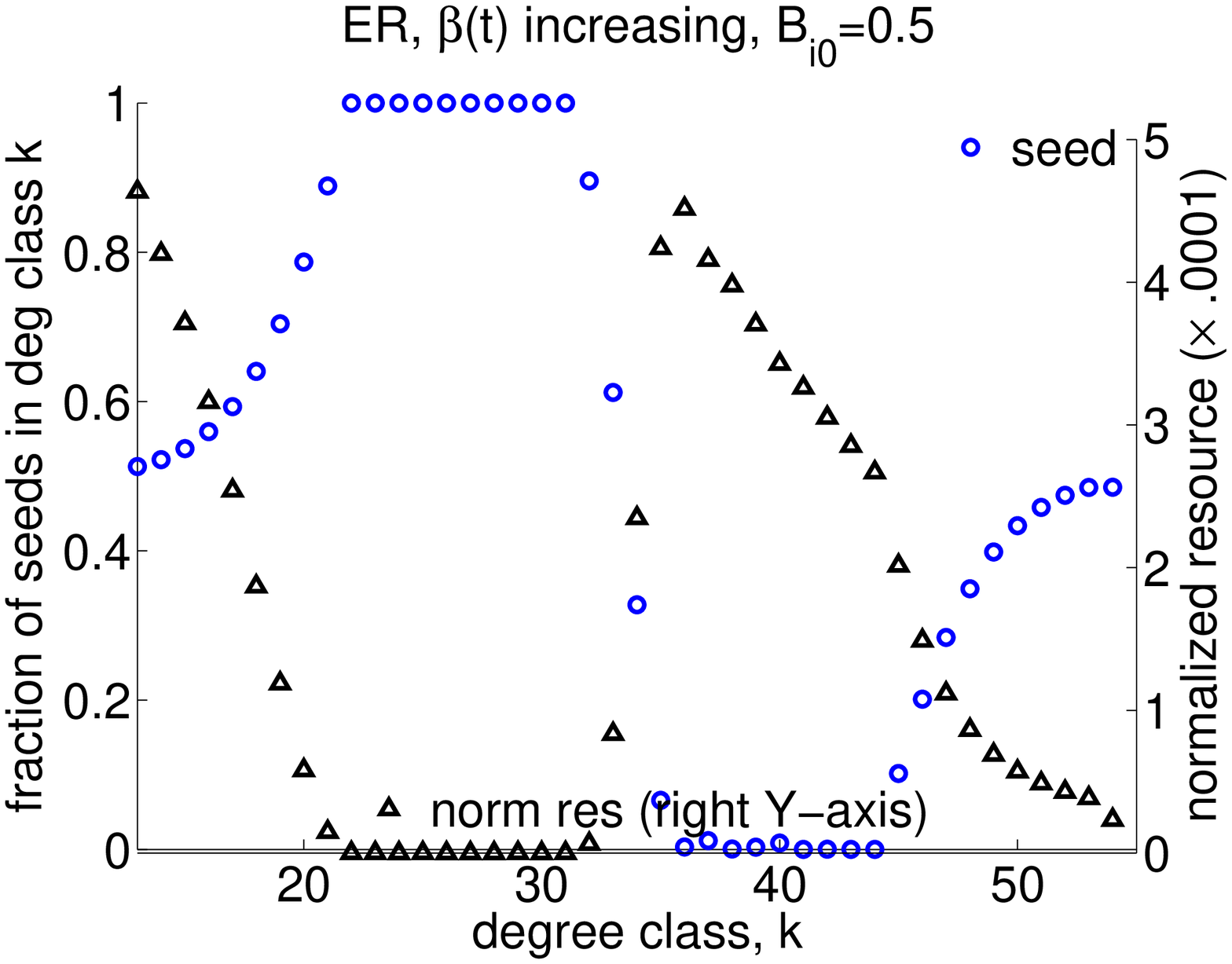} }
\hfill
\subfloat[PL2, $B_{i_0}=0.5$, $\beta(t)$ \label{fig:seed_norm_res_beta1_i0_pt5_PL2}]{
\includegraphics[width=41.5mm]{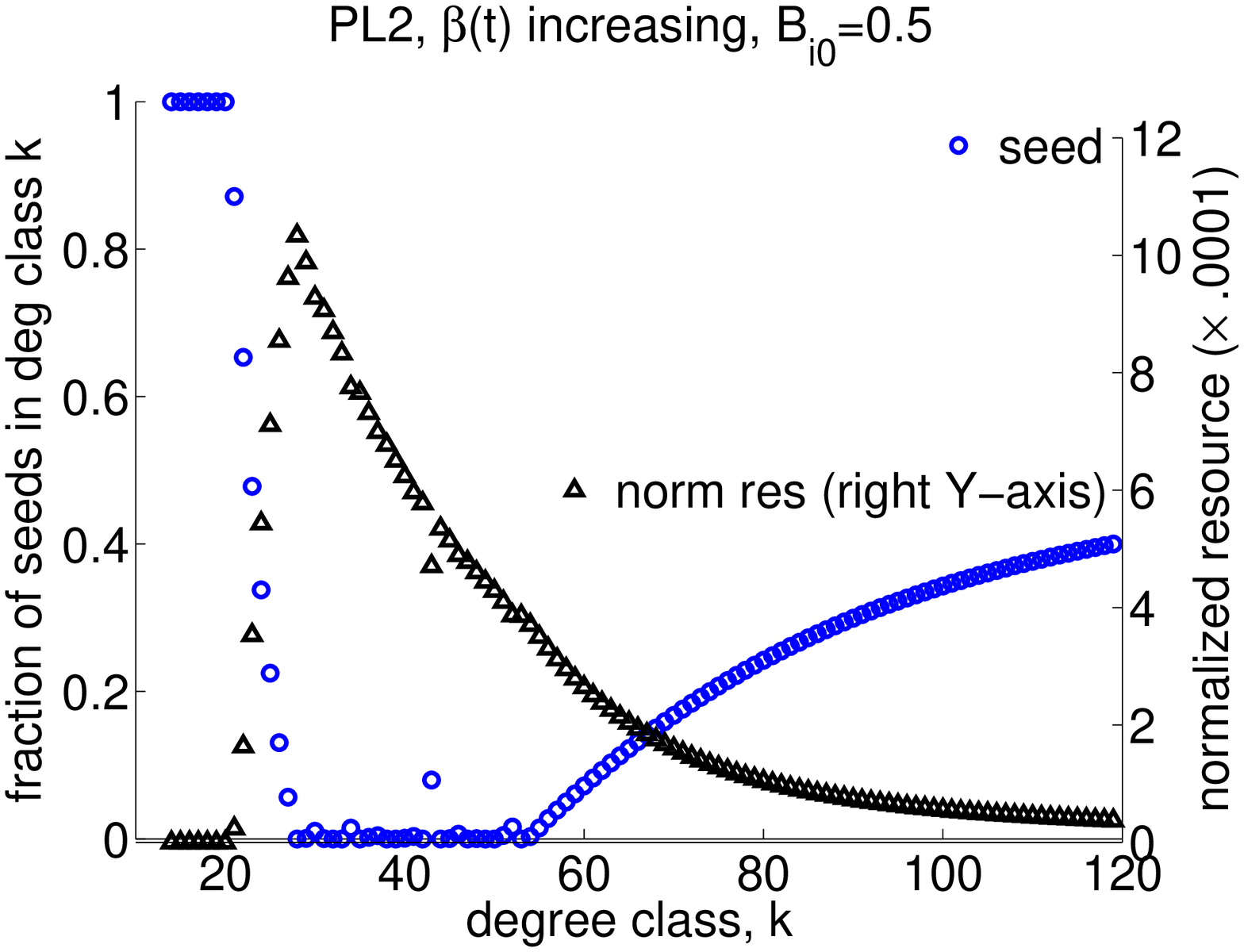} }

\subfloat[ER, $B_{i_0}=0.01$, $3\beta(t)$ \label{fig:seed_norm_res_3beta1_i0_pt01_ER}]{
\includegraphics[width=41.5mm]{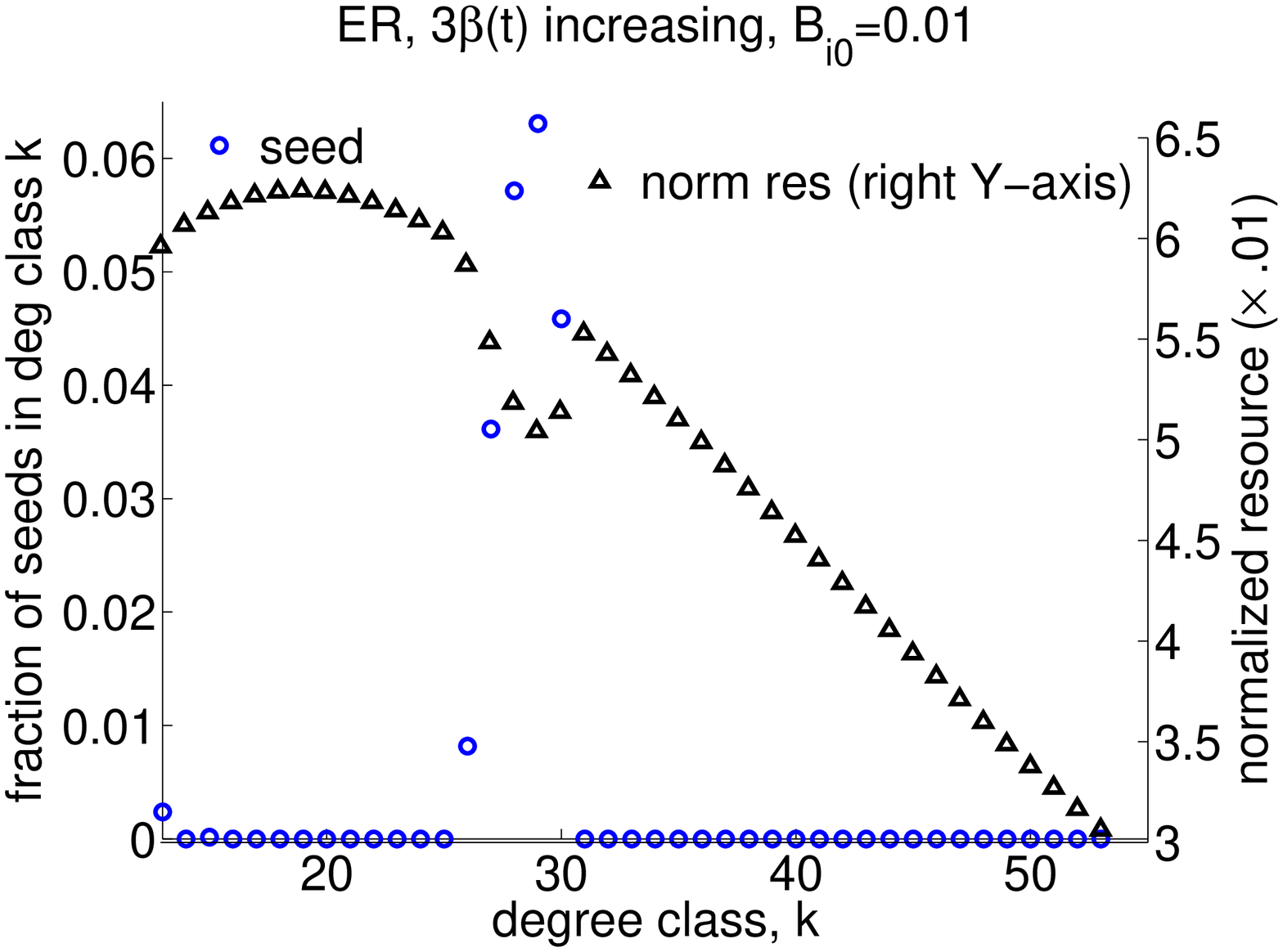} }
\hfill
\subfloat[PL2, $B_{i_0}=0.01$, $3\beta(t)$ \label{fig:seed_norm_res_3beta1_i0_pt01_PL2}]{
\includegraphics[width=41.5mm]{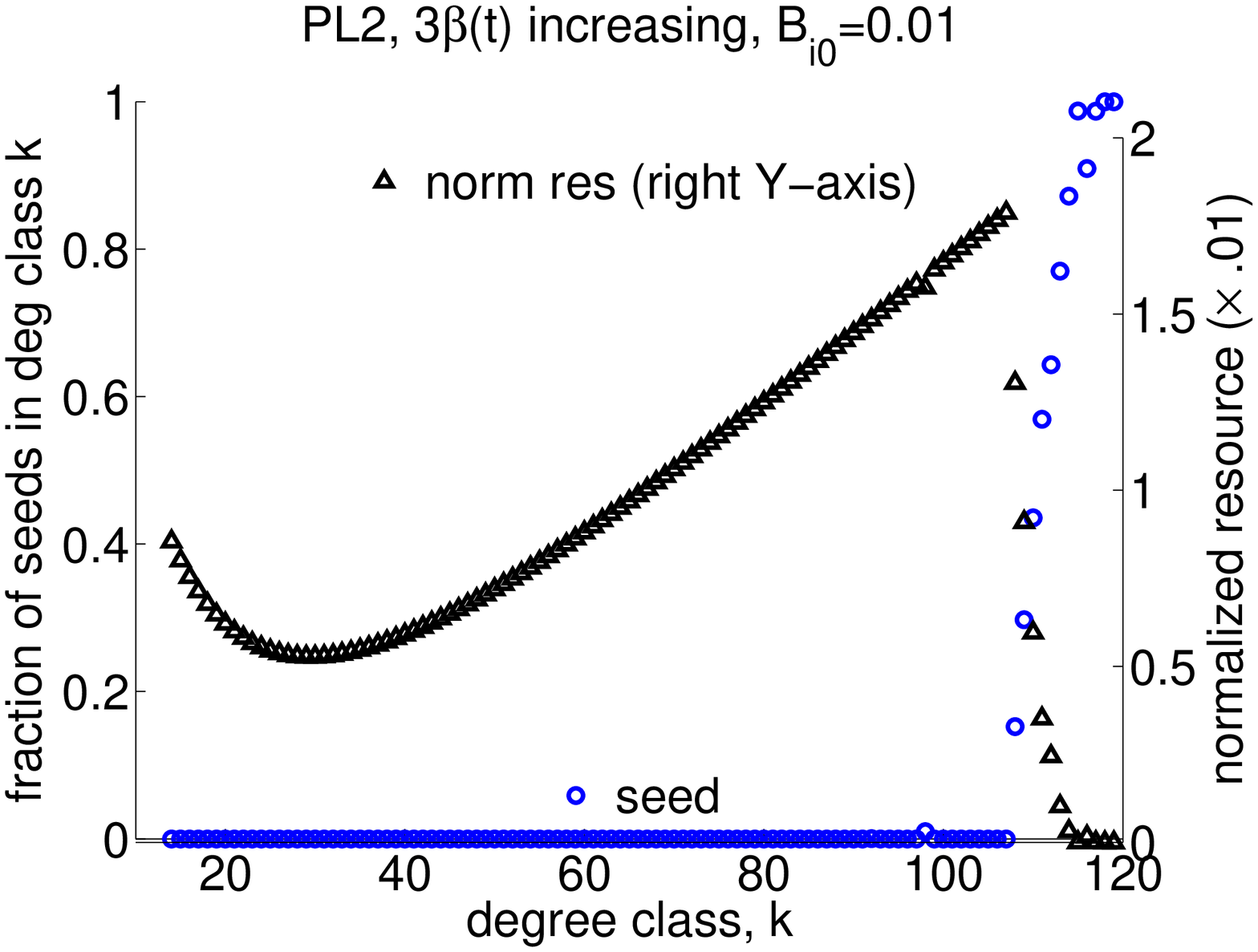} }
\caption{\small{Seed and normalized resource allocation (defined in Eq. (\ref{eq:norm_resource})) for $b=25$, increasing $\beta(t)$ as shown in Fig. \ref{fig:control_norm_res_beta1_b25_PL2}.}}
\label{fig:seed_norm_res_beta1}
\end{figure}

We now discuss the solution to Problem (\ref{eq:opt_prob}) where optimal seed and time varying resource allocation are jointly computed. One expects behavior similar to that in Sec. \ref{sec:result_shape_control_imp_deg_classes}. When the seed budget is low, $B_{i_0}=0.01$, optimal seed allocation and normalized resource allocated to the degree classes are shown in Figs. \ref{fig:seed_norm_res_i0_pt01_ER} and \ref{fig:seed_norm_res_i0_pt01_PL2} for ER and PL2 networks for the case: $\beta(t)=\beta=0.07, \gamma(t)=10\times\beta(t)$. For the scale-free networks, the optimal solution is to target high degree classes as seeds at $t=0$ because they are the best spreaders. The optimal control need not target those degree classes (because they are already infected); thus, degree classes with largest degrees from the remaining ones are preferred. In the ER, network medium degree classes are the preferred seeds.

When the seed budget is increased to $B_{i_0}=0.5$, we see an allocation similar to that in the abundant resource case in Sec. \ref{sec:result_shape_control_imp_deg_classes} (Figs. \ref{fig:seed_norm_res_i0_pt5_ER} and \ref{fig:seed_norm_res_i0_pt5_PL2}). The focus of the optimal solution is not to target best quality spreaders but to target the nodes which are at a disadvantage in receiving the message and directly put them in the infected class. Thus we see a lot of low degrees being targeted in both the networks as seeds and from optimal controls.

When the spreading rate increases to $\beta=0.18$ (Figs. \ref{fig:seed_norm_res_i0_pt01_beta_pt18_ER} and \ref{fig:seed_norm_res_i0_pt01_beta_pt18_PL2}) we again see the disadvantaged lower degree classes attracting more per capita resource than the case when spreading rate was lower ($\beta=0.07$).

Fig. \ref{fig:seed_norm_res_beta1} shows the same result for the case of time varying (increasing) $\beta(t)$ shown in Fig. \ref{fig:control_norm_res_beta1_b25_PL2} and $\gamma(t)=10\times\beta(t)$. Again, the degree classes are targeted as seeds and from controls in a similar manner as in the constant $\beta(t),\gamma(t)$ case.

\subsection{Effect of System Parameters in Problem (\ref{eq:opt_prob})}
\label{sec:result_effect_parameter_no_budget}

In Figs. \ref{fig:J_vs_b}, \ref{fig:J_vs_beta} and \ref{fig:J_vs_i0}, we study the effect of model parameters on optimal reward functions in (\ref{eq:cost_funtion}). The results are compared with the cases where no controls are used, and when the two heuristic control strategies explained in Sec. \ref{sec:heuristic_controls} are used. The curves corresponding to the case when seeds are uniformly selected from the population and are not optimization variable are referred to as `optUniSeed' in the figures.

\subsubsection{Effect of the Cost of Application of Controls}

\begin{figure}[ht!]
\centering
\includegraphics[width=77mm]{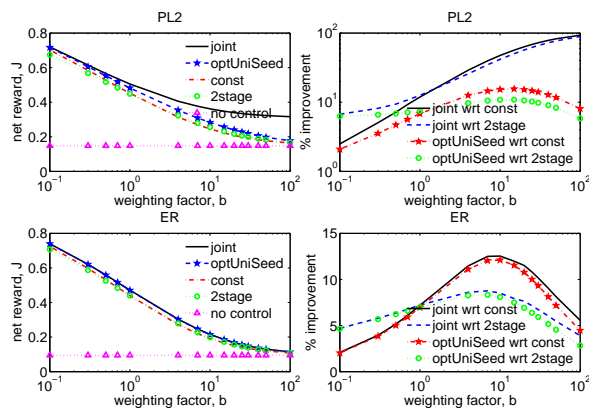}
\caption{\small{Reward $J$ (defined in (\ref{eq:cost_funtion})) vs. weighting parameter $b$, for $B_{i_0}=0.01$ (for Problem (\ref{eq:opt_prob})), $\beta(t)=\beta=0.07=\gamma(t)/10$, $i_{0k}=i_0=0.01,~\forall k\in\mathbb K$ (when seed is not optimized).}}
\label{fig:J_vs_b}
\end{figure}

Fig. \ref{fig:J_vs_b} plots the reward function $J$ with respect to the weighting parameter $b$, which captures the cost of applying control. The larger the value of $b$, the costlier the control becomes. The figure also plots the percentage improvement in the optimal reward function, which the solutions to Problem (\ref{eq:opt_prob}) achieve, over the reward function achieved by the heuristic strategies (in the right panels).

As expected, larger $b$ leads to lower reward. In the case of scale-free networks, the joint optimization leads to much better improvement over just the optimal control problem (compared to the case of ER network). Also, if the resource is too cheap, optimal strategies do not provide any significant improvement over heuristic strategies as both of them reach large fractions of the population. If the resource is too costly, joint optimization leads to much better performance compared to others. In case of the ER network, the percentage improvement achieved by optimal strategies is low; also joint optimization does not offer significant improvement over only optimal resource allocation. This is because network nodes are behaviorally similar in ER network and optimal strategy can exploit their differences to a very limited extent.

\subsubsection{Effect of the Spreading Rate}

\begin{figure}[ht!]
\centering
\includegraphics[width=77mm]{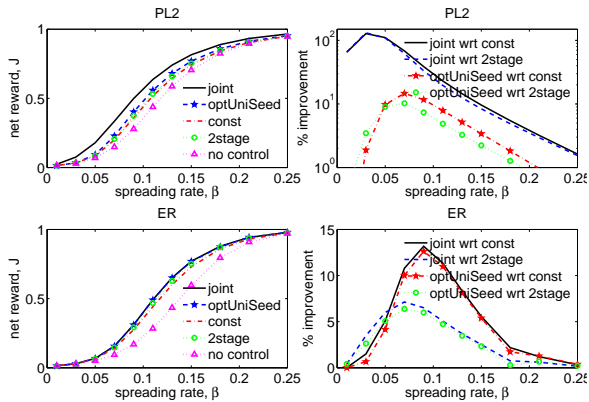}
\caption{\small{Reward $J$ (defined in (\ref{eq:cost_funtion})) vs. spreading rate $\beta(t)=\beta=\gamma(t)/10$, for $B_{i_0}=0.01$ (for Problem (\ref{eq:opt_prob})), $b=25$, $i_{0k}=i_0=0.01,~\forall k\in\mathbb K$ (when seed is not optimized).}}
\label{fig:J_vs_beta}
\end{figure}

Fig. \ref{fig:J_vs_beta} shows the plots of the reward functions with respect to the spreading rate, $\beta(t)=\beta$. Here $\gamma(t)=10\times \beta$. As expected, high values of $\beta$ reduce the importance of campaigning, using both optimal and heuristic strategies, as large fractions of populations can be reached without any effort. As was the case above, in the case of the ER network, joint optimization performs almost the same as only optimal resource allocation.

\subsubsection{Effect of Initial Fraction of Infected Nodes and Seed Budget}

\begin{figure}[ht!]
\centering
\includegraphics[width=77mm]{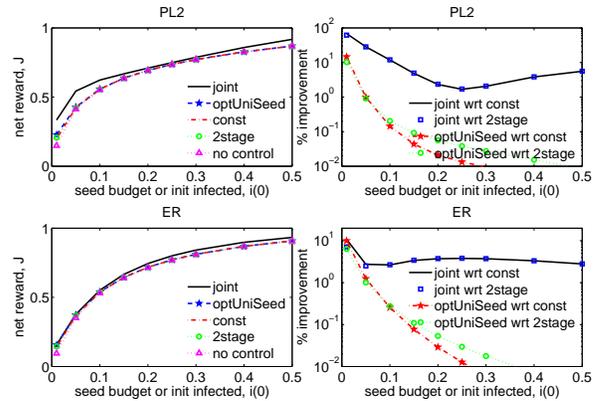}
\caption{\small{Reward $J$ (defined in (\ref{eq:cost_funtion})) vs. initial fraction of infected nodes or seed budget $i(0)$, for $b=25$, $\beta(t)=\beta=0.07=\gamma(t)/10$. For curves other than joint optimization seeds are uniformly selected. On the X-axis, $i(0)=\sum_k p_k i_k(0)=\sum_k p_k i_{0k}$.}}
\label{fig:J_vs_i0}
\end{figure}

Fig. \ref{fig:J_vs_i0} shows the effect of the initial fraction of infected nodes and seed budget on the reward function. For both the networks, optimal resource allocation (without seed optimization) achieves some improvement over heuristic strategies only when there are too few seeds. Joint seed-resource allocation achieves significant improvements for PL2 network for too few or too many seeds. The percentage improvements achieved by the PL2 network are much higher compared to the ER network. Due to the heterogeneous nature of scale-free networks, seed selection is more crucial when few seeds are available. High degree seeds are selected when seed budget is low, as seen in Sec. \ref{sec:result_joint_seed_res_alloc} (and Figs. \ref{fig:seed_norm_res_i0_pt01_PL2}, \ref{fig:seed_norm_res_beta1_i0_pt01_PL2}). In the case of high number of seeds, joint problem allocates seeds to low degree classes (Figs. \ref{fig:seed_norm_res_i0_pt5_PL2}, \ref{fig:seed_norm_res_beta1_i0_pt5_PL2}) thereby directly increasing the reward function. These nodes have fewer links and hence do not receive the message from epidemic spreading efficiently.

\subsection{Effect of System Parameters in the Problem With Budget Constraint (Problem (\ref{eq:opt_prob_budget}))}
\label{sec:result_effect_parameter_budget}

\begin{figure}[ht!]
\centering
\includegraphics[width=77mm]{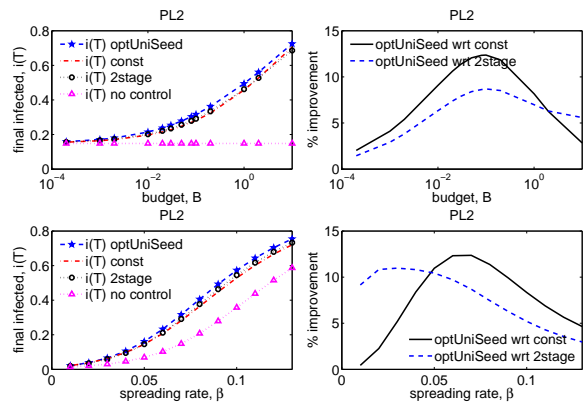}
\caption{\small{Reward $J$ vs. parameters ($b,\beta(t)=\beta=\gamma(t)/10$). Wherever required $B=0.1$, $b=25$, $i_{0k}=i_0=0.01,~\forall k\in\mathbb K$ (for Problem (\ref{eq:opt_prob_budget})).}}
\label{fig:J_vs_param_budget}
\end{figure}

For brevity we show plots for only PL2 network. Fig. \ref{fig:J_vs_param_budget} shows the variations in the reward function (\ref{eq:cost_funtion_budget}) with respect to system parameters in Problem (\ref{eq:opt_prob_budget}). Recall that this is a resource allocation problem with fixed campaigning resources. The reward is simply the fraction of the infected population at the campaign deadline. For these results, we assume that seeds are fixed and are selected uniformly among all degree classes, \emph{i.e.}, $i_{0k}=i_0,~\forall k\in\mathbb K$. The results are compared with the static control strategy and the dynamic two-stage control strategy which uses same budget as the optimal strategy.

Significant percentage improvement is achieved by the optimal resource allocation strategy compared to the heuristic strategies only for intermediate values of the budget. When a lot of resource is available, large fraction of the population is reached even by heuristic strategies and hence not much improvement is achieved by the optimum allocation. Also, very limited resource is not enough to gain any improvement.

The normalized resource allocation in the fixed budget case shows the same qualitative behavior as in Sec. \ref{sec:result_shape_control_imp_deg_classes} (which does not have a budget constraint). For very small budget, in scale-free networks high degree nodes are allocated more per capita resource, followed by medium and low degree nodes. In ER, medium degrees are favored, this behavior being similar to the costly resource case in Sec. \ref{sec:result_shape_control_imp_deg_classes}. If the budget is too high, the trend is the same as in the cheap resource case in Sec. \ref{sec:result_shape_control_imp_deg_classes}. The economic interpretation of multiplier $\mu$ associated with the relaxed constraint is cost per unit resource. Hence, low budget leads to high value of $\mu^*$, the multiplier's value at the optimum, and high budget leads to small value of $\mu^*$, which explains this behavior.  We are omitting the figures for brevity.

\section{Results for Slashdot Social Network}
\label{sec:results_real_nw}

\begin{figure}[ht!]
\centering
\includegraphics[width=53mm]{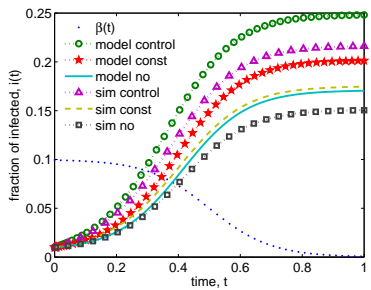}
\caption{\small{Time evolution of $i(t)$ predicted by the model and in simulation on Slashdot social network. $b=25$, $\beta(t)$ as shown above, $\gamma(t)=10\times \beta(t)$, $i_{0k}=0.01~\forall k$.}}
\label{fig:it_vs_t_control_uncon_const_th_sim_beta2}
\end{figure}

\begin{figure}[ht!]
\centering
\includegraphics[width=88mm]{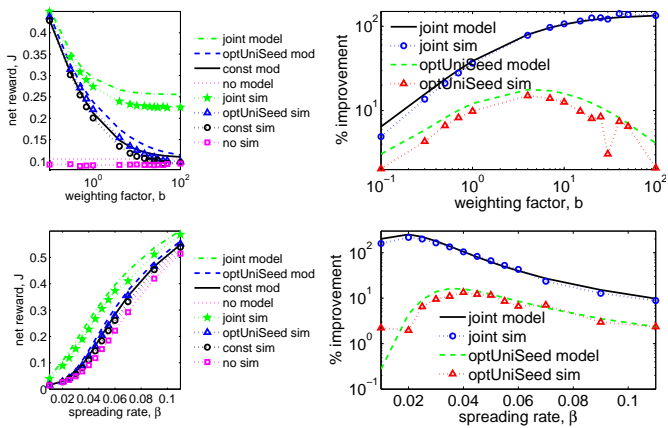}
\caption{\small{Reward $J$ vs. parameters ($b,\beta(t)=\beta=\gamma(t)/10$)---from model and simulation---for the Slashdot network. Whenever required $b=10,\beta(t)=\beta=0.04, \gamma(t)=10\times\beta(t)$, $i_{0k}=0.01\forall k$ when seed is not optimized.}}
\label{fig:J_vs_b_beta_th_sim_actual_nw}
\end{figure}

The degree based compartmental model for SI epidemics presented in Sec. \ref{sec:sys_model_prob_formaulation} assumes that the process is running on an uncorrelated network---an assumption satisfied by configuration model networks but not real networks. In this section, we test the accuracy of the model on a real network via simulations. For this purpose, we use a $4000$ node Slashdot social network obtained from \cite{snap}. The network has a minimum degree of $1$, maximum degree $661$ and mean $30.42$.

Fig. \ref{fig:it_vs_t_control_uncon_const_th_sim_beta2} compares the system evolution predicted by the model and by simulation. The simulation results are averaged over $400$ runs. Optimal controls are computed using the empirical degree distribution of the network (with the implicit assumption that the network is uncorrelated). Seeds are uniformly selected, but seeding is different in each run.

We note in Fig. \ref{fig:it_vs_t_control_uncon_const_th_sim_beta2} that the model overestimates the fraction of infected nodes in the network. This is because real social networks have high levels of clustering---the number of triangles in the network---because two `friends' of a person are also likely to be friends. On the other hand, uncorrelated networks have very low levels of clustering---they are `locally tree like' \cite[Sec. 17.10.1]{newman2009networks}, devoid of short loops, and have many more long edges compared to real social networks. Thus, information diffuses far and wide more quickly in uncorrelated networks than in real networks---which explains the behavior in Fig. \ref{fig:it_vs_t_control_uncon_const_th_sim_beta2}.

Inspite of this inaccuracy in modeling, the controls computed from the model are still useful for the real network. This is demonstrated in Fig. \ref{fig:J_vs_b_beta_th_sim_actual_nw} for a wide range of model parameters. The percentage improvement which the optimal control---with and without seeds as optimization variables---enjoys over the constant (or static) control is similar: whether predicted by model or observed in simulations. The percentage improvements for the net reward functions in the controlled system in the case of simulations are computed with respect to net reward functions obtained when constant controls are used in the simulations.

All simulation results are averaged over $400$ runs. In all the runs, the network is the same but seeds are different---either selected uniformly from the population (in the case of no control, constant control and only time varying resource optimization) or, in the case of joint problem, a node in degree class $k$ is selected as seed with probability $i_{0k}$, where $i_{0k}$ is the output of the optimization problem.

\section{Conclusion}
\label{sec:conclusion}

In this work, we have applied techniques from optimal control on a large optimality system for allocating campaigning resources over (i) time and (ii) degree classes for maximizing the spread of a piece of information over social networks. Information dissemination is modeled as a Susceptible-Infected epidemic and direct recruitment of susceptible nodes to the infected class is used as the strategy to enhance information spreading. The seed for the epidemic is also jointly optimized along with time varying resource allocation. The whole network is divided into degree classes based on node degrees and each degree class is influenced by a separate control. The aim is to maximize a (net) reward function, which is a linear combination of reward due to the extent of information spread and cost due to application of controls. We have also studied a variation of the above problem---maximizing information spread under a fixed budget constraint (fraction of seeds is given).

We prove the existence of a solution to the optimal control problem, provide analytical structural results for the shape of the controls, and provide a sufficient condition for uniqueness of solution. We solve the above optimal control problem using Pontryagin's Maximum Principle. Our formulation and system parameters lead to large optimality systems with over $200$ differential equations. Numerical schemes use the forward-backward sweep technique and its variations to solve different problems studied in this paper. These schemes are more efficient than direct conversion of the optimal control problems to non-linear optimization problems. We analyse the convergence of the forward-backward sweep technique for our system.

We compare the optimal results with two heuristic strategies. The first one is constant throughout the campaign horizon, and second is dynamic in nature and is active only during initial periods of the campaign (which was found to be more important for information spreading). Results show significant gains over these non-optimal strategies.

We also found that per capita resource allocation over the degree classes varies depending on the network topology and system parameters such as cost of the resource and spreading rate. For example, if resource is costly (scarce), medium degree nodes are allocated more resources in Erd\H os-R\'enyi networks, but higher degree nodes are favored in the case of scale-free networks. If the resource is cheap (abundant), the allocation to low degree nodes is more than that to medium degree nodes in both networks.

We tested the accuracy of the degree based compartmental model for SI epidemics on a real social network by simulating various control strategies. Although the fraction of infected nodes is slightly overestimated by the model, the performance improvements achieved by the optimal control over the constant control strategy is almost the same in the simulations as predicted by the model.

\footnotesize
\bibliographystyle{IEEEtran}
\bibliography{bibliography_database}

\begin{thebibliography}{10}
\providecommand{\url}[1]{#1}
\csname url@samestyle\endcsname
\providecommand{\newblock}{\relax}
\providecommand{\bibinfo}[2]{#2}
\providecommand{\BIBentrySTDinterwordspacing}{\spaceskip=0pt\relax}
\providecommand{\BIBentryALTinterwordstretchfactor}{4}
\providecommand{\BIBentryALTinterwordspacing}{\spaceskip=\fontdimen2\font plus
\BIBentryALTinterwordstretchfactor\fontdimen3\font minus
  \fontdimen4\font\relax}
\providecommand{\BIBforeignlanguage}[2]{{%
\expandafter\ifx\csname l@#1\endcsname\relax
\typeout{** WARNING: IEEEtran.bst: No hyphenation pattern has been}%
\typeout{** loaded for the language `#1'. Using the pattern for}%
\typeout{** the default language instead.}%
\else
\language=\csname l@#1\endcsname
\fi
#2}}
\providecommand{\BIBdecl}{\relax}
\BIBdecl

\bibitem{rutledge2013obama}
P.~Rutledge, ``How obama won the social media battle in the 2012 presidential
  campaign,'' \emph{The National Psychologist}, January, 2013.

\bibitem{theeconomictimes2014facebook}
{Times of India}, ``Facebook, twitter and google change face of general
  elections 2014,'' May 6, 2014.

\bibitem{karnik2012}
A.~Karnik and P.~Dayama, ``Optimal control of information epidemics,''
  \emph{Proceedings of IEEE Communication Systems and Networks Conference}, pp.
  1--7, 2012.

\bibitem{kandhway2014run}
K.~Kandhway and J.~Kuri, ``How to run a campaign: Optimal control of sis and
  sir information epidemics,'' \emph{Appl. Math. and Comput.}, vol. 231, pp.
  79--92, 2014.

\bibitem{banerjee2014epidemic}
S.~Banerjee, A.~Gopalan, A.~K. Das, and S.~Shakkottai, ``Epidemic spreading
  with external agents,'' \emph{IEEE Tran. Info. Theory}, vol.~60, no.~7, pp.
  4125--38, July 2012.

\bibitem{dayama2012optimal}
P.~Dayama, A.~Karnik, and Y.~Narahari, ``Optimal incentive timing strategies
  for product marketing on social networks,'' in \emph{International Conference
  on Autonomous Agents and Multiagent Systems}, 2012.

\bibitem{borgs2010distribute}
C.~Borgs, J.~Chayes, A.~Ganesh, and A.~Saberi, ``How to distribute antidote to
  control epidemics,'' \emph{Random Structures \& Algo.}, vol.~37, no.~2, pp.
  204--222, 2010.

\bibitem{preciado2013optimal}
V.~M. Preciado, M.~Zargham, C.~Enyioha, A.~Jadbabaie, and G.~Pappas, ``Optimal
  vaccine allocation to control epidemic outbreaks in arbitrary networks,'' in
  \emph{Decision and Control (CDC)}.\hskip 1em plus 0.5em minus 0.4em\relax
  IEEE, 2013, pp. 7486--7491.

\bibitem{asano2008optimal}
E.~Asano, L.~J. Gross, S.~Lenhart, and L.~A. Real, ``Optimal control of vaccine
  distribution in a rabies metapopulation model,'' \emph{Mathematical
  Bioscience and Engineering}, vol.~5, no.~2, pp. 219--238, 2008.

\bibitem{zhu2012optimal}
Q.~Zhu, X.~Yang, L.~X. Yang, and C.~Zhang, ``Optimal control of computer virus
  under a delayed model,'' \emph{Applied Mathematics and Computation}, 2012.

\bibitem{kandhway2014optimal}
K.~Kandhway and J.~Kuri, ``Optimal control of information epidemics modeled as
  maki thompson rumors,'' \emph{Communications in Nonlinear Science and
  Numerical Simulation}, vol.~19, pp. 4135--4147, 2014.

\bibitem{khouzani2011optimal}
M.~H.~R. Khouzani, S.~Sarkar, and E.~Altman, ``Optimal control of epidemic
  evolution,'' \emph{Proceedings of IEEE INFOCOM}, pp. 1683--1691, 2011.

\bibitem{belen2008}
S.~Belen, ``The behaviour of stochastic rumours,'' \emph{Ph.D. dissertation,
  University of Adelaide, Australia.}, 2008.

\bibitem{newman2009networks}
M.~Newman, \emph{Networks: An Introduction}.\hskip 1em plus 0.5em minus
  0.4em\relax Oxford University Press, 2009.

\bibitem{youssef2013mitigation}
M.~Youssef and C.~Scoglio, ``Mitigation of epidemics on contact networks
  through optimal contact adaptation,'' \emph{Math. Biosc. \& Eng.}, pp.
  1227--1251, 2013.

\bibitem{kemp2003maximizing}
D.~Kempe, J.~Kleinberg, and Tardos, ``Maximizing the spread of influence
  through a social network,'' \emph{In Proceedings of Knowledge Discovery and
  Data Mining}, pp. 137--146, 2003.

\bibitem{banerjee2014epidemicthreshold}
S.~Banerjee, A.~Chatterjee, and S.~Shakkottai, ``Epidemic thresholds with
  external agents,'' in \emph{Proceedings of IEEE INFOCOM}, 2014, pp.
  2202--2210.

\bibitem{sheldon2010maximizing}
D.~Sheldon, B.~Dilkina, A.~N. Elmachtoub \emph{et~al.}, ``Maximizing the spread
  of cascades using network design,'' in \emph{Proceedings of the Twenty-Sixth
  Conference on Uncertainty in Artificial Intelligence}, 2010, pp. 517--526.

\bibitem{may1988transmission}
R.~M. May, R.~M. Anderson, and M.~E. Irwin, ``The transmission dynamics of
  human immunodeficiency virus (hiv)[and discussion],'' \emph{Philosophical
  Tran. of the Royal Soc. of London. B, Biological Sci.}, vol. 321, no. 1207,
  pp. 565--607, 1988.

\bibitem{fleming1975deterministic}
W.~H. Fleming and R.~W. Rishel, \emph{Deterministic and Stochastic Optimal
  Control}.\hskip 1em plus 0.5em minus 0.4em\relax Springer-Verlag, New York,
  1975.

\bibitem{becerra2004solving}
V.~M. Becerra, ``Solving optimal control problems with state constraints using
  nonlinear programming and simulation tools,'' \emph{IEEE Transcations on
  Education}, pp. 377--384, 2004.

\bibitem{mcasey2012convergence}
M.~McAsey, L.~Mou, and W.~Han, ``Convergence of the forward-backward sweep
  method in optimal control,'' \emph{Computational Optimization and
  Applications}, vol.~53, no.~1, pp. 207--226, 2012.

\bibitem{ma2002existence}
R.~Ma, ``Existence and uniqueness of solutions to first-order three-point
  boundary value problems,'' \emph{Applied Mathematics Letters}, vol.~15,
  no.~2, pp. 211--216, 2002.

\bibitem{snap}
\BIBentryALTinterwordspacing
``Stanford large network dataset collection.'' [Online]. Available:
  \url{https://snap.stanford.edu/data/index.html}
\BIBentrySTDinterwordspacing

\end{thebibliography}

%
%
%

\newpage
\normalsize

\twocolumn[
\centering
\begin{@twocolumnfalse}
\textbf{\LARGE{SUPPLEMENTARY MATERIAL \\ \vspace{.2cm} Optimal Resource Allocation Over Time and Degree Classes for Maximizing Information Dissemination in Social Networks}} \\
\vspace{.3cm}
\large{Kundan Kandhway and Joy Kuri}
\vspace{.6cm}
\end{@twocolumnfalse}
]


\appendices
\section{Proof of Theorem \ref{thm:soln_exist}: Existence of a Solution}
\label{app:proof_existence}
Let the system of ODEs represented by (\ref{eq:opt_prob_states}) be denoted by $\frac{d}{dt} {\boldsymbol i}(t)=\boldsymbol f(\boldsymbol u(t), \boldsymbol i(t))$, where the RHS is a vector function with $|\mathbb K|$ elements whose components are given by (\ref{eq:opt_prob_states}) and $\boldsymbol i(t)=\{i_k(t),~k\in\mathbb K\}$. We make use of Cesari's theorem \cite[Ch. 3]{fleming1975deterministic} in this proof. We will use the vector 1-norm throughout this proof. However, note that all $p$-norms ($p\geq 1$) for vectors are equivalent\footnote{R. A. Horn and C. R. Johnson, \emph{Matrix Analysis}. Cambridge university press, 1990 (Sec. 5.4.7).}; hence the result holds irrespective of the norm used. Cesari's theorem states that an optimal control problem has a solution if the following conditions are satisfied:
\begin{enumerate}
\item $|\boldsymbol f(\boldsymbol u(t), \boldsymbol i(t))|\leq C_1(1+|\boldsymbol i(t)|+|\boldsymbol u(t)|)$, for $C_1>0$. Using 1-norm for vectors, $|\boldsymbol f(\boldsymbol u(t), \boldsymbol i(t))|\leq \text{max}_{t\in[0,T]}\{\beta(t),\gamma(t)\} K_{max}^2 (|\boldsymbol i(t)|+|\boldsymbol u(t)|)$.

\item $|\boldsymbol f(\boldsymbol u(t), \boldsymbol i(t))-\boldsymbol f(\boldsymbol u(t), \hat {\boldsymbol i}(t))|\leq C_2|\boldsymbol i(t)-\hat{\boldsymbol i}(t)|(1+|\boldsymbol u(t)|)$, for $C_2>0$. Evaluating the left hand side (we make use of $\hat s_k(t)=1-\hat i_k(t)$),
\small
\begin{align*}
& |\boldsymbol f(\boldsymbol u(t), \boldsymbol i(t))-\boldsymbol f(\boldsymbol u(t), \hat {\boldsymbol i}(t))| \\
= & \sum_{k\in\mathbb K}\bigg|\beta(t) ks_k(t) \sum_{l\in \mathbb K} \left(q_{l} i_l(t)\right) - \beta(t) k\hat s_k(t) \sum_{l\in \mathbb K} (q_{l} \hat i_l(t)) \\
& + \gamma(t)u_k(t)s_k(t) - \gamma(t)u_k(t)\hat s_k(t)\bigg| \\
\leq & \sum_{k\in\mathbb K} \bigg|\beta(t) ks_k(t) \sum_{l\in \mathbb K} \left(q_{l} i_l(t)\right) - \beta(t) k\hat s_k(t) \sum_{l\in \mathbb K} \left(q_{l} i_l(t)\right) \\
& + \beta(t) k\hat s_k(t) \sum_{l\in \mathbb K} \left(q_{l} i_l(t)\right) - \beta(t) k\hat s_k(t) \sum_{l\in \mathbb K} (q_{l} \hat i_l(t))\bigg| \\
& + \sum_{k\in\mathbb K}\bigg|\gamma(t)u_k(t)s_k(t) - \gamma(t)u_k(t)\hat s_k(t)\bigg| \\
\leq & \sum_{k\in\mathbb K} \underset{t\in[0,T]}{\text{max}}\{\beta(t)\} K_{max} \bigg|\sum_{l\in \mathbb K} (q_li_l(t))\bigg|.\bigg| s_k(t)-\hat s_k(t) \bigg|\\
& + \sum_{k\in\mathbb K} \underset{t\in[0,T]}{\text{max}}\{\beta(t)\} K_{max} |\hat s_k(t)|. \sum_{l\in \mathbb K} \Big(q_l\big|i_l(t)-\hat i_l(t)\big|\Big)\\
& + \sum_{k\in\mathbb K} \underset{t\in[0,T]}{\text{max}}\{\gamma(t)\} |\boldsymbol u(t)||s_k(t)-\hat s_k(t)| \\
\leq & 2\underset{t\in[0,T]}{\text{max}}\{\beta(t)\} K_{max}^2 |\boldsymbol i(t)-\hat{\boldsymbol i}(t)| \\
& + \underset{t\in[0,T]}{\text{max}}\{\gamma(t)\} |\boldsymbol u(t)|.|\boldsymbol i(t)-\hat{\boldsymbol i}(t)| \\
\leq & \underset{t\in[0,T]}{\text{max}}\{\gamma(t), ~2\beta(t) K_{max}^2\}\times|\boldsymbol i(t)-\hat{\boldsymbol i}(t)|\times(1+|\boldsymbol u(t)|),
\end{align*}
\normalsize
which is as required. We have made use of estimations such as $k\leq K_{max}$, $\big|\sum_{l\in \mathbb K} (q_li_l(t))\big|\leq K_{max}$ (note that $q_l$'s and $i_l$'s are probabilities and hence $\leq 1$), $\sum_{k\in\mathbb K}|\hat s_k(t)|\leq K_{max}$, $|u_k(t)|\leq |\boldsymbol u(t)|$ and $|a+b|\leq |a|+|b|$ for $a,b\in\mathbb R$.

\item The admissible set of controls $U^{|\mathbb K|}$ is non-empty by construction (Definition \ref{def:set_of_admissible_controls}).

\item The control at time $t$, $\boldsymbol u(t)$ takes values in a closed space $\mathbb R^{|\mathbb K|}$. The whole space contains all its limit points and hence it is closed\footnote{W. Rudin, \emph{Principles of Mathematical Analysis.} McGraw-Hill New York, 1964}.

\item The reward due to the terminal state in the reward function $J$ in (\ref{eq:cost_funtion}), $\sum_{k\in\mathbb K}p_ki_k(T)$, takes values in a compact space [0,1] and $\sum_{k\in\mathbb K}p_ki_k(T)$ is continuous in $i_k(T)$.

\item $\mathbb R^{|\mathbb K|}$ is a convex space, $\boldsymbol f(\boldsymbol u(t), \boldsymbol i(t))$ is linear in $\boldsymbol u(t)$ and $\sum_{k\in \mathbb K}g_k(u_k(t))$ is convex in $\boldsymbol u(t)$ (Assumption \ref{assumption:gk_increasing}).

\item The final requirement of the theorem is that, $\exists$ a continuous function $\sigma(\boldsymbol u(t))$, such that, $\sum_{k\in \mathbb K}g_k(u_k(t))\geq \sigma(\boldsymbol u(t))$ and $\frac{\sigma(\boldsymbol u(t))}{|\boldsymbol u(t)|}\rightarrow\infty$ as $|\boldsymbol u(t)|\rightarrow\infty$. Choose $\sigma(\boldsymbol u(t)) = \sum_{k\in \mathbb K}g_k(u_k(t))$. Now, $|\boldsymbol u(t)|\rightarrow\infty$ means either the largest component $u_p(t)\rightarrow\infty$ or, the smallest component $u_q(t)\rightarrow-\infty$. In the former case, $\underset{|\boldsymbol u(t)|\rightarrow\infty}{\lim}\frac{\sum_{k\in \mathbb K}g_k(u_k(t))}{|\boldsymbol u(t)|}>\underset{u_p(t)\rightarrow\infty}{\lim}\frac{g_p(u_p(t))}{|\mathbb K|.u_p(t)}=\underset{u_p(t)\rightarrow\infty}{\lim}\frac{g_p'(u_p(t))}{|\mathbb K|}\rightarrow\infty$. We use L'Hospital's rule. Since $g_p(.)$ is strictly convex and $g_p(0)=0$, so $g_p''(.)>0\Rightarrow g_p'(.)$ is strictly increasing for positive arguments. In the latter case, $\underset{|\boldsymbol u(t)|\rightarrow\infty}{\lim}\frac{\sum_{k\in \mathbb K}g_k(u_k(t))}{|\boldsymbol u(t)|}>\underset{u_q(t)\rightarrow-\infty}{\lim}\frac{g_q(u_q(t))}{-|\mathbb K|.u_q(t)}=\underset{u_q(t)\rightarrow-\infty}{\lim}\frac{g_q'(u_q(t))}{-|\mathbb K|}\rightarrow\infty$. Note that by Assumption \ref{assumption:gk_even}, $g_q(.)$ is an even function, so strictly decreasing for negative arguments.

\end{enumerate}

\section{Proof of Theorem \ref{thm:conv_fw_bk_sweep}:Convergence of Forward-Backward Sweep Algorithm}
\label{app:proof_conv_fw_bk_sweep}

We use the techniques in \cite{mcasey2012convergence} for the analysis in this section. In this section we will denote the iteration number by $(n)$. Then forward-backward sweep uses the following iteration:
\begin{eqnarray}
\textnormal{Initialize: } & & u_k^{(0)}. \nonumber \\
\textnormal{Iterate: } & & \nonumber \\
\frac{d}{dt} i_k^{(n+1)}(t) & = & \beta(t) ks_k^{(n+1)}(t) \sum_{l\in \mathbb K} \left(q_{l} i_l^{(n+1)}(t)\right) \nonumber \\
& & + \gamma(t)u_k^{(n)}(t)s_k^{(n+1)}(t);\nonumber \\
i_k^{(n+1)}(0) & = & i_{0k}. \nonumber
\end{eqnarray}
\begin{eqnarray}
\frac{d}{dt} \lambda_k^{(n+1)}(t) & = & \beta(t) k \lambda_k^{(n+1)}(t)\sum_{l\in\mathbb K}(q_li_l^{(n+1)}(t)) \nonumber  \\
& & -\beta(t) q_k \sum_{j\in \mathbb K}(\lambda_j^{(n+1)}(t)js_j^{(n+1)}(t)) \nonumber \\
& & + \gamma(t)u_k^{(n)}(t)\lambda_k^{(n+1)}(t);\nonumber  \\
\lambda^{(n+1)}(T)& = & p_k. \nonumber \\
u_k^{(n+1)} & = & \frac{\gamma(t)}{2c_k}s_k^{(n+1)}(t))\lambda_k^{(n+1)}(t). \label{eq:u_for_quard_cost}
\end{eqnarray}
The state/adjoint variables in the $n$th iteration satisfy the above equations. The solutions, $\boldsymbol i(t)=\{i_k(t),~k\in\mathbb K\}$ and $\boldsymbol \lambda(t)=\{\lambda_k(t),~k\in\mathbb K\}$ satisfy Eqs. (\ref{eq:opt_prob_states}) and (\ref{eq:costate_diff_eq}) respectively.

Let the errors in $n$th iteration be denoted by $e_{i_k}^{(n)}(t) = i_k(t)-i_k^{(n)}(t)$, $e_{\lambda_k}^{(n)}(t) = \lambda_k(t)-\lambda_k^{(n)}(t)$ and $e_{u_k}^{(n)}(t) = u_k(t)-u_k^{(n)}(t)$ with vectors of errors represented by $\boldsymbol e_{i}^{(n)}(t)$, $\boldsymbol e_{\lambda}^{(n)}(t)$ and $\boldsymbol e_{u}^{(n)}(t)$ respectively. The error $e_{i_k}^{(n+1)}(t)$ evolves as:
\begin{eqnarray*}
& \frac{d}{dt}e_{i_k}^{(n+1)}(t) \hspace{-1em} & = \beta(t) k \big[ s_k(t)\Sigma_l(q_l i_l(t)) \\
& & - s_k^{(n+1)}(t)\Sigma_l(q_l i_l^{(n+1)}(t)) \big] + \gamma(t) \big[ u_k(t)s_k(t) \\
& & - u_k^{(n)}(t)s_k^{(n+1)}(t) \big]; \\
& e_{i_k}^{(n+1)}(0) & = 0.
\end{eqnarray*}
This leads to:
\begin{align}
& e_{i_k}^{(n+1)}(t) = \int_0^t \Big\{ \beta(\tau) k \big[ s_k(\tau)\Sigma_l(q_l i_l(\tau)) \nonumber\\
& - s_k^{(n+1)}(\tau)\Sigma_l(q_l i_l^{(n+1)}(\tau)) \big] + \gamma(\tau) \big[ u_k(\tau)s_k(\tau) \nonumber\\
& - u_k^{(n)}(\tau)s_k^{(n+1)}(\tau) \big] \Big\} d\tau. \nonumber\\
\Rightarrow & |e_{i_k}^{(n+1)}(t)| \leq \int_0^t \Big\{ \big|\beta(\tau) k \big[ s_k(\tau)\Sigma_l(q_l i_l(\tau)) \nonumber\\
& - s_k^{(n+1)}(\tau)\Sigma_l(q_l i_l^{(n+1)}(\tau)) \big]\big| + \big|\gamma(\tau) \big[ u_k(\tau)s_k(\tau) \nonumber\\
& - u_k^{(n)}(\tau)s_k^{(n+1)}(\tau) \big]\big| \Big\} d\tau. \label{eq:conv_mod_eik}
\end{align}
The first of the two modulus terms in the RHS of (\ref{eq:conv_mod_eik}) can be estimated further as:
\begin{align}
& \hspace{-.3cm} \int_0^t \big|\beta(\tau) k \big[ s_k(\tau)\underset{l}{\Sigma}(q_l i_l(\tau)) - s_k^{(n+1)}(\tau) \underset{l}{\Sigma}(q_l i_l^{(n+1)}(\tau)) \big]\big| d\tau \nonumber \\
= & \int_0^t \big|\beta(\tau) k \big[ s_k(\tau)\underset{l}{\Sigma}(q_l i_l(\tau)) - s_k(\tau)\underset{l}{\Sigma}(q_l i_l^{(n+1)}(\tau)) \nonumber \\
& + s_k(\tau)\underset{l}{\Sigma}(q_l i_l^{(n+1)}(\tau)) - s_k^{(n+1)}(\tau) \underset{l}{\Sigma}(q_l i_l^{(n+1)}(\tau)) \big]\big| d\tau \nonumber \\
\leq & \beta_M k \underbrace{|s_k(\tau)|}_{\leq 1} \int_0^t q_M \underbrace{\big| \underset{l}{\Sigma} (i_l(\tau) - i_l^{(n+1)}(\tau)) \big|}_{\leq |\boldsymbol i(\tau) - \boldsymbol i^{(n+1)}(\tau)| } d\tau \nonumber \\
& + \beta_M k \underbrace{|\Sigma_l(q_l i_l^{(n+1)}(\tau))|}_{\leq 1} \int_0^t |s_k(\tau) - s_k^{(n+1)}(\tau)|d\tau. \label{eq:conv_mod_eik_1st_term}
\end{align}
Similarly, we estimate the second term of the RHS of (\ref{eq:conv_mod_eik}). Using it and (\ref{eq:conv_mod_eik_1st_term}) in (\ref{eq:conv_mod_eik}) we obtain:
\begin{align*}
& |e_{i_k}^{(n+1)}(t)| \leq \beta_{M} k \int_0^t q_{M} |\boldsymbol i(\tau) - \boldsymbol i^{(n+1)}(\tau)| d\tau \\
& + \beta_{M} k \int_0^t |i_k(\tau) - i_k^{(n+1)}(\tau)|d\tau \\
& + \gamma_{M}u_{M} \int_0^t |i_k(\tau) - i_k^{(n+1)}(\tau)|d\tau \\
& + \gamma_{M} \int_0^t |u_k(\tau) - u_k^{(n+1)}(\tau)|d\tau.
\end{align*}
Aggregating over all $k$ we obtain:
\begin{align}
& |\boldsymbol e_{i}^{(n+1)}(t)| \leq \big( \beta_{M}(\Sigma k) q_{M} + \beta_{M}K_{max} + \gamma_{M}u_{M} \big)  \nonumber \\
& ~~~~~~~~~~\times \int_0^t |\boldsymbol e_{i}^{(n+1)}(\tau)| d\tau + \gamma_{M}  \int_0^t |\boldsymbol e_{u}^{(n)}(\tau)|d\tau. \label{eq:err_i_wrt_i_u}
\end{align}

A similar procedure for $|\boldsymbol e_{\lambda}^{(n+1)}(t)|$ leads to (note that this error needs to be integrated backwards from $T$ to $t$):
\small
\begin{align}
& |\boldsymbol e_{\lambda}^{(n+1)}(t)| \leq \big( \beta_{M}(\Sigma k) \Lambda q_{M} + \beta_{M}\Lambda K_{max} \big) \int_t^T |\boldsymbol e_{i}^{(n+1)}(\tau)| d\tau  \nonumber \\
& ~~~~~~~ + \big( 2\beta_{M}K_{max} + \gamma_{M}u_{M} \big) \int_t^T |\boldsymbol e_{\lambda}^{(n+1)}(\tau)| d\tau \nonumber \\
& ~~~~~~~ + \gamma_{M}\Lambda  \int_t^T |\boldsymbol e_{u}^{(n)}(\tau)|d\tau. \label{eq:err_lam_wrt_i_lam_u}
\end{align}
\normalsize
Also, (\ref{eq:u_for_quard_cost}) leads to:
\begin{align}
|\boldsymbol e_{u}^{(n+1)}(t)| \leq \frac{\gamma_{M}\Lambda}{2c_{m}} |\boldsymbol e_{i}^{(n+1)}(t)| + \frac{\gamma_{M}}{2c_{m}} |\boldsymbol e_{\lambda}^{(n+1)}(t)|. \label{eq:err_u_wrt_i_lam}
\end{align}

The analysis uses following Gronwall's inequalities \cite{mcasey2012convergence}: if $\zeta,\kappa$ are two continuous functions on $[0,T]$ and $\kappa$ is non-decreasing, then
\begin{align}
\zeta(t) \leq \kappa(t) + \nu \int_0^t \zeta(\tau)d\tau \Rightarrow \zeta(t) \leq e^{\nu t}\kappa(t), \textnormal{ and,} \nonumber \\
\zeta(t) \leq \kappa(t) + \nu \int_t^T \zeta(\tau)d\tau \Rightarrow \zeta(t) \leq e^{\nu (T-t)}\kappa(t). \label{eq:gronwall}
\end{align}

Letting $c_0 = \beta_{M}(\Sigma k) q_{M} + \beta_{M}K_{max} + \gamma_{M}u_{M}$ and using first inequality of (\ref{eq:gronwall}) in (\ref{eq:err_i_wrt_i_u}) we obtain:
\begin{align}
& |\boldsymbol e_{i}^{(n+1)}(t)| \leq \exp(c_0 t) \gamma_{M} \int_0^t |\boldsymbol e_{u}^{(n)}(\tau)| d\tau \nonumber \\
\Rightarrow & |\boldsymbol e_{i}^{(n+1)}(t)| \leq \exp(c_0 t) \gamma_{M} \int_0^T |\boldsymbol e_{u}^{(n)}(\tau)| d\tau. \label{eq:err_i_wrt_u}
\end{align}
The estimation is true because integrand is positive. Again using second inequality of (\ref{eq:gronwall}) in (\ref{eq:err_lam_wrt_i_lam_u}) we get,
\begin{align}
& |\boldsymbol e_{\lambda}^{(n+1)}(t)| \leq \exp\{ (2\beta_{M}K_{max}+u_{M}\gamma_{M})(T-t) \} \nonumber \\
& \times \Big\{ \Lambda c_0 \int_t^T |\boldsymbol e_{i}^{(n+1)}(\tau)|d\tau + \gamma_{M}\Lambda \int_0^T |\boldsymbol e_{u}^{(n)}(\tau)|d\tau \Big\}. \label{eq:err_lam_wrt_i_u}
\end{align}
Using integration by parts in (\ref{eq:err_i_wrt_u}) to obtain $\int_t^T |\boldsymbol e_{i}^{(n+1)}(\tau)|d\tau$, and after ignoring some of the negative terms, (\ref{eq:err_lam_wrt_i_u}) can be estimated as:
\begin{align}
& |\boldsymbol e_{\lambda}^{(n+1)}(t)| \leq \exp\{ (2\beta_{M}K_{max}+u_{M}\gamma_{M})(T-t) \}\gamma_{M}\Lambda \nonumber \\
& ~~~~~~\times \Big\{ \exp(c_0T) \int_0^T |\boldsymbol e_{u}^{(n)}(\tau)|d\tau - \exp(c_0t) \int_0^t |\boldsymbol e_{u}^{(n)}(\tau)|d\tau \nonumber \\
& ~~~~~~~~~~~+ \cancel{\int_t^T (1-\exp(\underbrace{c_0\tau}_{\ge 0}))|\boldsymbol e_{u}^{(n)}(\tau)|d\tau}  \Big\}. \label{eq:err_lam_wrt_u}
\end{align}
The last term is negative, hence can be ignored given the direction of inequality. Note that, inspite of making estimations, the RHS of (\ref{eq:err_lam_wrt_u}) is maximum at $t=0$, and $0$ at $t=T$, as it should be (because transversaility condition fixes value of adjoint variables at $t=T$ leading to zero error at that point).

Using (\ref{eq:err_i_wrt_u}) and (\ref{eq:err_lam_wrt_u}) in (\ref{eq:err_u_wrt_i_lam}) we obtain
\begin{align}
& |\boldsymbol e_{u}^{(n+1)}(t)| \leq \frac{\gamma^2_{M}\Lambda}{2c_{m}} \exp(c_0t)\int_0^t \Big\{ 1-\exp((2\beta_{M}K_{max} \nonumber \\
& ~~~~~~~~~~~~~~~~+u_{M}\gamma_{M})(T-t)) \Big\} |\boldsymbol e_{u}^{(n)}(\tau)|d\tau \nonumber \\
& + \frac{\gamma^2_{M}\Lambda}{2c_{m}}\exp((2\beta_{M}K_{max}+u_{M}\gamma_{M})(T-t))\exp(c_0t) \nonumber \\
& \int_0^T |\boldsymbol e_{u}^{(n)}(\tau)| d\tau. \label{eq:err_u_wrt_u}
\end{align}

Putting $t=T$ in the first term of the RHS of (\ref{eq:err_u_wrt_u})---which gives an upper bound---and integrating from $0$ to $T$ gives
\begin{align*}
& |\boldsymbol e_{u}^{(n+1)}(t)| \leq \frac{\gamma^2_{M}\Lambda}{2c_{m}} \times \exp\{(\beta_{M}K_{max}+\gamma_{M}u_{M})T\} \times \\
& \Big[ \frac{\exp\{ \beta_{M}(\Sigma k) q_{M} T \} - \exp\{ \beta_{M} K_{max} T \} }{\beta_{M}(\Sigma k) q_{M} - \beta_{M} K_{max}} \Big] \int_0^T |\boldsymbol e_{u}^{(n)}(\tau)| d\tau.
\end{align*}
Thus, the algorithm converges when the leading constant is $<1$ which proves Theorem \ref{thm:conv_fw_bk_sweep}. Note that this is only a sufficient condition due to estimations made to obtain it.

Smaller values of $\gamma_M$ and larger values of $c_m$ aids convergence. We note that the function $(\exp(ax)-\exp(bx))/(ax-bx)$ is a monotonically increasing function of $x$ for $a,b>0$ and $a\neq b$. Thus, smaller values of $\beta_M$ aids convergence. Similarly, smaller $T$ leads to faster convergence.

\section{}
\label{app:proofs}
\textbf{\emph{Proof of Lemma \ref{thm:adjoint_variables_positive}:}} The Hamiltonian maximizing condition of the Pontryagin's Principle states that for all $t\in[0,T]$, $u_k^*(t) = \text{argmax}_{u_k(t)} H(\boldsymbol i^*(t),\boldsymbol \lambda^*(t),\boldsymbol u(t))$. Thus $\forall k\in\mathbb K$,
\begin{align}
& H(\boldsymbol i^*(t),\boldsymbol \lambda^*(t),u_{K_{min}}(t),...,u_k^*(t),...,u_{K_{max}}(t)) \nonumber \\
\geq & H(\boldsymbol i^*(t),\boldsymbol \lambda^*(t),u_{K_{min}}(t),...,0,...,u_{K_{max}}(t)). \nonumber
\end{align}
We use (\ref{eq:hamiltonian}) in the above. After simple algebraic manipulation and using $g_k(0)=0$, we get:
\begin{align}
\lambda_k^*(t)\gamma(t)u_k^*(t)s_k^*(t) \geq g_k(u_k^*(t)). \nonumber
\end{align}
By Assumption \ref{assumption:gk_increasing}, $g_k(u_k^*(t))\geq 0$. Notice also, $u_k^*(t)\geq 0$ (a consequence of Assumption \ref{assumption:gk_even}) and $s_k^*(t)\geq 0$ (from Lemma \ref{thm:ik_sk_in_0_1}) and $\gamma(t)\geq 0$. Hence $\lambda_k^*(t)\geq 0$.
\\

\textbf{\emph{Proof of Theorem \ref{thm:controls_structure}(i):}} Differentiating Eq. (\ref{eq:hamiltonian_max_cond_1}) with respect to time variable $t$, for any $k\in\mathbb K$, we get,
\begin{align}
g''_k(u_k^*(t))\frac{d}{dt} u_k^*(t) = \gamma(t) \frac{d}{dt} \lambda_k^*(t)~s_k^*(t) + \gamma(t) \lambda_k^*(t)\frac{d}{dt} s_k^*(t) \nonumber \\
+ \frac{d}{dt} \gamma(t) ~\lambda_k^*(t) s_k^*(t). \nonumber
\end{align}
Substituting the values of $\frac{d}{dt} \lambda_k^*(t)$ from Eq. (\ref{eq:costate_diff_eq}) and $\frac{d}{dt} s_k^*(t)$ from Eq. (\ref{eq:opt_prob_states}) ($s_k(t)=1-i_k(t)~\Rightarrow~\frac{d}{dt} s_k^*(t)=-\frac{d}{dt} i_k^*(t)$), and simplifying, we get,
\begin{align}
g''_k(u_k^*(t))\frac{d}{dt} u_k^*(t) = - \gamma(t) \beta(t) q_k s_k^*(t) \sum_{j\in\mathbb K} (j \lambda_j^*(t) s_j^*(t)) \nonumber \\
+ \frac{d}{dt} \gamma(t) ~\lambda_k^*(t) s_k^*(t). \label{eq:gdoubleprime_udot}
\end{align}
Now, $g''_k(u_k^*(t))\geq 0$ because $g_k(.)$ is assumed to be a convex function. The spreading rate $\beta(t)\geq 0$ and effectiveness of control $\gamma(t)\geq 0$ at all times, $q_k$ being a probability density function is $\geq 0$ \cite[Sec. 17.10.2]{newman2009networks}, $s_k^*(t)\geq 0,~\forall k\in\mathbb K$ (Lemma \ref{thm:ik_sk_in_0_1}) and  $\lambda_k^*(t)\geq 0,~\forall k\in\mathbb K$ (Lemma \ref{thm:adjoint_variables_positive}). For the second term $\frac{d}{dt}\gamma(t)\leq 0$. Hence we get $\frac{d}{dt} u_k^*(t)\leq 0,~\forall k\in \mathbb K$, which proves the theorem.
\\

\textbf{\emph{Proof of Theorem \ref{thm:controls_structure}(ii):}} For $g_k(u_k(t))=c_ku_k^2(t)$, Eq. (\ref{eq:gdoubleprime_udot}) can be re-written as $2c_k \frac{d}{dt} u_k^*(t)=-\beta(t) q_k s_k^*(t) \sum_{j\in\mathbb K} (j g_j'(u_j^*(t)))+\frac{d}{dt} \gamma(t) \lambda_k^*(t) s_k^*(t)$ (using Eq. (\ref{eq:hamiltonian_max_cond_1})). This, on differentiating with respect to $t$, leads to,
\small
\begin{align}
2c_k\frac{d^2}{dt^2}{u}_k^*(t)=&-\underbrace{\beta(t) q_k}_{\geq 0} \underbrace{\frac{d}{dt} s_k^*(t)}_{\leq 0\text{ from }(\ref{eq:opt_prob_states})} \underbrace{\sum_{j\in\mathbb K} (j g_j'(u_j^*(t)))}_{\geq 0\text{ from Assumption \ref{assumption:gk_increasing}}} \nonumber \\
& -\underbrace{\beta(t) q_k s_k^*(t)}_{\geq 0} \sum_{j\in\mathbb K} \Big( j \underbrace{g_j''(u_j^*(t))}_{=2c_j>0} . \underbrace{\frac{d}{dt} u_j^*(t)}_{\leq 0\text{ Theorem \ref{thm:controls_structure}(i)}}\Big) \nonumber \\
& - \underbrace{\frac{d}{dt}{\beta}(t)}_{\leq 0}q_k \underbrace{s_k^*(t)}_{\geq 0\text{ Lemma \ref{thm:ik_sk_in_0_1}}} \underbrace{\sum_{j\in\mathbb K} (j g_j'(u_j^*(t)))}_{\geq 0\text{ from Assumption \ref{assumption:gk_increasing}}} \nonumber \\
& + \underbrace{\frac{d}{dt} \gamma(t)}_{\leq 0} ~ \underbrace{\Big( -\beta(t) q_k s_k^*(t) \sum_{j\in\mathbb K} (j \lambda_k^*(t)s_k^*(t)) \Big)}_{\leq 0} \nonumber \\
& + \underbrace{\frac{d^2}{dt^2}{\gamma}(t)}_{\geq 0} ~\underbrace{\lambda_k^*(t) s_k^*(t)}_{\geq 0}.\nonumber
\end{align}
\normalsize
This leads to the conclusion that $\overset{..}{u}_k^*(t) \geq 0$, $\forall k\in\mathbb K$ and $t\in[0,T]$, for $\frac{d}{dt}{\beta}(t),\frac{d}{dt} \gamma(t)\leq 0$ and $\frac{d^2}{dt^2}{\gamma}(t) \geq 0$.

\section{Proof of Theorem \ref{thm:uniqueness_of_state_adjoint_opt_control}:Uniqueness of Solutions}
\label{app:proof_uniqueness}

Condition (\ref{eq:cond_for_uniqueness}) of Theorem \ref{thm:uniqueness_of_state_adjoint_opt_control} is obtained by applying the theorem for uniqueness of the solution of first order two point boundary value problem in \cite{ma2002existence}:

\begin{lemma}[Theorem 1.2 in \cite{ma2002existence}]
\label{thm:uniq_2pt_bvp_soln}
Let $\det (M_{2n\times 2n}+R_{2n\times 2n})\neq 0$, $\alpha\in \mathbb R^{2n}$ and $\boldsymbol \psi :[0,T]\times \mathbb R^{2n}\rightarrow R^{2n}$ be a Carath\'eodary function. If $\exists w(t)\in L^1([0,T])$ such that $||\boldsymbol \psi(t,\boldsymbol \eta)-\boldsymbol\psi(t,\boldsymbol{\hat \eta})||\leq w(t)||\boldsymbol\eta-\boldsymbol{\hat \eta}||$ $\forall t\in [0,T], \boldsymbol\eta,\boldsymbol{\hat \eta}\in \mathbb R^{2n}$. Then the two point first order boundary value problem $\frac{d}{dt}\boldsymbol\eta=\boldsymbol \psi(t,\boldsymbol\eta);~M\boldsymbol\eta(0)+R\boldsymbol\eta(T)=\alpha$ has a unique solution provided $\Gamma||w||_{L_1}<1$. Here, $\Gamma=\max\{ ||(M+R)^{-1}M||, ||(M+R)^{-1}R|| \}$ and for a matrix, $||X||=\max_{i,j}|X_{i,j}|$.
\end{lemma}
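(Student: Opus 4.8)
The plan is to recast the two-point boundary value problem as a fixed-point equation for an integral operator and then invoke the Banach contraction principle, which yields existence and uniqueness simultaneously. First I would integrate the differential equation to obtain $\boldsymbol\eta(t)=\boldsymbol\eta(0)+\int_0^t\boldsymbol\psi(s,\boldsymbol\eta(s))\,ds$; the Carath\'eodory hypothesis guarantees that this integral is well defined and that any such fixed point is an absolutely continuous solution of the original problem. Evaluating at $t=T$ and substituting into the boundary condition $M\boldsymbol\eta(0)+R\boldsymbol\eta(T)=\alpha$ gives $(M+R)\boldsymbol\eta(0)+R\int_0^T\boldsymbol\psi(s,\boldsymbol\eta(s))\,ds=\alpha$. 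Since $\det(M+R)\neq 0$, I can solve for $\boldsymbol\eta(0)$ and substitute back, obtaining the operator
\[
(\mathcal T\boldsymbol\eta)(t)=(M+R)^{-1}\alpha-(M+R)^{-1}R\!\int_0^T\!\boldsymbol\psi(s,\boldsymbol\eta(s))\,ds+\int_0^t\!\boldsymbol\psi(s,\boldsymbol\eta(s))\,ds
\]
on $C([0,T];\mathbb R^{2n})$, whose fixed points are exactly the solutions of the BVP.

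The contraction estimate is the heart of the argument. Writing $\Delta\boldsymbol\psi(s)=\boldsymbol\psi(s,\boldsymbol\eta(s))-\boldsymbol\psi(s,\hat{\boldsymbol\eta}(s))$, the difference $(\mathcal T\boldsymbol\eta-\mathcal T\hat{\boldsymbol\eta})(t)$ equals $\int_0^t\Delta\boldsymbol\psi\,ds-(M+R)^{-1}R\int_0^T\Delta\boldsymbol\psi\,ds$, because the constant term $(M+R)^{-1}\alpha$ cancels. The key algebraic manipulation is to split $\int_0^T=\int_0^t+\int_t^T$ and use the identity $I-(M+R)^{-1}R=(M+R)^{-1}M$, which rewrites the difference as $(M+R)^{-1}M\int_0^t\Delta\boldsymbol\psi\,ds-(M+R)^{-1}R\int_t^T\Delta\boldsymbol\psi\,ds$. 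Now the two matrices $(M+R)^{-1}M$ and $(M+R)^{-1}R$ appear separately, each acting on a complementary piece of $[0,T]$.

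Applying the Lipschitz bound $||\Delta\boldsymbol\psi(s)||\leq w(s)\,||\boldsymbol\eta(s)-\hat{\boldsymbol\eta}(s)||$, bounding both matrix factors by $\Gamma=\max\{||(M+R)^{-1}M||,\,||(M+R)^{-1}R||\}$, and recombining the complementary integrals via $\int_0^t w+\int_t^T w=\int_0^T w=||w||_{L_1}$, I obtain $||(\mathcal T\boldsymbol\eta-\mathcal T\hat{\boldsymbol\eta})(t)||\leq \Gamma\,||w||_{L_1}\,||\boldsymbol\eta-\hat{\boldsymbol\eta}||_\infty$ for every $t$, and hence $||\mathcal T\boldsymbol\eta-\mathcal T\hat{\boldsymbol\eta}||_\infty\leq \Gamma\,||w||_{L_1}\,||\boldsymbol\eta-\hat{\boldsymbol\eta}||_\infty$. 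Under the hypothesis $\Gamma\,||w||_{L_1}<1$, the operator $\mathcal T$ is a contraction on the complete space $C([0,T];\mathbb R^{2n})$, so the Banach fixed-point theorem delivers a unique fixed point, i.e. a unique solution of the two-point BVP.

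I expect the main obstacle to be largely technical: the matrix norm $||X||=\max_{i,j}|X_{i,j}|$ is neither submultiplicative nor the operator norm induced by the vector norm on $\mathbb R^{2n}$, so the step $||Xv||\leq||X||\,||v||$ is not immediate and must be handled by fixing compatible vector and matrix norms (or by absorbing a dimensional equivalence constant, which is harmless since all norms on $\mathbb R^{2n}$ are equivalent). The decomposition exploiting $I-(M+R)^{-1}R=(M+R)^{-1}M$ is precisely what makes the constant collapse exactly to $\Gamma\,||w||_{L_1}$ rather than a cruder sum of the two matrix norms, and carrying that bookkeeping through correctly is the one genuinely delicate point.
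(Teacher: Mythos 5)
The paper never proves this lemma: it is imported verbatim as Theorem 1.2 of \cite{ma2002existence} and used as a black box in Appendix \ref{app:proof_uniqueness}, so there is no internal proof to compare yours against. Your reconstruction is the standard argument for results of this type, and its skeleton is correct: integrating the ODE, using $\det(M+R)\neq 0$ to eliminate $\boldsymbol\eta(0)$ from the boundary condition, and arriving at the fixed-point operator $\mathcal T$ on $C([0,T];\mathbb R^{2n})$ is exactly right, and the identity $I-(M+R)^{-1}R=(M+R)^{-1}M$ combined with the splitting $\int_0^T=\int_0^t+\int_t^T$ is indeed the step that makes the contraction constant collapse to $\Gamma\,||w||_{L_1}$ rather than the cruder $\big(||(M+R)^{-1}M||+||(M+R)^{-1}R||\big)||w||_{L_1}$. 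The Carath\'eodory hypotheses (measurability in $t$, continuity in $\boldsymbol\eta$, local $L^1$ domination) are precisely what you need for $\mathcal T$ to map $C([0,T];\mathbb R^{2n})$ into itself and for fixed points to be absolutely continuous solutions, and Banach's theorem then delivers existence and uniqueness simultaneously, matching the lemma's claim.

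The one point where your proposal does not close is the norm issue you yourself flag, and your fallback there is not adequate: absorbing ``a dimensional equivalence constant'' is \emph{not} harmless for the statement as written. With $||X||=\max_{i,j}|X_{i,j}|$ and a single vector norm, the step $||Xv||\leq ||X||\,||v||$ generically costs a factor of order $2n$, and carrying that factor through the contraction estimate proves uniqueness only under a condition like $2n\,\Gamma\,||w||_{L_1}<1$ --- a strictly stronger hypothesis, i.e.\ a strictly weaker lemma than the one stated. To recover the exact threshold $\Gamma\,||w||_{L_1}<1$ you must pair the max-entry matrix norm with vector norms for which the compatibility inequality holds with constant exactly one (for instance $||Xv||_\infty\leq ||X||\,||v||_1$, or structural assumptions on $M,R$ ensuring at most one nonzero entry per row and column), and run the contraction in that setting, as the cited source does under its own conventions. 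For the application actually made in this paper the caveat is moot: in Theorem \ref{thm:uniqueness_of_state_adjoint_opt_control}, $M$ and $R$ are complementary $0$--$1$ diagonal projections with $M+R=I$, so $(M+R)^{-1}M=M$, $(M+R)^{-1}R=R$, $\Gamma=1$, and $||Mv||\leq ||v||$ holds exactly in every $p$-norm; but as a proof of the general lemma as stated, the norm-equivalence shortcut leaves a genuine quantitative gap.
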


Note that in our case, $n=|\mathbb K|$, $\boldsymbol \eta = (\boldsymbol i, \boldsymbol \lambda)$, $\boldsymbol\psi = (\boldsymbol f, \boldsymbol h)$, where $f_k(t,\boldsymbol \eta)$ is RHS of (\ref{eq:opt_prob_states}) and $h_k(t,\boldsymbol \eta)$ is RHS of (\ref{eq:costate_diff_eq}) for $k\in \mathbb K$. $M$ and $R$ are both $2n\times 2n$ matrices such that $M_{jj}=1=R_{j+n,j+n},1\leq j\leq n$ and all other elements are $0$. Thus, $\Gamma=1$.

We note that $\boldsymbol \psi$ is a Carath\'eodary function because it satisfies the following requirements:
\begin{enumerate}
\item $\boldsymbol \psi(.,\boldsymbol \eta)$ is Lebesgue measurable on $[0,T]~\forall \boldsymbol \eta \in \mathbb R^{2n}$.
\item $\boldsymbol \psi(t,.)$ is continuous on $\mathbb R^{2n}~\forall t\in[0,T]$.
\item For all $r\in(0,\infty), ~t\in[0,T], ~||\boldsymbol \eta|| \leq r, \exists~\delta_r:[0,T]\times \mathbb R^{2n}\rightarrow \mathbb R^{2n}$ such that $|\psi_k(t,\boldsymbol \eta)|\leq (\delta_r)_k(t), ~1\leq k\leq 2n$. This is satisfied for $(\delta_r)_k(t)=\beta(t) k q_{M} r + \frac{\gamma^2(t)}{2c_k}r$ for $1\leq k\leq n$, and $(\delta_r)_k(t) = \beta(t) K_{max} q_{M} r^2 + \frac{\gamma^2(t)}{2c_k}r^2$  for $n+1\leq k\leq 2n$.
\end{enumerate}

Noting, $g_k(u_k(t))=c_ku^2_k(t) \Rightarrow u_k(t)=\frac{\gamma(t)\lambda_k(t)s_k(t)}{2c_k}$ (from (\ref{eq:hamiltonian_max_cond_1})) and letting $\boldsymbol {\hat\eta} = (\boldsymbol {\hat i}, \boldsymbol {\hat \lambda})$; after some algebraic manipulations, for $1\leq k\leq n$, we obtain:
\begin{align*}
|f_k(t,\boldsymbol \eta)-f_k(t,\boldsymbol{\hat \eta})| \leq \beta(t) k q_{M} ||\boldsymbol i - \boldsymbol{\hat i}|| + \beta(t) k |s_k-\hat s_k| \\
+ \frac{\gamma^2(t)}{2c_k} \Lambda |s_k-\hat s_k|\times 2 + \frac{\gamma^2(t)}{2c_k} |\lambda_k-\hat\lambda_k|, \\
|h_k(t,\boldsymbol \eta)-h_k(t,\boldsymbol{\hat \eta})| \leq \beta(t) k \Lambda q_{M} ||\boldsymbol i - \boldsymbol{\hat i}|| + \beta(t) k |\lambda_k-\hat\lambda_k| \\
+ \beta(t) q_k K_{max} (\Lambda ||\boldsymbol i - \boldsymbol{\hat i}|| + ||\boldsymbol \lambda - \boldsymbol{\hat \lambda}||) \\
+ \frac{\gamma^2(t)}{2c_k} (\Lambda^2|s_k-\hat s_k| + 2\Lambda|\lambda_k-\hat\lambda_k|).
\end{align*}
Aggregating over all $k$ and noting that $\boldsymbol\psi = (\boldsymbol f, \boldsymbol h)$,
\begin{align*}
& ||\boldsymbol \psi(t,\boldsymbol \eta) - \boldsymbol \psi(t, \boldsymbol{\hat \eta})|| = ||\boldsymbol f(t,\boldsymbol \eta) - \boldsymbol f(t,\boldsymbol{\hat \eta})|| \\
& ~~~~~~~~~~~~~~~~~~~~~~~~~~~ + ||\boldsymbol h(t,\boldsymbol \eta) - \boldsymbol h(t,\boldsymbol{\hat \eta})|| \\
\leq & ||\boldsymbol i - \boldsymbol{\hat i}|| \Big( \beta(t)(\Sigma k) q_{M} + \beta(t)(\Sigma k) + \frac{\gamma^2(t)}{c_{m}}\Lambda \Big)  + \\
& ||\boldsymbol \lambda - \boldsymbol{\hat \lambda}|| \frac{\gamma^2(t)}{2c_{m}} + ||\boldsymbol i - \boldsymbol{\hat i}|| \Big( \beta(t)(\Sigma k) q_{M} \Lambda + \beta(t)K_{max}\Lambda \\
& + \frac{\gamma^2(t)}{2c_{m}}\Lambda^2 \Big) + ||\boldsymbol \lambda - \boldsymbol{\hat \lambda}|| \Big( 2\beta(t)K_{max} + \frac{\gamma^2(t)}{c_{m}}\Lambda \Big) \\
= & [d_1 \beta(t) + d_2 \gamma^2(t)]~ ||\boldsymbol\eta-\boldsymbol{\hat \eta}||,
\end{align*}
where,  $d_1=\max\{ (\sum_{k\in\mathbb K}k)\Lambda q_{M} + K_{max}\Lambda, 2K_{max} \},~d_2=(\Lambda/c_{m})\max\{ 1,\Lambda/2 \}$.

As stated above, for present case $\Gamma=1$; thus, the solutions to the state and adjoint equations (and hence the controls) are unique when $d_1||\beta(t)||_{L_1}+d_2||\gamma^2(t)||_{L_1}<1$ (using Lemma \ref{thm:uniq_2pt_bvp_soln}).

\end{document}